\newtheorem{theorem}{Theorem}[section]
\newtheorem{lemma}[theorem]{Lemma}
\newtheorem{proposition}[theorem]{Proposition}
\newtheorem{definition}[theorem]{Definition}
\newtheorem{conjecture}[theorem]{Conjecture}
\providecommand{\abs}[1]{\lvert#1\rvert}
\def\rank{\text{rank}}
\def\corank{\text{corank}}
\def\Ad{A_{\vec{d}}}
\def\Adr{A_{\vec{d}}^{\text{r}}}
\def\Sd{S_{\vec{d}}}
\def\SdOne{S_{\vec{d},1}}
\def\Sdr{S_{\vec{d}}^{\text{r}}}
\def\SdrOne{S_{\vec{d},1}^{\text{r}}}
\def\SdrTwo{S_{\vec{d},2}^{\text{r}}}
\def\SdrOneTwo{S_{\vec{d},i}^{\text{r}}}
\def\Xd{X_{\vec{d}}}
\def\Xdp{X_{\vec{d}'}}
\def\Xdr{X^{\text{r}}_{\vec{d}}}
\def\Xab{X[\vec{a}\vert\vec{b}]}
\newcommand{\Xnc}[1]{X^{\text{n.c.}}_{{#1}}}
\def\Xdnc{\Xnc{\vec{d}}}
\newcommand{\gvHLc}[1]{\color{blue}\textbf{#1}}
\newcommand{\gvHLi}[1]{\color{red}\textbf{#1}}
\title{New non-commutative resolutions of determinantal Calabi-Yau threefolds from hybrid GLSM}
\abstract{
We study topological strings on non-commutative resolutions of singular Calabi-Yau threefolds that are double covers of $\mathbb{P}^3$, ramified over determinantal octic surfaces.
 Using conifold transitions to complete intersections in toric ambient spaces, we prove that any small resolution has 2-torsional exceptional curves and is necessarily non-K\"ahler.
        The same transitions imply that M-theory develops a $\mathbb{Z}_2$ gauge symmetry on the singular space.
	We then construct gauged linear sigma models with hybrid phases that flow to the worldsheet theories of strings propagating on the determinantal double solids in the presence of a flat but topologically non-trivial B-field.
	Localizing the sphere partition function allows us to calculate the fundamental periods of the mirror Calabi-Yau manifolds, then we check agreement with the periods of the Borisov-Li mirrors.
	We find that the corresponding variations of Hodge structure either correspond to one of the 14 hypergeometric cases or to a double cover thereof.
	We then use mirror symmetry and integrate the holomorphic anomaly equations to calculate $\mathbb{Z}_2$-refined Gopakumar-Vafa invariants for several examples.
}
\author[1]{\fnm{Sheldon} \sur{Katz}}\email{katzs@illinois.edu}
\affil[1]{\orgdiv{Department of Mathematics}, \orgname{University of Illinois Urbana-Champaign}, \orgaddress{\city{Urbana}, \state{IL}, \postcode{61801}, \country{USA}}}
\author[2]{\fnm{Thorsten} \sur{Schimannek}}\email{schimannek@lpthe.jussieu.fr}
\affil[2]{\orgdiv{Laboratoire de Physique Th\'eorique et Hautes Energies (LPTHE), UMR 7589}, \orgname{CNRS-Sorbonne Universit\'e}, \orgaddress{\street{Campus Pierre et Marie Curie, 4 place Jussieu}, \postcode{F-75005}, \city{Paris}, \country{France}}}
\keywords{keyword1, Keyword2, Keyword3, Keyword4}
\numberwithin{equation}{section}
\begin{document}

\maketitle
\flushbottom

\tableofcontents

\section{Introduction}
Recent work~\cite{Schimannek:2021pau,Katz:2022lyl} has initiated the study of topological strings on non-commutative resolutions of singular compact Calabi-Yau threefolds, building on the pioneering work in~\cite{Aspinwall:1995rb,Caldararu:2010ljp,Addington:2012zv}.
Based on physical arguments from F- and M-theory and explicit calculations using mirror symmetry, it was proposed that the corresponding topological string free energies encode a torsion refinement of the usual Gopakumar-Vafa invariants~\cite{Schimannek:2021pau}.
Physically, the refined invariants resolve the charge under a discrete gauge symmetry that M-theory develops on the singular Calabi-Yau.
A geometric interpretation of these invariants was conjectured in~\cite{Katz:2022lyl} and implies that the refinement accounts for the presence of torsion in the homology of curves on non-K\"ahler small resolutions.

In this paper we will construct a rich class of new examples.
Studying their properties will then allows us to build on the previous results in multiple ways, both physically and mathematically.

\paragraph{Singular Calabi-Yau without K\"ahler small resolution}
The prototypical example for the non-commutative resolutions that we are interested in arises -- at least conjecturally -- whenever a projective Calabi-Yau threefold $X$ has isolated nodal singularities such that the exceptional curves in any small resolution are torsion in homology~\cite{Katz:2022lyl}.
As a consequence, the Calabi-Yau does not admit a small resolution that is K\"ahler.
However, any non-K\"ahler small resolution $\widehat{X}$ of $X$ then has a non-trivial Brauer group\footnote{We assume throughout this paper the often-used conjecture that the Brauer group of 
a compact complex manifold coincides with its cohomological Brauer group.}
\begin{align}
    \text{Br}(\widehat{X})\simeq \text{Tors}\,H_2(\widehat{X},\mathbb{Z})\simeq H^2(\widehat{X},U(1))\,,
\end{align}
and therefore admits topologically non-trivial flat B-fields.
After shrinking the exceptional curves back to zero size, the B-field ``stabilizes'' the singularities in the sense that complex structure deformations which remove the nodes are obstructed and the string worldsheet theory becomes regular, a phenomenon that has been originally described in~\cite{Vafa:1994rv,Aspinwall:1995rb}.

\paragraph{Flat B-fields and non-commutative resolutions}
From the open string perspective, the B-field leads to a non-commutative deformation of the coordinate ring, or, more generally, the structure sheaf of the space, see e.g.~\cite{Schomerus:1999ug,Seiberg:1999vs}.
Physics then suggests the existence of a sheaf of non-commutative algebras ${\mathcal{B}}$ on $X$ such that the topological B-branes branes are elements of the derived category $D^b(X,{\mathcal{B}})$ of sheaves of ${\mathcal{B}}$ modules~\cite{Kapustin:1999di,Berenstein:2000ux,Berenstein:2001jr}.

While such a sheaf is in general difficult to construct, the results from~\cite{Schimannek:2021pau,Katz:2022lyl} suggest that there always exists a derived equivalence $D^b(\widehat{X},\alpha)\simeq D^b(X,{\mathcal{B}})$, where $\alpha\in\text{Br}(\widehat{X})$ is the cohomology class of the B-field and $D^b(\widehat{X},\alpha)$ is the derived category of $\alpha$-twisted sheaves on $\widehat{X}$.  Furthermore, this derived equivalence should be induced from an Azumaya algebra $\widehat{\mathcal{B}}$ on $\widehat{X}$ representing the Brauer class $\alpha$, together with an isomorphism $\pi_*(\widehat{\mathcal{B}})\simeq {\mathcal{B}}$, with $\pi:\widehat{X}\rightarrow X$.

Motivated by this, we refer to a choice of small resolution $\widehat{X}$ together with a topologically non-trivial flat B-field as a non-commutative resolution $X^{\text{n.c.}}$ of $X$, even though the existence of a corresponding sheaf ${\mathcal{B}}$ on $X$ is in general conjectural~\footnote{For a discussion of the relationship to the different notions of non-commutative crepant resolution~\cite{vdb02,bergh2022noncommutative} and crepant categorical resolution~\cite{Kuznetsov2008ls} in this context we refer to~\cite{Katz:2022lyl}.}.
Locally, evidence was found in~\cite{Katz:2022lyl} that that the geometry of $X^{\text{n.c.}}$ around the nodes should admit a description in terms of Szendrői's non-commutative conifold~\cite{Szendroi:2007nu}.

\paragraph{Clifford non-commutative resolutions}
In this paper we will focus on so-called Clifford type non-commutative resolutions.
These arise when the singular Calabi-Yau $X$ is a double cover of a Fano threefold $B$ that is ramified over a symmetric determinantal surface.
One can then construct a sheaf of even parts of Clifford algebras $\mathcal{B}_0$ on $B$ and at least in certain cases it has been shown that $D^b(B,\mathcal{B}_0)\simeq D^b(\widehat{X},\alpha)$~\cite{Kuznetsov2008,addington2009derived,Kuznetsov2013}.

In these cases, the sheaf $\mathcal{B}_0$ is automatically the pushforward to $B$ of a sheaf $\mathcal{B}$ of non-commutative algebras on $X$.  In general, $\mathcal{B}$ is only a sheaf of Azumaya algebras on the smooth locus of $X$, and does not extend to an Azumaya algebra on $X$~\footnote{We follow the convention from the mathematical literature where the term ``Azumaya algebra'' is used synonymously with ``sheaf of Azumaya algebras''.}.
However, the smooth locus can instead be compactified to a small resolution $\widehat{X}$ of $X$ and we can try to solve this ``extension problem" on $\widehat{X}$ instead.
Among the examples that we study in this paper we will find some where this extension problem has been solved as a special case of results from~\cite{Kuznetsov2013}.
We will explicitly describe how the geometry of $\widehat{X}$ is reflected in those cases in the representation theory of the sheaves ${\mathcal{B}_0}$ and $\mathcal{B}$, and make connections between torsion refined invariants and moduli spaces of ${\mathcal{B}_0}$-modules.

\paragraph{Determinantal octic double solids}
\begin{table}[t!]
\centering
\begin{align*}
\renewcommand{\arraystretch}{1.2}
\begin{array}{|c|c|c|c|c|}\hline
\Xdnc&n_{\vec{d}}&\text{Mirror Picard-Fuchs operator } \mathcal{D}_{\vec{d}}&\Delta(z)&\text{Smooth cousin } \widetilde{X}_{\vec{d}}\\\hline
\Xnc{(5,1^3)}&64&\theta ^4-2^85 z^2(5 \theta +1) (5 \theta +3) (5 \theta +7) (5 \theta +9)&1-2^85^5z^2&X_{10}\subset\mathbb{P}^4(1^3,2,5)\\
\Xnc{(4,2^2)}&72&\theta ^4-2^4 z (4 \theta +1)(2 \theta +1)^2 (4 \theta +3)&1-2^{10}z&X_{4,2}\subset\mathbb{P}^5\\
\Xnc{(3^2,1^2)}&76&\theta ^4-2^83^2z^2 (3 \theta +1)^2 (3 \theta +5)^2&1-2^83^6z^2&X_{6,6}\subset\mathbb{P}^5(1^2,2^2,3^2)\\
\Xnc{(3,1^5)}&80&\theta ^4-2^83 z^2 (\theta +1)^2 (3 \theta +1) (3 \theta +5)&1-2^83^3z^2&X_{6,2}\subset\mathbb{P}^5(1^5,3)\\
\Xnc{(2^4)}&80&\theta ^4-2^4z (2 \theta +1)^4&1-2^8z&X_{2,2,2,2}\subset\mathbb{P}^7\\
\Xnc{(1^8)}&84&\theta ^4-2^8z^2 (\theta +1)^4&1-2^8z^2&X_{2,2,2,2}\subset\mathbb{P}^7\\
\hline
\end{array}
\end{align*}
\caption{There are six inequivalent decompositions $\vec{d}\in \mathbb{N}^k$ of degree $\vert\vec{d}\vert=8$ and length $l\ge 3$.
We list the number of nodes $n_{\vec{d}}$ of the determinantal octic double solid $\Xd$, the Picard-Fuchs operator that annihilates the periods of the mirror of the non-commutative resolution $\Xdnc$ and the corresponding discriminant polynomial $\Delta(z)$. We also indicate the smooth complete intersection Calabi-Yau 3-folds that share the same mirror variations of Hodge structure over $\mathbb{Q}$ or one that is related by a 2:1 covering.}
\label{tab:summary}
\end{table}
Generalizing earlier work in~\cite{Katz:2022lyl}, we will show that a large class of examples can be obtained from Calabi-Yau double covers of Fano threefolds $B$ that are ramified over symmetric determinantal surfaces $S$.

Many examples already arise from the case $B=\mathbb{P}^3$, with $S$ a determinantal octic surface.
Such a Calabi-Yau can be referred to as a \textit{determinantal octic double solid}.
Concretely, given a vector $\vec{d}=\mathbb{N}^k$ with $\vert \vec{d}\vert =d_1+\ldots +d_k=8$ we study double solids $\Xd$ that are ramified over the surface
\begin{align}
	\Sd=\{\,\text{det}\,A_{k\times k}=0\,\}\subset\mathbb{P}^3\,,
\end{align}
where $A_{k\times k}$ is a symmetric $k\times k$ matrix with entries $A_{i,j}$ that are generic homogeneous polynomials of degree $(d_i+d_j)/2$ in the homogeneous coordinates on $\mathbb{P}^3$.
We refer to $\vec{d}$ as a \textit{decomposition of degree $\vert \vec{d}\vert$} and to the number of non-zero entries in $\vec{d}$ as the \textit{length} of the decomposition.
As will be explained further in Section~\ref{sec:sods}, we will always assume that $d_i=d_j\text{ mod }2$ to ensure that $\Xd$ is irreducible.

The length one decomposition $\vec{d}=(8)$ corresponds to the smooth octic Calabi-Yau $X_{(8)}\subset\mathbb{P}^4(1^4,4)$.
However, for decompositions $\vec{d}$ of length $l\ge 2$ we obtain special sub-loci of the complex structure moduli space of $X_{(8)}$ where the corresponding Calabi-Yau $\Xd$ exhibits $n_{\vec{d}}$ isolated nodal singularities.

For any choice of $\vec{d}$ we construct conifold transitions between the determinantal octic double solid $\Xd$ and a smooth K\"ahler Calabi-Yau threefold $\widehat{\Xdr}$ that is a complete intersection in a projective bundle over $\mathbb{P}^3$. 
This will allow us to prove that $\Xd$ admits a K\"ahler small resolution if and only if the length $l$ of $\vec{d}$ is $l\le 2$.
On the other hand, if $l\ge 3$ we will prove that any small resolution $\widehat{\Xd}$ of $\Xd$ has $H_2(\widehat{\Xd},\mathbb{Z})\simeq \mathbb{Z}\oplus\mathbb{Z}_2$ and is non-K\"ahler because the exceptional curves represent the torsion class.
These results are contained in Proposition~\ref{prop:11classes} which will be proven in Section~\ref{sec:geometry}.

In the rest of the paper we then study the Clifford non-commutative resolutions $\Xdnc$ of the determinantal octic double solids $\Xd$ associated to the six inequivalent decompositions of length $l\ge 3$ that are listed in Table~\ref{tab:summary}.

Although we focus for concreteness on the case $B=\mathbb{P}^3$, let us stress again that our techniques can be equally applied to other Fano bases and to double covers that are not Calabi-Yau.

\paragraph{Smooth duals and their absence}
In the examples that have been studied in~\cite{Schimannek:2021pau,Katz:2022lyl}, the stringy K\"ahler moduli space of the non-commutative resolution contained at least one other large volume limit that is associated to an ordinary smooth Calabi-Yau threefold $Y$.
This can be seen as a consequence of a derived equivalence
\begin{align}
    D^b(X^{\text{n.c.}})\simeq D^b(Y)\,,
\end{align}
which either appeared as an instance of homological projective duality~\cite{Kuznetsov2008,Caldararu:2010ljp,Katz:2022lyl} or of the twisted derived equivalences that relate different torus fibrations that share the same Jacobian~\cite{Caldararu2002,Donagi2008,Schimannek:2021pau}.

If one can construct a mirror to $Y$, this is also mirror to $X^{\text{n.c.}}$, and so $Y$ and $X^{\text{n.c.}}$ are so-called double mirrors~\cite{Aspinwall:1993yb}.
The usual B-model techniques can then be applied in order to calculate topological string free energies associated to $X^{\text{n.c.}}$~\cite{Schimannek:2021pau,Katz:2022lyl}.
In particular, the mirror of $X^{\text{n.c.}}$ is itself an ordinary smooth Calabi-Yau threefold.
A combinatorial construction of Clifford non-commutative resolutions with a smooth double mirror $Y$, using reflexive Gorenstein cones, was developed in~\cite{borisov2016clifford,Li2020,Borisov2019}.

However, the existence of such a smooth dual $Y$ is not an intrinsic property of non-commutative resolutions.
As we will find in this paper, it actually appears to be the exception and from the examples in Table~\ref{tab:summary} only $X^{\text{n.c.}}_{(1^8)}$ exhibits a smooth dual given by the complete intersection of four quadrics $X_{2,2,2,2}\subset\mathbb{P}^7$.
In the absence of a smooth dual, we will employ two different methods that can be used to study the topological string A-model on $\Xdnc$.

Our first method will be to construct gauged linear sigma models (GLSM) that realize the Clifford non-commutative resolution in a hybrid phase, generalizing the results from~\cite{Caldararu:2010ljp,Sharpe:2013bwa}.
We calculate the sphere partition function via localization and use the relationship with the K\"ahler potential  on the moduli space~\cite{Jockers:2012dk} in order to obtain the periods of the mirror Calabi-Yau.

As our second method we will use the combinatorial construction of $\Xdnc$ from~\cite{borisov2016clifford} in order to obtain the mirror and to calculate the fundamental period directly.
We will find complete agreement with the result from localization.

\paragraph{Smooth cousins}
Somewhat surprisingly, we find that in each of our examples the mirror Picard-Fuchs operator corresponds to one of the 14 well-known hypergeometric cases~\footnote{The fact that different geometries can share the same period geometry is of course well known, with a prominent example being the mirrors of the quintic and of its $\mathbb{Z}_5$ quotient~\cite{Aspinwall:1994uj,Doran:2005gu}. Note that the so-called ``14-th case'', that corresponds to a singular Calabi-Yau threefold~\cite{Clingher_2016}, does not appear.}.
The latter are usually associated with A-model geometries that are hypersurfaces or complete intersections in weighted projective spaces~\cite{Hosono:1993qy,Doran:2005gu}.

Let us stress, that -- apart from the one exception $\vec{d}=(1^8)$ -- these are not the previously discussed smooth duals and they are in particular not derived equivalent with $\Xdnc$.
We will instead refer to these geometries as ``smooth cousins''.
While the $\mathbb{Q}$-variations of Hodge structures of the corresponding A-model Calabi-Yau manifolds are the same as that of the corresponding $\Xd^{\text{n.c.}}$, we will use the results from~\cite{Katz:2022lyl} to show that the integral structure always differs.

\paragraph{Higgs transitions in M-theory}
\begin{figure}[ht!]
\centering
	\includegraphics[width=\linewidth]{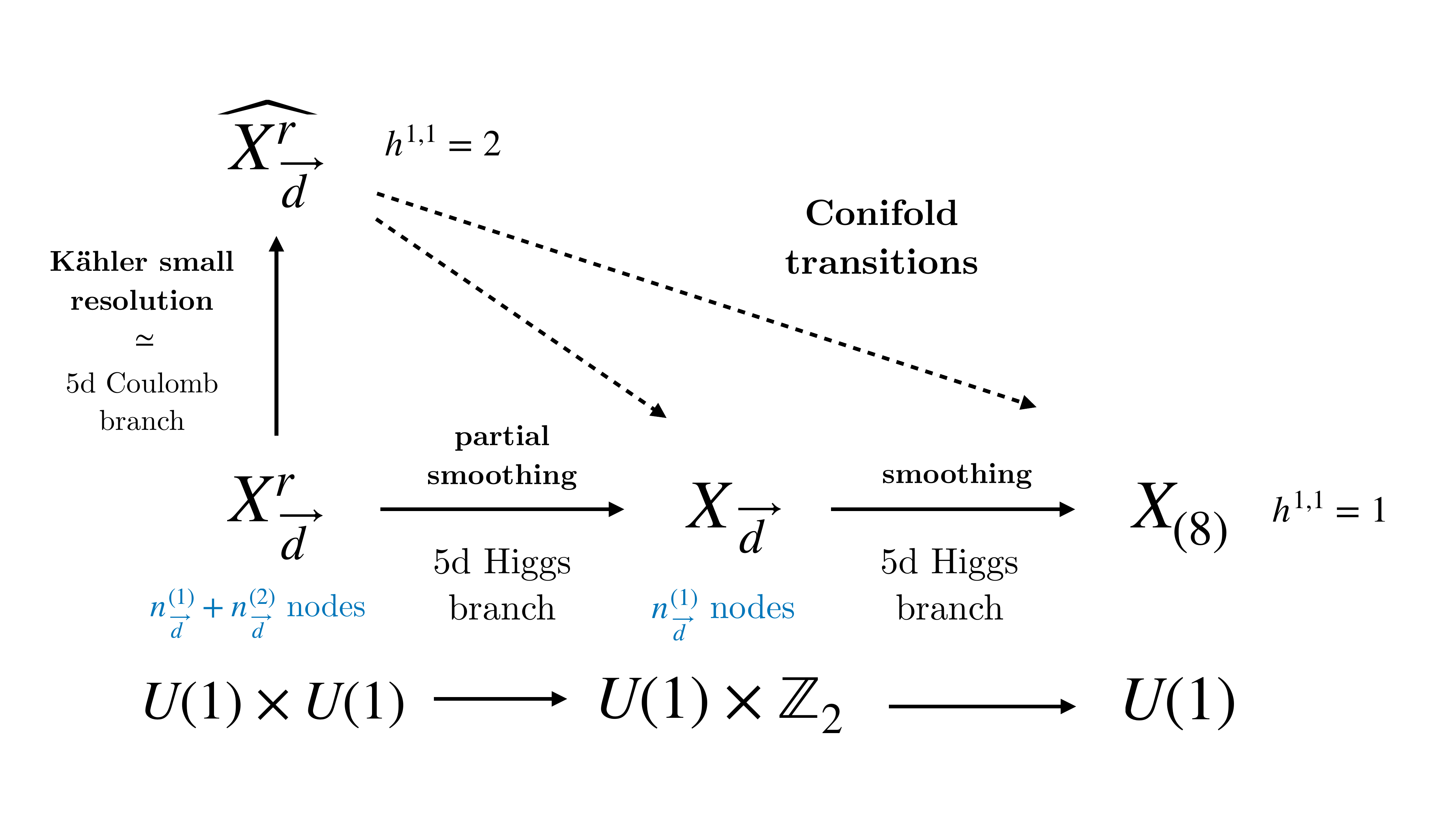}
\caption{The relationship between the different geometries for a given decomposition $\vec{d}\in\mathbb{N}^k$ of degree $8$ and length $l\ge 3$, as well as the physical interpretation of the transitions in the five-dimensional M-theory compactification.}
\label{fig:intro}
\end{figure}
From a physical perspective, the conifold transitions from $\widehat{\Xdr}$ to $\Xd$ can be interpreted as Higgs transitions, see Figure~\ref{fig:intro}.

The five-dimensional M-theory compactification on $\widehat{\Xdr}$ exhibits a $G=U(1)\times U(1)$ gauge symmetry and the spectrum contains -- among other fields -- hypermultiplets $\Phi^{(i)}$ with charges $(i,0)$ for $i=1,2$.
Crucially, the hypermultiplets $\Phi^{(2)}$ will turn out to only exist when the length of the decomposition is $l\ge 3$.

While the theory on $\widehat{\Xdr}$ is on a generic point of the Coulomb branch, a singular Calabi-Yau $\Xdr$ is obtained by setting the Coulomb branch parameter associated to $U(1)_1$ to zero.
The fields in $\Phi^{(i=1,2)}$ then become massless.
For decompositions of length $\vec{d}\ge 3$ the transition to $\Xd$ amounts to giving a non-zero vacuum expectation value to the scalar fields in the hypermultiplets $\Phi^{(2)}$.
As a result, we find that the gauge symmetry of M-theory on the corresponding $\Xd$ is $G'=U(1)\times \mathbb{Z}_2$.

Note that giving a vacuum expectation also to the scalar fields in $\Phi^{(1)}$ further breaks the gauge group to $G''=U(1)$ and one obtains the M-theory compactification for a generic choice of complex structure on $X_{(8)}$.

Analogous transitions between torus fibered Calabi-Yau that lead to discrete gauge symmetries have been studied in the F-theory context for example in~\cite{Braun:2014oya,Mayrhofer:2014laa,Klevers:2014bqa,Cvetic:2015moa,Oehlmann:2019ohh,Knapp:2021vkm,Schimannek:2021pau}, while the relationship between discrete symmetries and torsional homology has also been discussed e.g. in~\cite{Berasaluce-Gonzalez:2012awn,Grimm:2015ona,Braun:2017oak,Casas:2023wlo}.
We provide to the best of our knowledge the first examples where the existence of torsion curves in a small resolution after the transition has been rigorously proven.

\paragraph{Conifold transitions with B-fields}
We will also use the GLSM and mirror symmetry to study the difference between the transitions from $\widehat{\Xdr}$ to $\Xd$ and $\Xd^{\text{n.c.}}$.
In this way we can explicitly show that the hybrid GLSM phases can be interpreted as corresponding to $\Xd$ with a flat topologically non-trivial B-field, and therefore to $\Xd^{\text{n.c.}}$.

In particular, we confirm that $\Xd$ and $\Xd^{\text{n.c.}}$ have disconnected moduli spaces, away from the locus where the underlying space degenerates to $\Xdr$, and, as a result, different mirror Calabi-Yau manifolds.

Further evidence for this is provided by the results from localization and from the Borisov-Li mirror construction.
Recall that the topological string A-model is invariant under complex structure deformations.
Since in each case $\Xd$ can be deformed to $X_{(8)}$, one can use the A-model on $X_{(8)}$ to define the A-model on $\Xd$.
On the other hand, our localization calculation will lead to different variations of Hodge structures for the mirror Calabi-Yau of $\Xd^{\text{n.c.}}$ for each decomposition.

\paragraph{Torsion refined Gopakumar-Vafa invariants} 
We calculate torsion refined Gopakumar-Vafa invariants using B-model techniques, following~\cite{Schimannek:2021pau,Katz:2022lyl}.
This includes the direct integration of the holomorphic anomaly equations~\cite{Bershadsky:1993ta,Bershadsky:1993cx}, where the holomorphic ambiguity can be fixed by the conifold gap condition, Castelnuovo type vanishing conditions, further enumerative predictions and certain regularity constraints, as was first done for smooth compact Calabi-Yau in~\cite{Huang:2006hq} and recently carried out to higher genera in~\cite{Katz:2022lyl,Alexandrov:2023zjb,Gu:2023mgf}.
 
 In order to do geometric computations, it was conjectured in~\cite{Katz:2022lyl} that $D^b(\widehat{X},\alpha)$ supports Bridgeland stability conditions whose heart contains all coherent sheaves of dimension at most 1 on any small resolution $\widehat{X}'$ of $X$. While no attempt was made to construct such stability conditions, nor do we do so here, we were able to perform many calculations which matched B-model calculations by making the following reasonable assumption about the limiting stability condition at large radius: if $C$ is any curve in any $\widehat{X}'$, then the sheaf $\mathcal{O}_C$ (identified with an object of $D^b(\widehat{X},\alpha)$ by appropriate Fourier-Mukai equivalences and a trivialization of $\alpha_C$) is stable.
 We can then apply standard techniques to compute some of the invariants and to compare with the results from mirror symmetry.

\paragraph{Outline}
In Section~\ref{sec:prelims} we provide some background on singular Calabi-Yau threefolds, torsion refined Gopakumar-Vafa invariants and topologically non-trivial B-fields.
We also summarize the current state of the mathematical definition of the invariants.
In Section~\ref{sec:geometry} we study the geometry of determinantal double solids $\Xd$ and use conifold transitions to prove Proposition~\ref{prop:11classes} that describes the homology of small resolutions. We also interpret the conifold transitions as Higgs transitions in M-theory.
In Section~\ref{sec:cliffordresolution} this is followed by a detailed study of the corresponding Clifford non-commutative resolutions.
Section~\ref{sec:glsm} contains the discussion of the corresponding GLSM and the calculation of the sphere partition functions.
In Section~\ref{sec:gorenstein} we then calculate the fundamental periods of the mirrors directly, using the combinatorial construction from~\cite{borisov2016clifford}.
The associated variations of Hodge structure and the integral structure of the periods will be discussed in Section~\ref{sec:branes}.
In Section~\ref{sec:direct} we use mirror symmetry and integrate the holomorphic anomaly equations to calculate torsion refined Gopakumar-Vafa invariants for some of the examples.
In Section~\ref{sec:enumerativeGeometry}, we discuss some consequences from the conifold transitions on the enumerative predictions for the invariants, confirming earlier observations from~\cite{Katz:2022lyl}.
Finally, in Section~\ref{sec:outlook}, we provide an outlook on some of the many interesting directions that are left for future work.

\section{Preliminaries}
\label{sec:prelims}
In this section we will first give a physically oriented introduction to the definition from M-theory of torsion refined Gopakumar-Vafa invariants of Calabi-Yau threefolds $X$ without a small K\"ahler resolution.  After that, we will explain the role of flat but topologically non-trivial B-fields in the A-model topological string free energies.
We will then summarize the current state of the mathematical definition of the invariants in terms of the enumerative geometry of all small resolutions.  If we have a non-commutative resolution $(X,\mathcal{B})$, we sketch a proposal for an alternative definition of the torsion refined invariants in terms of an enumerative theory of $\mathcal{B}$-modules. 

\subsection{Singular Calabi-Yau without K\"ahler small resolutions}
\label{sec:singCYnonKahler}
A K\"ahler Calabi-Yau threefold with $n$ isolated nodal singularities always admits $2^n$ small resolutions but in general none of those is K\"ahler.
A theorem by Werner states that given any small resolution $\widehat{X}$ of a projective nodal threefold $X$, the nodal threefold does admit a K\"ahler small resolution iff none of the exceptional curves on $\widehat{X}$ is either torsion or trivial in $H_2(\widehat{X},\mathbb{Z})$~\cite{Werner,WernerTranslate}.

In fact, it is easy to see that a manifold $M$ can not be K\"ahler if it contains a holomorphic curve $C$ with homology class $[C]\in H_2(M,\mathbb{Z})$ such that $n\cdot [C]\sim 0$ for some $n\in \mathbb{N}_{>0}$.
Any 2-form $\omega$ would have to satisfy
\begin{align}
    \int_C\omega =\frac{1}{n}\int_{n\cdot C}\omega=0\,.
\end{align}

Note that if the exceptional curve that resolves a given node is homologically trivial, it bounds a 3-chain that after shrinking the curve turns into a 3-cycle.
The latter leads to a complex structure modulus in the singular Calabi-Yau that can be used to deform away the singularity.

From a physical perspective, such nodes lead to uncharged hypermultiplets in M-theory and the deformation gives vacuum expectation values to the corresponding scalar fields~\cite{Arras:2016evy,Grassi:2018rva}.

On the other hand, it has been argued in~\cite{Schimannek:2021pau,Katz:2022lyl}, based on earlier observations in the context of torus fibrations in F-theory~\cite{Braun:2014oya,Morrison:2014era,Mayrhofer:2014laa,Anderson:2014yva,Klevers:2014bqa,Cvetic:2015moa,Oehlmann:2019ohh,Knapp:2021vkm}, that nodes which are resolved by torsional exceptional curves lead to massless hypermultiplets that are charged under discrete gauge symmetries in M-theory.

More precisely, the claim is that given any small resolution $\widehat{X}$ the five-dimensional gauge group takes the form~\cite{Katz:2022lyl}
\begin{align}
	G_{5d}=\text{Hom}\left(H_2(\widehat{X},\mathbb{Z}),U(1)\right)\,,
\end{align}
while the charge lattice can be identified with $H_2(\widehat{X},\mathbb{Z})$.

The nodes can still be deformed away, corresponding again to turning on vacuum expectation values for the charged scalar fields.
However, we will decide to keep them and thus to preserve the discrete gauge symmetry.

As was discussed in~\cite{Schimannek:2021pau,Katz:2022lyl}, the discrete gauge symmetry in M-theory can then be used to define a torsion refinement of the Gopakumar-Vafa invariants~\cite{Gopakumar:1998ii,Gopakumar:1998jq}~\footnote{The possibility of such a torsion refinement for K\"ahler Calabi-Yau threefolds with torsion homology was also discussed in~\cite{Dedushenko:2014nya} while some torsion refined instanton numbers for a smooth Calabi-Yau threefold have been calculated in~\cite{Braun:2007tp,Braun:2007vy}.}.
Moreover, the torsion in the homology of $\widehat{X}$ leads to the possibility of turning on a flat but topologically non-trivial B-field that stabilizes the singularities and leads to string compactifications on non-commutative resolutions of the singular Calabi-Yau~\cite{Schimannek:2021pau,Katz:2022lyl}.

\subsection{Torsion refined Gopakumar-Vafa invariants from physics}
The original work~\cite{Gopakumar:1998ii,Gopakumar:1998jq}, see also~\cite{Dedushenko:2014nya}, defines Gopakumar-Vafa invariants physically in terms of the BPS spectrum of the five-dimensional effective theory that arises from M-theory on a Calabi-Yau threefold.

The little group of massive particles in the effective theory is
\begin{align}
    Spin(4)=SU(2)_L\times SU(2)_R\,.
\end{align}
On a smooth Calabi-Yau $Y$ the $U(1)$ charge lattice can be identified with $H_2(Y,\mathbb{Z})$ and one can define the multiplicity $N_{j_L,j_R}^\beta$ of BPS states with charge $\beta\in H_2(Y,\mathbb{Z})$ that transform in the $SO(4)$ representation
\begin{align}
    \left[\left(\frac12,0\right)\oplus 2(0,0)\right]\otimes (j_L,j_R)\,.
\end{align}
Using $I_g=([\frac12]+2[0])^g$ in terms of $SU(2)$ representations $[j]$, the Gopakumar-Vafa invariants $n_g^\beta$ are then defined via the relation
\begin{align}
    \sum\limits_{g=0}^\infty n_g^\beta I_g=\sum\limits_{j_L,j_R}(-1)^{2j_R}(2j_R+1)N^\beta_{j_L,j_R}\cdot[j_L]\,,
\end{align}
taking $\mathrm{Tr}(-1)^{F_R}$ over the $SU(2)_R$ representations.

It was argued in~\cite{Schimannek:2021pau,Katz:2022lyl} that the same definition can be used for a Calabi-Yau threefold $X$ with isolated nodal singularities, if the charge lattice is replaced with the homology $H_2(\widehat{X},\mathbb{Z})$ of any small resolution $\widehat{X}$ of $X$.

In this paper we will focus on nodal Calabi-Yau threefolds $X$ with
\begin{align}
    H_2(\widehat{X},\mathbb{Z})\simeq \mathbb{Z}\oplus\mathbb{Z}_2\,.
    \label{eqn:h2isom}
\end{align}
The corresponding M-theory gauge group is $G\simeq U(1)\times\mathbb{Z}_2$ and we refer to the $U(1)$ charge as the \textit{degree} $d$ and to the $\mathbb{Z}_2$ charge as the \textit{parity} $p\in\{0,1\}$.
The corresponding torsion refined Gopakumar-Vafa invariants at genus $g$ will be denoted as $n_g^{d,p}$.

It is important to point out that the isomorphism~\eqref{eqn:h2isom} is not canonical and a change of basis can lead to a shift $p\mapsto p+d\text{ mod }2$.
Physically, this ambiguity is reflected in the different choices of isomorphisms $G\simeq U(1)\times\mathbb{Z}_2$.

\subsection{Fractional B-fields and non-commutative resolutions}
Compactifying the five-dimensional M-theory vacuum on a circle, the theory becomes dual to the Type IIA compactification on the same Calabi-Yau.

It was argued in~\cite{Aspinwall:1995rb}, based on observations in the context of discrete torsion~\cite{Vafa:1994rv}, that flat but topologically non-trivial B-fields can then stabilize singularities, in the sense that the string worldsheet theory becomes regular and the deformations that remove the singularities are then obstructed.

From the perspective of the five-dimensional M-theory compactification, the additional degree of freedom from the Type IIA B-field can be understood as a choice of discrete bundle along the circle that is used to compactify to four dimensions~\cite{Dierigl:2022zll}.
The choice of bundle can be thought of as a discrete Coulomb branch and a non-trivial twist generates a mass for the particles that transform non-trivially under the corresponding gauge transformation.
The Calabi-Yau is expected to be fully (non-commutatively) resolved, if all of the charged hypermultiplets have obtained a non-zero mass.

While in~\cite{Aspinwall:1995rb} the large blowup of a nodal Calabi-Yau $X$ was used as a model for the allowed B-field backgrounds on a nodal Calabi-Yau, one can equivalently use a small non-K\"ahler resolution $\widehat{X}$~\cite{Schimannek:2021pau,Katz:2022lyl}.
The topological types of flat B-field backgrounds are then classified by the Brauer group $\text{Br}(\widehat{X})$, with isomorphisms
\begin{align}
    \text{Br}(\widehat{X})\simeq \text{Tors}\,H^3(\widehat{X},\mathbb{Z})\simeq \text{Hom}\left(H_2(\widehat{X},U(1)\right)\,,
    \label{eqn:brauer}
\end{align}
while also $\text{Tors}\,H^3(\widehat{X},\mathbb{Z})\simeq \text{Tors}\,H_2(\widehat{X},\mathbb{Z})$~\cite{Batyrev:2005jc}.

Already in~\cite{Aspinwall:1995rb} it was pointed out that via~\eqref{eqn:brauer} a non-trivial choice $\alpha\in \text{Br}(\widehat{X})$ leads to an additional phase in the string instanton action associated to a curve $\beta\in H_2(\widehat{X},\mathbb{Z})$
\begin{align}
    e^{-S(\beta)}=\alpha(\beta) \exp\left(2\pi i\int_\beta\pi^* \omega_{\mathbb{C}}\right)\,,
    \label{eqn:instaction}
\end{align}
where $\pi$ is the projection from $\widehat{X}$ to $X$ and $\omega_{\mathbb{C}}$ is the  usual complexified K\"ahler class on $X$.

For ease of exposition, let us assume that
\begin{align}
    H_2(\widehat{X},\mathbb{Z})\simeq H_2(X,\mathbb{Z})\oplus\mathbb{Z}_n\,,
\label{eqn:splith2}
\end{align}
and use a particular choice of isomorphism to represent the class $\beta$ of a curve in $\widehat{X}$ as $(\tilde{\beta},q)$.
This induces a corresponding isomorphism $\text{Br}(\widehat{X})\simeq \mathbb{Z}_n$ and the phase associated to the curve for a particular choice $[k]\in\text{Br}(\widehat{X})$ takes the form $\alpha(\beta)=e^{2\pi i \frac{k}{n}q}$.

As was discussed in~\cite{Katz:2022lyl}, one can interpret $k/n$ as the value of a ``quantized complexified K\"ahler parameter'' that is associated to the torsional exceptional curves and can only take values $k/n,\,k=0,\ldots,n-1$.
We therefore sometimes refer to a non-trivial choice $[k]\in \text{Br}(\widehat{X})$ as a \textit{fractional B-field}~\footnote{It would be more precise to refer to it as the fractional part of the B-field but we choose to commit this abuse of language in the interest of readability.}.
Different choices of fractional B-fields correspond to different Type IIA string backgrounds and, via~\eqref{eqn:instaction},  to different A-model topological string partition functions.

Explicit examples of topological strings on singular Calabi-Yau with fractional B-fields have first been studied in~\cite{Schimannek:2021pau}, using extremal transitions and modular properties of the topological string partition function to identify corresponding large volume limits with fractional B-fields in the stringy K\"ahler moduli spaces of smooth genus one fibered Calabi-Yau threefolds.

Based on~\cite{Aspinwall:1995rb}, the resulting A-model topological string free energies $Z_{\text{top.}}$ were then pointed out in~\cite{Schimannek:2021pau} to encode the torsion refined Gopakumar-Vafa invariants as
\begin{align}
\begin{split}
    &Z_{\text{top.}}\left(\omega_{\mathbb{C}},k\right)\\
    =&\sum\limits_{\tilde{\beta}\in H_2(X,\mathbb{Z})}\sum\limits_{q=0}^{n-1}\sum\limits_{g=0}^\infty\sum\limits_{m=1}^\infty \frac{n^{\tilde{\beta},q}_g}{m}\left(2\sin\,\frac{\lambda m}{2}\right)^{2g-2} e^{2\pi i \frac{k}{n}q}e^{2\pi i \int_{\tilde{\beta}}\omega_{\mathbb{C}}}\,.
    \end{split}
\end{align}
Another crucial observation is that the partition function for $k=0$, corresponding to the singular Calabi-Yau without a fractional B-field, can be identified with that of the smooth deformation.
Combining the information from the partition functions from all inequivalent choices of fractional B-field $k$ then allowed to calculate the invariants.

For examples that are not torus fibered, torsion refined GV-invariants have first been calculated in~\cite{Katz:2022lyl}, where a conjectural mathematical definition of the invariants was formulated that we will now discuss.

\subsection{Summary for Mathematicians}

We continue to consider projective Calabi-Yau threefolds $X$ with only nodal singularities such that every small resolution $\widehat{X}$ is not K\"ahler, merely a complex manifold of dimension 3.  As discussed earlier in this section, the exceptional curves $C_p\simeq\mathbb{P}^1$ over the nodes $p\in X$ are all trivial or torsion in $H_2(\widehat{X},\mathbb{Z})$.  The topological string free energy can be determined by physical methods.  The analogous situation for smooth Calabi-Yau threefolds suggests that this free energy can be interpreted as a generating function for enumerative invariants.  One of our goals in this paper is to build on our previous work \cite{Katz:2022lyl} and proceed towards a rigorous mathematical definition of these torsion refined invariants.  Before reviewing the viewpoint taken in our previous work \cite{Katz:2022lyl}, we make some comments about the difficulties involved.

Since $\widehat{X}$ is not K\"ahler, there is no natural symplectic structure which can be used to define Gromov-Witten invariants of $\widehat{X}$.  There is certainly no symplectic form $\omega$ of $\widehat{X}$ which is compatible with the complex structure in the sense that the bilinear form on tangent vectors
\begin{align}
    g(v,w)=\omega(v,Jw)\,,
\end{align}
is positive definite and symmetric.  If we had a compatible symplectic structure, then on the one hand $\int_{C_p}\omega$ would be zero since $C_p$ is trivial or torsion in homology, yet also positive being the area of $C_p$ with respect to the Riemannian metric $g$.

If one forges ahead and tries to define Gromov-Witten invariants using holomorphic maps anyway, one quickly sees that the $C_p$ create a problem.  Suppose that $[C_p]$ is $n$-torsion. Then for any $m$, one can take a tree $D=D_1\cup\ldots\cup D_{mn}$ of $mn$ copies of $\mathbb{P}^1$ and define maps $f_m:D\to\widehat{X}$ taking each $D_i$ isomorphically to $C_p$.  Then $(f_m)_*[D]=mn[C_p]=0$ for any $m$.  Now given any stable map $f:C\to\widehat{X}$ such that $f(C)$ intersects $C_p$, we can glue $f$ to $f_m$ and get a new stable map $f':C\cup D\to\widehat{X}$ with $f_*[C]=f'_*[C\cup D]$.  In this way, we get moduli spaces of stable maps $\overline{M}_g(\widehat{X},\beta)$ with infinitely many connected components, making it impossible to define a Gromov-Witten invariant.

If one tries instead to define Gopakumar-Vafa invariants to $\widehat{X}$ as invariants associated to moduli space of stable sheaves along the lines of \cite{HST,katz2008GV,kiem2008GV,Maulik2018GV}, there is the immediate problem that there is no simple notion of stability of sheaves since $\widehat{X}$ does not have a polarization which could be used to define Gieseker stability.  The best one can do is to choose a polarization $L$ on $X$ and pull it back via $\rho:\widehat{X}\to X$.  But now $\rho^*(L)\cdot C_p=0$ for each exceptional curve, and difficulties can arise with sheaves whose support contains an exceptional curve.  

However, the torsion in $H_2(\widehat{X},\mathbb{Z})$ comes to the rescue.  We have $H_2(\widehat{X},\mathbb{Z})_{\mathrm{tors}}\simeq H^3(\widehat{X},\mathbb{Z})_{\mathrm{tors}}$, so as mentioned in the previous section we have flat but topologically non-trivial B-fields on $\widehat{X}$ and a corresponding non-trivial element $\alpha\in\mathrm{Br}(\widehat{X})$.  We can circumvent the difficulties and consider enumerative invariants attached to moduli spaces of $\alpha$-twisted sheaves instead of sheaves by adapting the geometric methods mentioned above which have been used to define Gopakumar-Vafa invariants when applied to ordinary sheaves.  

For ordinary sheaves on a smooth Calabi-Yau threefold $Y$, we can view Gieseker stability as the large volume limit of Bridgeland stability conditions whose parameter space includes the punctured polydisk
\begin{align}
    \left\{B+iJ\in \left(H^2\left(Y,\mathbb{R}\right)/H^2\left(Y,\mathbb{Z}\right)\right)+i K(Y),\ J\gg0
    \right\}\,,
\end{align}
where $K(Y)\subset H^2\left(Y,\mathbb{R}\right)$ is the K\"ahler cone of $Y$.  The heart of this limiting stability condition contains $\mathrm{Coh}_{\le1}(Y)$, the abelian category of coherent sheaves on $Y$ of dimension at most 1.  We have conjectured that there are Bridgeland stability conditions on the twisted derived category $D^b(\widehat{X},\alpha)$ whose parameter space includes the punctured polydisk
\begin{align}
    \left\{B+iJ\in \left(H^2\left(\widehat{X},U(1)\right)\right)+i \rho^*K(X),\ J\gg0, c(B)=\alpha
    \right\}\,,
\end{align}
where $c(B)\in H^3(\widehat{X},\mathbb{Z})_{\mathrm{tors}}$ is the characteristic class of the B-field.  We again have a limiting notion of stability at large volume. We have further conjectured that the heart of this limiting stability condition contains $\cup\mathrm{Coh}_{\le1}(\widehat{X}')$,\footnote{If we can represent $\alpha$ by a sheaf of Azumaya algebras $\widehat{\mathcal{B}}$, then  $\widehat{\mathcal{B}}$ is isomorphic to $\mathrm{End}_{\mathcal{O}_{\widehat{X}}}(\mathcal{E})$ for some $\alpha$-twisted locally free sheaf $\mathcal{E}$.  In this situation, we can and will embed $\mathrm{Coh}_{\le1}(\widehat{X})$ in $\mathrm{Coh}_{\le1}(\widehat{X},\alpha)$ by $F\mapsto F\otimes\mathcal{E}$.} where $\widehat{X}'$ is any small resolution of $X$.  The union makes sense because of explicit Fourier-Mukai equivalences $D^b(\widehat{X}')=D^b(\widehat{X})$ \cite{Bridgeland}.

Now let $\beta\in H_2(\widehat{X},\mathbb{Z})$ and for ease of exposition we continue to assume that the torsion is $\mathbb{Z}_n$ and we have fixed an explicit splitting (\ref{eqn:splith2}).  We put $\tilde{\beta}=\rho_*(\beta)$ and can write $\beta=(\tilde{\beta},q)$ with $0\le q\le n-1$.  Let $M_\beta$ be the moduli space of stable $\alpha$-twisted sheaves of pure dimension 1 on some small resolution $\widehat{X}'$ with the above identifications.  Since $X$ is projective, we have a well-defined Chow variety $\mathrm{Chow}_{\tilde\beta,1}(X)$ of 1-dimensional cycles on $X$ and a morphism
\begin{align}
    \pi_\beta:M_\beta\to \mathrm{Chow}_{\tilde\beta,1}(X)\,.
\label{eqn:hilbertchow}
\end{align}
We apply the method of \cite{Maulik2018GV} to construct invariants $n_g^\beta=n_g^{\tilde{\beta},q}$. To rigorously define these invariants, we would need to define perverse sheaves of vanishing cycles associated to an analogue of holomorphic Chern-Simons theory for twisted sheaves and check that we can choose CY orientation data in the terminology of \cite{Maulik2018GV}.  Instead, we simply note that we will restrict our explicit computations to smooth moduli spaces $M$.  In this case, we have the perverse sheaf $M[\dim M]$ which supports a CY orientation.  We will further simplify our calculation by using the methods of \cite{Katz:1999xq} which have been shown to correctly reproduce the invariants of \cite{Maulik2018GV} under certain hypotheses \cite{Zhao}, which hold in our cases.

These torsion refined invariants satisfy the usual relation for a conifold transition
\begin{align}
    n_g^{\tilde{\beta}}(Y)=\sum_{q=0}^{j-1}n_g^{\tilde\beta,q}(\widehat{X})\,,
\end{align}
where $Y$ is a smooth complex structure deformation of $X$.

\smallskip
We also attempt to describe the invariants in terms of non-commutative resolutions.  Suppose that we can represent $\alpha\in\mathrm{Br}(\widehat{X})$ by an Azumaya algebra $\widehat{\mathcal{B}}$ on $\widehat{X}$ and suppose further that ${\mathcal{B}}:=\rho_*\widehat{\mathcal{B}}$ is such that $(X,{\mathcal{B}})$  is a noncommutative resolution of $X$ and such that 
\begin{align}
    R\rho_*:D^b(\widehat{X},\widehat{\mathcal{B}})\rightarrow D^b(X,{\mathcal{B}})
\end{align}
is a derived equivalence.  We will give examples where this condition holds.  Since sheaves of ${\mathcal{B}}$-modules are also sheaves of $\mathcal{O}_X$-modules, we have a natural notion of Gieseker stability of 1-dimensional sheaves of ${\mathcal{B}}$-modules and a natural map $M'\to \mathrm{Chow}_1(X)$ from the moduli space $M'$ of such sheaves to the Chow variety $\mathrm{Chow}_1(X)$ of curve classes on $X$, analogous to (\ref{eqn:hilbertchow}).  

In the case of a determinantal octic double solid $\Xd$ introduced in Section~\ref{sec:geometry}, we have a double cover $\pi:X=\Xd\to \mathbb{P}^3$ branched over a determinantal octic surface.  In Section~\ref{sec:evenClifford} we describe a sheaf $\mathcal{B}$ on $\Xd$ and a sheaf $\mathcal{B}_0$ of even Clifford algebras on $\mathbb{P}^3$ such that $\pi_*\mathcal{B}=\mathcal{B}_0$. 
 The categories of $\mathcal{B}$-modules on $\Xd$ and of $\mathcal{B}_0$-modules on $\mathbb{P}^3$ are equivalent.  In Section~\ref{sec:derived}, we show for these $\Xd$ that zero-dimensional sheaves of ${\mathcal{B}}_0$-modules correspond to points of some small resolution of $X$.  We have a more precise form of this correspondence in terms of $\alpha$-twisted sheaves in some cases, and the more precise correspondence holds more generally assuming a conjectured form of a derived equivalence (Conjecture~\ref{conj:hatb0}). This suggests the possibility that appropriate components of $M'$ could coincide with the moduli spaces $M_\beta$ described above.

In Section~\ref{sec:derived}, we also observe that the dimensions of certain skyscraper sheaves of points get multiplied by a numerical factor in passing from ordinary sheaves to twisted sheaves.  For similar reasons, when the map $M'\to \mathrm{Chow}_1(X)$ is restricted to such components and the image is likewise restricted, that the resulting map coincides with (\ref{eqn:hilbertchow}) up to a numerical factor which does not affect the computation of invariants using the procedure in \cite{Maulik2018GV}.  In summary, we anticipate the development of an enumerative theory of sheaves of ${\mathcal{B}}_0$-modules, which could coincide with our torsion refined invariants.

\section{Geometry}
\label{sec:geometry}
In this section we will study the geometry of determinantal double solids $\Xd$.
We first introduce the notion of a \textit{normalized decomposition} $\vec{d}\in\mathbb{N}^k$ and state Proposition~\ref{prop:11classes} in Section~\ref{sec:sods}.
A crucial role will be played by special degenerations $\Xdr$ that will be introduced in Section~\ref{sec:verysingular}.
We then prove Lemma~\ref{lem:resolution} about the structure of the K\"ahler small resolutions $\widehat{\Xdr}$.
This will allow us to prove the main result of this section, Proposition~\ref{prop:11classes}, about the homology of curves on small resolutions of $\Xd$.
\subsection{Determinantal double solids}
\label{sec:sods}
We consider double covers $X$ of $\mathbb{P}^3$ that are ramified over a determinantal surface $S$ of even degree $d\in 2\mathbb{N}$ with $0<d\le 8$.\footnote{Many of the results in this section have generalizations to more general Fano threefolds $B$  together with certain line bundles $\mathcal{L}_1,\ldots, \mathcal{L}_k$ satisfying $\mathcal{L}_1\otimes\dots\otimes \mathcal{L}_k\simeq K_B^{\otimes-2}$.}  
They can be constructed as (non-generic) hypersurfaces,
\begin{align}
	p^2=\det\,A\left(x_{1,\ldots,4}\right)\,,
 \label{eqn:doublesolid}
\end{align}
where $[x_1:\ldots:x_4:p]$ are homogeneous coordinates on $\mathbb{P}^4(1,1,1,1,d/2)$ and $A$ is a symmetric $k\times k$ matrix such that the determinant is a polynomial of degree $d$.

This can be achieved by taking a decomposition $d=\abs{\vec{d}}\equiv d_1+\ldots+d_k$ into $\vec{d}\in\mathbb{N}^k$, with $d_i=d_j\,\text{mod}\,2$ for all $i,j$, and assume that $A_{i,j}$ is a generic homogeneous polynomial of degree $(d_i+d_j)/2$~\footnote{We always use $\abs{\vec{v}}$ to denote the 1-norm of an integer vector $\vec{v}\in\mathbb{N}^k$.}.
We will denote a symmetric matrix corresponding to such a vector by $\Ad$, the associated determinantal surface by $\Sd$ and the double cover of $\mathbb{P}^3$ branched over $\Sd$ by $\Xd$.  We call $\Xd$ a determinantal double solid.  In the special case $d=8$, $\Xd$ is a Calabi-Yau variety, which we call a determinantal octic double solid.

Note that if there would be $i,j$ with $d_i\ne d_j\,\text{mod}\,2$, the corresponding entries $A_{i,j}$ would have to vanish and the matrix could be brought into block diagonal form such that the variety $X$ becomes reducible. 

Clearly $\Xd$ is independent of the order of the entries in $\vec{d}$.
It is also easy to see that two vectors $\vec{d},\vec{d}'$ that only differ by entries that are zero lead to the same $\Xd$. Consider $\vec{d}=(d_1,\ldots,d_{k'},0)$ and, using $l=k-k'$, decompose $\Ad$ as
\begin{align}
	\Ad=\left(\begin{array}{cc}
		A_{k'\times k'}&B_{l\times k'}^T\\
		B_{l\times k'}&C_{l\times l}
	\end{array}\right)\,.
\end{align}
Since the entries of $C_{l\times l}$ are constants, and we can assume $C_{l\times l}$ to be non-degenerate, there is a basis transformation
\begin{align}
	T^T\Ad T=\left(\begin{array}{cc}A-B^TC^{-1}B&0\\0&C\end{array}\right)\,,\quad T=\left(\begin{array}{cc}\mathbf{1}&0\\-C^{-1}B&\mathbf{1}\end{array}\right)\,.
\end{align}
that is globally defined on $\mathbb{P}^3$ and allows us to identify $\Xd$ with a generic double cover $\Xdp$.

It will be useful to assume that $k$ is even, which is automatically the case if the entries of the decomposition are odd and can be achieved by padding with zero if the entries are even.
This leads us to the following definition:
\begin{definition}
A decomposition $\vec{d}\in\mathbb{N}^k$ is called \textit{normalized} if $k$ is even, $d_i\ge d_j$ for $i<j$, $d_i=d_j\text{ mod }2$ and $d_{k-1}\ne 0$, i.e. $\vec{d}$ contains at most one zero.
We call the number of non-zero entries the \textit{length} of the decomposition and $d=\abs{\vec{d}}$ the \textit{degree}.
We denote a decomposition as even (odd) if all of the entries are even (odd).
\end{definition}

The following is implied by~\cite[Theorem 2.2]{Catanese1981}, see also~\cite[Theorem 1.10]{Harris1984}:
\begin{lemma}
Let $\vec{d}\in\mathbb{N}^k$ be a normalized decomposition.
A generic determinantal surface $\Sd\subset\mathbb{P}^3$ has isolated nodes
{\normalfont\begin{align}
	\SdOne=\{\,\corank\,\Ad=2\,\}\,,
\end{align}}
and is smooth everywhere else.
The number of nodes $n_{\vec{d}}\equiv \# \SdOne$ is given by
\begin{align}
	n_{\vec{d}}=\frac12\sum\limits_{i=2}^k d_i\rho_{i}\rho_{i-1}\,,\quad \rho_h=\sum_{i=1}^hd_i\,.
\end{align}
\label{lem:nodes}
\end{lemma}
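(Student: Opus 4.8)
The plan is to reduce the statement to the known structure theory of symmetric determinantal varieties, essentially by citing \cite{Catanese1981} and \cite{Harris1984} for the qualitative part and then doing a Chern-class/degeneracy-locus computation for the enumeration. First I would recall the general set-up: a symmetric $k\times k$ matrix $\Ad$ of forms on $\mathbb{P}^3$ of the prescribed degrees $(d_i+d_j)/2$ defines, via the natural map of vector bundles $\varphi\colon \mathcal{E}^\vee\to \mathcal{E}$ where $\mathcal{E}=\bigoplus_i\mathcal{O}_{\mathbb{P}^3}(a_i)$ for suitable $a_i$ (chosen so that $\Ad$ is a global section of $\operatorname{Sym}^2\mathcal{E}$), the surface $\Sd=\{\det\Ad=0\}=\{\operatorname{corank}\Ad\ge 1\}$ and the locus $\SdOne=\{\operatorname{corank}\Ad\ge 2\}$. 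For a generic such symmetric matrix, the expected codimension of $\{\operatorname{corank}\ge r\}$ in the symmetric case is $\binom{r+1}{2}$, so $\Sd$ has codimension $1$ (a surface) and $\SdOne$ has codimension $3$, i.e.\ is a finite set of points in $\mathbb{P}^3$; this is exactly the content of \cite[Theorem 2.2]{Catanese1981}. I would invoke that theorem to conclude that generically $\Sd$ is smooth away from $\SdOne$, that $\SdOne$ is reduced and zero-dimensional, and that along $\SdOne$ the surface has ordinary double points — the local analytic model being $\det$ of a symmetric $2\times 2$ matrix in suitable coordinates, which is the node $xy-z^2=0$ (and which one also sees from the classical fact that the differential of $\det$ at a corank-one point is surjective transverse to $\Sd$, while at corank-two points one picks up the rank-$3$ quadric). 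The genericity of $\Ad$ for fixed $\vec{d}$ here uses the hypothesis $d_i\equiv d_j\bmod 2$, which guarantees every off-diagonal entry is a genuine nonconstant form so that the linear system defining $\Ad$ is sufficiently ample; I would note that Bertini-type arguments then give the transversality needed to apply \cite{Catanese1981}.

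Second, for the count $n_{\vec d}=\#\SdOne$ I would use the Thom--Porteous / Harris--Tu formula for degeneracy loci of symmetric maps of bundles. The relevant class is the degree of the expected-dimension-zero locus $\{\operatorname{corank}\ge 2\}$, which by the symmetric Giambelli--Thom--Porteous formula (see \cite{Harris1984}, specifically the discussion around \cite[Theorem 1.10]{Harris1984} for symmetric determinantal loci) is a universal polynomial in the Chern classes of $\mathcal{E}$, pushed to a number on $\mathbb{P}^3$. Concretely, writing $\mathcal{E}=\bigoplus_{i=1}^k\mathcal{O}(a_i)$ with the $a_i$ determined by $\vec d$ (one can take $2a_i=d_i$ up to an overall shift, so the off-diagonal degree $(d_i+d_j)/2=a_i+a_j$ works out), the class of $\SdOne$ is a determinant of complete-homogeneous-symmetric-function expressions in the $a_i$, and extracting the coefficient of the hyperplane class cubed gives a polynomial in the $a_i$, hence in the $d_i$. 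I would then simply verify — this is the routine calculation I would not grind through here — that this polynomial equals $\tfrac12\sum_{i=2}^k d_i\rho_i\rho_{i-1}$ with $\rho_h=\sum_{i\le h}d_i$; the telescoping/partial-summation structure of the claimed formula strongly suggests it comes from organizing the symmetric Porteous determinant by the partial sums $\rho_h$. A useful sanity check along the way: for the octic cases of length $l\ge 3$ in Table~\ref{tab:summary} the formula must reproduce $64,72,76,80,80,84$, and for $\vec d=(d)$ of length one the sum is empty so $n_{(d)}=0$, consistent with $\Xd$ being smooth; I would record these checks.

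The main obstacle I anticipate is twofold and both pieces are bookkeeping rather than conceptual. The first is making sure the bundle $\mathcal{E}$ (equivalently the twists $a_i$, which need not be integers individually but whose pairwise sums are) is set up so that $\Ad$ really is a generic global section of $\operatorname{Sym}^2\mathcal{E}$ and so that the cited symmetric Porteous formula applies verbatim — the standard trick is to tensor by an auxiliary line bundle to clear denominators, which changes neither $\Sd$ nor $\SdOne$ nor the degree being computed, but one has to say this carefully because $d_i\equiv d_j\bmod 2$ is precisely what makes this possible with a single global twist. The second is the algebraic identity itself: symmetric Thom--Porteous gives the answer as a $2\times 2$ (or small) determinant of Schur-type polynomials in the $a_i$, and turning that into the stated double sum requires a nontrivial but elementary rearrangement; I would expect to prove it by induction on $k$ (adding one more entry $d_{k}$ to a normalized decomposition of length $k-1$ changes $n_{\vec d}$ by exactly $\tfrac12 d_k\rho_k\rho_{k-1}$), matching this incremental change against the incremental change in the Porteous expression, which should reduce to a manageable combinatorial identity. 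Everything else — smoothness off $\SdOne$, the nodal local model, reducedness of $\SdOne$ — I would get for free from \cite{Catanese1981} and \cite{Harris1984} as asserted in the excerpt.
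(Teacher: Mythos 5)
Your proposal takes essentially the same route as the paper: the paper's proof of this lemma consists precisely of the citation of \cite[Theorem 2.2]{Catanese1981} for the qualitative statement (the singular locus is exactly the isolated corank-two points, which are ordinary nodes) and the reference to \cite[Theorem 1.10]{Harris1984} for the symmetric degeneracy-locus count, and your expansion — including the formal-twist device for the half-integer $a_i$, which the paper also uses later in Lemma~3.5 — is a faithful elaboration of that citation. One small remark: the paper records the Porteous output in the equivalent closed form $n_{\vec d}=4\bigl[c_1(E)c_2(E)-c_3(E)\bigr]$ with $E=\bigoplus_i\mathcal{O}_{\mathbb{P}^3}(d_iH/2)$, which makes the rearrangement into $\tfrac12\sum_{i\ge2}d_i\rho_i\rho_{i-1}$ a short elementary-symmetric-function identity rather than the inductive argument you sketch, so you could shorten that last step considerably.
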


These nodes persist in the double cover $\Xd$.  If $\Sd$ is described locally near a node by the equation $x^2+y^2+z^2=0$, then $\Xd$ is locally described by the equation $w^2=x^2+y^2+z^2$, which is also a node.

The goal of the rest of Section~\ref{sec:geometry} will be to prove the following result:
\begin{proposition}
    Let $\vec{d}\in\mathbb{N}^k$ be an even or odd decomposition of degree $d\in 2\mathbb{N}$ with $0<d\le 8$ and let $l\in\mathbb{N}$ be the length of the decomposition.
    The corresponding double covers $\Xd$ fall into three different classes:
\begin{enumerate}\setlength\itemsep{1em}
	\item $l=1$: The branch locus $\Sd$ is smooth and $X_{(d)}$ is a generic hypersurface of degree $d$ in $\mathbb{P}^4(1,1,1,1,d/2)$ with $H_2(X_{(d)})=\mathbb{Z}$.
	\item $l=2$: $\Xd$ has isolated nodal singularities but admits a K\"ahler small resolution $\widehat{\Xd}$ with $H_2(\widehat{\Xd})=\mathbb{Z}^2$.
	\item $l\ge 3$: $\Xd$ has isolated nodal singularities and does not admit a K\"ahler small resolution.
		In those cases the exceptional curves in a small non-K\"ahler resolution $\widehat{\Xd}$ are $2$-torsion and $H_2(\widehat{\Xd})=\mathbb{Z}\oplus\mathbb{Z}_2$.
\end{enumerate}
	\label{prop:11classes}
\end{proposition}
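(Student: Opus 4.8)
The plan is to deduce the three cases from the geometry of a degeneration $\Xdr$ of $\Xd$ that admits an explicit Kähler small resolution $\widehat{\Xdr}$ realized as a complete intersection in a projective bundle over $\mathbb P^3$, and then transport information back along the conifold transition $\widehat{\Xdr}\rightsquigarrow \Xd$. The case $l=1$ is immediate from the discussion preceding the statement: when $\vec d=(d)$ the matrix $\Ad$ is $1\times 1$, $\Sd$ is a smooth degree-$d$ surface, $\Xd$ is a smooth hypersurface in $\mathbb P^4(1^4,d/2)$, and $H_2(X_{(d)})=\mathbb Z$ by the Lefschetz hyperplane theorem. So the content is in $l=2$ versus $l\ge 3$. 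First I would use the padding observation (any normalized $\vec d$ of length $l$ can be taken with $k=2$ if $l\le 2$) together with Lemma~\ref{lem:nodes} to count nodes and, crucially, to describe them: the nodes sit over $\SdOne=\{\corank\Ad=2\}$, and near each one $\Xd$ is an ordinary double point.

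Next I would construct, for each normalized $\vec d$, the degeneration $\Xdr$ and its Kähler resolution $\widehat{\Xdr}$ (this is exactly the content of Lemma~\ref{lem:resolution}, which I may assume). The key point is that $\widehat{\Xdr}$ is a smooth complete intersection in a $\mathbb P^{k-1}$-bundle (or similar) over $\mathbb P^3$, so its $H_2$ is computed by a Lefschetz-type argument from the toric ambient space: it is generated by a hyperplane class $H$ pulled back from $\mathbb P^3$ and the fiber class $\xi$ of the projective bundle, giving $H_2(\widehat{\Xdr})=\mathbb Z^2$. The map $\widehat{\Xdr}\to \Xdr$ contracts a divisor, and $\widehat{\Xdr}$ is related to a small resolution $\widehat{\Xd}$ of the generic $\Xd$ by a conifold transition: $\widehat{\Xdr}$ degenerates to a nodal space which is then resolved by vanishing cycles. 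I would track the classes of the exceptional $\mathbb P^1$'s through this transition. When $l=2$, the exceptional curves of $\widehat{\Xd}\to\Xd$ are (up to sign) the fiber class of the bundle restricted appropriately — a nonzero, non-torsion class — and the small resolution inherits a Kähler class, so $H_2(\widehat{\Xd})=\mathbb Z^2$ with no torsion. When $l\ge 3$, the analysis shows the exceptional curves become divisible: concretely, the conifold transition relation forces $2\cdot[C_p]\sim 0$ in $H_2(\widehat{\Xd},\mathbb Z)$, and one shows $H_2(\widehat{\Xd},\mathbb Z)\cong\mathbb Z\oplus\mathbb Z_2$ with $[C_p]$ generating the $\mathbb Z_2$. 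Werner's theorem (quoted in Section~\ref{sec:singCYnonKahler}) then immediately gives that no small resolution is Kähler. The argument that $2[C_p]\sim 0$ rather than $[C_p]\sim 0$ is where the hypothesis $l\ge 3$ (hence $k\ge 4$, hence the relevant rank drop locus $\SdrTwo$ is nonempty / the extra homology relation appears) must enter; I expect this to come from comparing the number of independent vanishing cycles in the conifold transition with $b_2$ and $b_4$, using that the exceptional curves over all $n_{\vec d}$ nodes satisfy exactly one relation of the form $\sum [C_p]\sim 0$ together with a divisibility-by-2 coming from the branched-double-cover structure (the deck transformation acts, and the invariant/anti-invariant decomposition produces the factor $2$).

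The technical heart, and the step I expect to be the main obstacle, is pinning down the torsion subgroup of $H_2(\widehat{\Xd},\mathbb Z)$ precisely — showing it is exactly $\mathbb Z_2$ and that $[C_p]$ is exactly $2$-torsion (not trivial, not $4$-torsion). The ``not trivial'' part says the nodes cannot be deformed away within the determinantal family in a way that creates a $3$-cycle, equivalently that the exceptional curve does not bound a $3$-chain; I would argue this by showing that a homologically trivial exceptional curve would force $b_3$ of the deformation $X_{(8)}$ to exceed its known value, i.e. by a careful Euler characteristic / Mayer–Vietoris bookkeeping across the transition $\widehat{\Xdr}\to\Xdr\leadsto\Xd\leadsto X_{(8)}$. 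The ``exactly $2$-torsion'' part should follow from the explicit structure of $\widehat{\Xdr}$: the contracted locus there is a $\mathbb P^1$-bundle over a surface, and the class of a fiber $\mathbb P^1$ maps to $2$ times a generator under an appropriate restriction map (the double cover is branched, so the natural section meets the fiber in a length-$2$ scheme), which after the transition becomes the statement that $[C_p]$ has order dividing $2$ and the complementary free part is rank $1$. Assembling these — node count and local model from Lemma~\ref{lem:nodes}, explicit $\widehat{\Xdr}$ from Lemma~\ref{lem:resolution}, conifold-transition homology bookkeeping, and Werner's theorem — yields the three cases of Proposition~\ref{prop:11classes}.
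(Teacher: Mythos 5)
Your overall strategy matches the paper's: realize $\widehat{\Xdr}$ as a complete intersection in a projective bundle over $\mathbb{P}^3$, run the conifold transition $\widehat{\Xdr}\rightsquigarrow \widehat{\Xd}$, use Mayer--Vietoris to compute $H_2(\widehat{\Xd},\mathbb{Z})$, and invoke Werner. However, the heart of the proof --- why the exceptional curves of $\widehat{\Xd}$ are exactly $2$-torsion --- is left as a speculation that points in the wrong direction. You propose that the factor of $2$ arises from ``the deck transformation acts, and the invariant/anti-invariant decomposition produces the factor $2$,'' possibly to be pinned down via Euler characteristic or $b_3$-bookkeeping. That is not the mechanism, and I do not see how to make it work: the deck transformation does not intertwine the two classes of exceptional curves in a way that forces divisibility.

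The actual source of the factor of $2$ is elementary and geometric, and it is precisely the content of Lemma~\ref{lem:resolution} (which you say you will assume, but whose crucial output you do not use). In the degenerate model $\Xdr$ there are two species of nodes: those over $\SdrOne$ and those over $\SdrTwo$, and the latter are nonempty exactly when $l\ge 3$. In the resolution $\Xab\subset\mathbb{P}(E)$, the exceptional fiber over a node in $\SdrOne$ is a \emph{line} $C^{(1)}_p\subset\mathbb{P}^n$, whereas over a node in $\SdrTwo$ it is a smooth plane \emph{conic} $C^{(2)}_p\subset\mathbb{P}^n$ (this is because the kernel of $B_{l\times k'}$ jumps to a $\mathbb{P}^2$ there, and the quadric $y^T A_{k'\times k'}\, y$ restricted to that $\mathbb{P}^2$ is the residual equation). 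Since $C^{(i)}_p\cdot(J_1,J_2)=(i,0)$ with $J_1=\xi$ the tautological class, and $H_2(\widehat{\Xdr},\mathbb{Z})=\mathbb{Z}\langle C_F\rangle\oplus\mathbb{Z}\langle C_B\rangle$, one gets $[C^{(1)}]=C_F$ and $[C^{(2)}]=2C_F$ \emph{before} any transition. The transition then glues $3$-disks onto the $C^{(2)}_p$, and the Mayer--Vietoris sequence collapses (using that all $C^{(2)}_p$ are homologous, so only one new $2$-cycle relation is introduced) to
\begin{align}
0\rightarrow\mathbb{Z}\xrightarrow{\;\cdot 2C_F\;} H_2(\widehat{\Xdr},\mathbb{Z})\rightarrow H_2(\widehat{\Xd},\mathbb{Z})\rightarrow 0\,,
\end{align}
so $H_2(\widehat{\Xd},\mathbb{Z})\simeq(\mathbb{Z}^2)/\langle 2C_F\rangle\simeq\mathbb{Z}\oplus\mathbb{Z}_2$, with the surviving exceptional class $[C^{(1)}]=C_F$ going to the nonzero $\mathbb{Z}_2$ element. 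This simultaneously disposes of your worry about ``not trivial, not $4$-torsion'': the exact sequence gives exactly $\mathbb{Z}_2$ with no further computation. Without identifying the line-vs-conic dichotomy (equivalently, the intersection numbers $C^{(i)}\cdot J_1=i$), the divisibility relation is unproved and the proposal does not close.
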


While the first case is (almost) trivial, we will prove the cases $k\ge 2$ of the proposition in Section~\ref{sec:proofProp}.
In the third case we will proceed by describing certain specializations $A^{\text{r}}_{k\times k}$ of the corresonding matrices, such that additional nodes arise in the associated double cover $\Xdr$.
We will then construct small K\"ahler resolutions $\widehat{\Xdr}$ of $\Xdr$ as complete intersections in toric ambient spaces.
Finally, we argue that under the transition $\widehat{\Xdr}\rightarrow \widehat{\Xd}$, the exceptional curves that are preserved in $\widehat{\Xd}$ are $2$-torsional.
The second case corresponds to $\widehat{\Xdr}= \widehat{\Xd}$ and follows directly.

\begin{table}[ht!]
\centering
	\begin{align*}\renewcommand{\arraystretch}{1.4}
		\begin{array}{|c|c|c|c|}\hline
			\vec{d}&n_{\vec{d}}&H_2(\widehat{\Xd},\mathbb{Z})&\\\hline
			(8)			& 0	& \mathbb{Z}			& \\\hline\hline
						&	&				& \widehat{\Xd} \\\hline
			(7,1)			& 28	& \mathbb{Z}^2			& X[3\vert 1]\\
			(6,2)			& 48	& \mathbb{Z}^2			& X[2\vert 2]\\
			(5,3)			& 60	& \mathbb{Z}^2			& X[1\vert 3]\\
			(4,4)			& 64	& \mathbb{Z}^2			& X[0\vert 4]\\\hline\hline
						&	&				& \widehat{\Xdr}\\\hline
			(5,1^3)		& 64	& \mathbb{Z}\times\mathbb{Z}_2	& X[2,0\vert 1^2]\\
			(4,2^2)			& 72	& \mathbb{Z}\times\mathbb{Z}_2	& X[1,0\vert 2,1]\\
			(3^2,1^2)		& 76	& \mathbb{Z}\times\mathbb{Z}_2	& X[1^2\vert 1^2]\\
			(3,1^5)		& 80	& \mathbb{Z}\times\mathbb{Z}_2	& X[1,0^2\vert 1^3]\\
				(2^4)		& 80	& \mathbb{Z}\times\mathbb{Z}_2	& X[0^2\vert 2^2]\\
					(1^8)	& 84	& \mathbb{Z}\times\mathbb{Z}_2	& X[0^4\vert 1^4]\\\hline
		\end{array}
	\end{align*}
	\caption{Determinantal octic double solids, their degenerations and resolutions.}
	\label{eqn:detocdos}
\end{table}
 Table~\ref{eqn:detocdos} lists the octic cases $\abs{\vec{d}}=8$, providing the number of nodes of $\Xd$ as well as $H_2(\widehat{\Xd},\mathbb{Z})$.  For the length two cases, the data for $\widehat{\Xd}$ is provided in the notation $X[\vec{a}\vert\vec{b}]$ that will be introduced in Section~\ref{sec:smallrescompint}.  In the cases of length at least three,  the data is provided for $\widehat{\Xdr}$\,.

\subsection{Degenerate determinantal double solids}
\label{sec:verysingular}
Before we study the small resolutions of our determinantal double solids $\Xd$, we will first introduce special degenerations $\Xdr$ of $\Xd$.
When the length of $\vec{d}$ is at least 3, the $\Xdr$ have additional nodes.
In Section~\ref{sec:smallrescompint} we will then explicitly construct K\"ahler small resolutions $\widehat{\Xdr}$.

We consider a symmetric $\Ad$ matrix associated to some vector $\vec{d}\in\mathbb{N}^{2n}$ as before.
However, now we will distinguish vectors that differ in the order of their entries or by entries that are zero.

We can then define the associated degenerations
\begin{align}
	A_{\vec{d}}=\left(\begin{array}{cc}
		A_{k'\times k'}&B_{l\times k'}^{T}\\
		B_{l\times k'}&C_{l\times l}
	\end{array}\right)\quad\rightarrow\quad
	A_{\vec{d}}^{\text{r}}=\left(\begin{array}{cc}
		A_{k'\times k'}&B_{l\times k'}^{T}\\
		B_{l\times k'}&0
	\end{array}\right)\,,
 \label{eqn:adegen}
\end{align}
with $l=n-1$ and $k'=n+1$ and use $\Sdr$ to refer to the corresponding surface.
We refer to the corresponding double cover $\Xdr$ as a \textit{degenerate determinantal double solid}.

\begin{lemma}
	A generic surface $\Sdr$ has isolated nodes $\SdrOne\cup \SdrTwo$, where
	{\normalfont\begin{align}
	\SdrOne=\{\,\corank\,\Adr=2\,\}\,,\quad \SdrTwo=\{\,\corank\,B_{l\times k'}=1\,\}\,,
	\end{align}}
	and is smooth everywhere else.
	\label{lemma:genericLemma}
\end{lemma}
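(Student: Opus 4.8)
The plan is to read the singular locus of the surface $\Sdr=\{\det\Adr=0\}\subset\mathbb{P}^3$ off the adjugate matrix and then to analyse, according to the corank, the local structure at the candidate singular points. Write $M=\Adr$ and work in an affine chart with coordinates $x$. A point $x_0$ lies in $\mathrm{Sing}(\Sdr)$ precisely when $\det M(x_0)=0$ and $\mathrm{tr}\!\big(\mathrm{adj}(M(x_0))\,\partial_{x_i}M(x_0)\big)=0$ for all $i$. If $\corank M(x_0)\ge 2$ then $\mathrm{adj}(M(x_0))=0$, so $x_0$ is automatically singular; here I would invoke the standard Bertini/Kleiman dimension counts over the space $\mathcal P$ of admissible entries $(A,B)$ to see that for generic data $\{\corank M\ge 2\}$ has the expected codimension $3$, that $\{\corank M\ge 3\}$ and $\{\corank M\ge 2\}\cap\{\corank B\ge 1\}$ are empty, and that $\{\corank B\ge 1\}$, $\{\corank B\ge 2\}$ have codimensions $3$ and $8$; so this case contributes exactly the finite set $\SdrOne$. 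If instead $\corank M(x_0)=1$ with $\ker M(x_0)=\langle v\rangle$, then $\mathrm{adj}(M(x_0))=\mu\,vv^{\mathrm T}$ with $\mu\ne 0$, and $x_0$ is singular iff $v^{\mathrm T}\big(\partial_{x_i}M(x_0)\big)v=0$ for all $i$.

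The decisive point is then to split $v=(v_1,v_2)$ along the block decomposition of $M$. If $v_1=0$ then $B(x_0)^{\mathrm T}v_2=0$ with $v_2\ne 0$, i.e.\ $\corank B(x_0)\ge 1$ and $x_0\in\SdrTwo$, and moreover $v^{\mathrm T}Nv=0$ for \emph{every} matrix $N$ of block shape $\bigl(\begin{smallmatrix}\ast&\ast\\ \ast&0\end{smallmatrix}\bigr)$, so every point of $\SdrTwo$ is automatically singular on $\Sdr$. If $v_1\ne 0$, I would show by an incidence-variety count that for generic data no such point exists: over a generic $(x_0,[v])$ with $v_1\ne 0$, the equations $M(x_0)v=0$ ($2n$ linear conditions on $(A,B)$, independent since the relevant evaluation map $\mathcal P\to\mathbb{C}^{2n}$ is onto once $v_1\ne0$) together with $v^{\mathrm T}\partial_{x_i}M(x_0)v=0$ for $i=1,2,3$ ($3$ further linear conditions, independent of the former because they involve derivatives rather than values of the polynomial entries) have codimension $2n+3$ in $\mathcal P$; since $(x_0,[v])$ ranges over a space of dimension $3+(2n-1)=2n+2$, the incidence variety has dimension at most $\dim\mathcal P-1$ and does not dominate $\mathcal P$. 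Combining the two cases gives $\mathrm{Sing}(\Sdr)=\SdrOne\sqcup\SdrTwo$, a finite set.

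It remains to check that each point is an ordinary double point. For $x_0\in\SdrOne$ this is the local computation already used for Lemma~\ref{lem:nodes} (\cite[Thm.~2.2]{Catanese1981}): at a corank-$2$ point of a symmetric matrix, $\det M$ is, in suitable local coordinates and up to a unit, the discriminant $ac-b^2$ of the transverse $2\times2$ symmetric block, a node once the differentials of $a,b,c$ are independent, which I would again secure by a dimension count. For $x_0\in\SdrTwo$ I would perform the new local computation. After a constant change of bases in the two factors so that the last row and column of $M(x_0)$ vanish — possible since $\corank B(x_0)=1$, and leaving the bottom-right block of $\Adr$ identically zero — in local coordinates $y$ centred at $x_0$ one gets
\begin{align}
	M(y)=\left(\begin{array}{cc}P(y)&q(y)\\ q(y)^{\mathrm T}&0\end{array}\right)\,,\qquad q(x_0)=0\,,
\end{align}
with $P(x_0)$ invertible for generic data (its determinant factors, by a Schur complement, as a nonzero square times the determinant of a transverse $3\times3$ block). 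Hence $\det M(y)=-\det(P(y))\,q(y)^{\mathrm T}P(y)^{-1}q(y)$, so locally $\Sdr=\{q(y)^{\mathrm T}P(y)^{-1}q(y)=0\}$, and writing $q(y)=Ly+O(y^2)$ this is $\{\,y^{\mathrm T}\!\big(L^{\mathrm T}P(x_0)^{-1}L\big)y+O(y^3)=0\,\}$: an ordinary double point precisely when the $3\times3$ form $L^{\mathrm T}P(x_0)^{-1}L$ is nondegenerate, equivalently when $L=d_{x_0}q$ has rank $3$ (after which nondegeneracy of the form is a further generic condition on $P(x_0)$). Finally $L$ has rank $3$ because restricting each functional $L(\xi)\in(\mathbb{C}^{n+1})^{\vee}$ to $\ker B(x_0)$ recovers the intrinsic derivative of the degeneracy locus $\SdrTwo=\{\corank B\ge1\}$ at $x_0$, which is injective since $\SdrTwo$ is smooth of the expected codimension $3$ for generic $B$.

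The genuinely new — and most delicate — part of the argument is the treatment of $\SdrTwo$: isolating these as exactly the extra singular points (the $v_1=0$ dichotomy above) and the local normal form showing they are nodes. Everything concerning $\SdrOne$ runs parallel to the non-degenerate case recorded in Lemma~\ref{lem:nodes}, and the remaining input is the package of routine genericity and dimension counts on incidence varieties over $\mathcal P$ guaranteeing that all the expected codimensions are attained and that the various transversality and nondegeneracy conditions hold simultaneously at all points of $\SdrOne$ and $\SdrTwo$; in each case $\abs{\vec d}=8$ one also checks directly that the prescribed degrees make these finite sets reduced (and nonempty where so claimed).
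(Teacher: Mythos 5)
Your proof takes a genuinely different route from the paper's and is, in outline, correct. Where you use the adjugate/Jacobian criterion, the $v_1=0$ dichotomy for the kernel, and incidence-variety dimension counts, the paper instead (i) invokes Porteous to see that $\SdrTwo$ is finite, (ii) produces by hand a basis in which $\Adr$ takes an explicit block form near $p\in\SdrTwo$ so that $\det\Adr\simeq -x_1^2-x_2^2-x_3^2$, (iii) adapts Catanese for $\SdrOne$ via the minor identities $d^{ij}_{ij}\cdot\det\Adr=b_{ii}b_{jj}-b_{ij}^2$ and $d^{ij}_{ij}=(-1)^l\tilde b_{ij}^2$ (the latter from Silvester), taking $b_{ii},b_{jj},b_{ij}$ as local coordinates, and (iv) defers the smoothness of $\Sdr\setminus(\SdrOne\cup\SdrTwo)$ to Lemma~\ref{lem:resolution}, where it follows from $\Xab$ being a smooth small resolution. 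A genuine structural advantage of your version is that the $v_1\ne 0$ incidence count disposes of all other potential singular points self-containedly, with no forward reference. Your parenthetical factorization of $\det P(x_0)$ is correct and worth making explicit: choosing $K'=K^{\perp}$ for $K=\ker\tilde B(x_0)$, the Schur complement of the $\bigl(\begin{smallmatrix}A_{K'K'}&\tilde B_{K'}^{\mathrm T}\\\tilde B_{K'}&0\end{smallmatrix}\bigr)$ corner in $P(x_0)$ gives $\det P(x_0)=\pm(\det\tilde B_{K'})^2\det A_{KK}(x_0)$, so invertibility of $P(x_0)$ is exactly $\corank\Adr(x_0)=1$, secured by your $\SdrOne\cap\SdrTwo=\emptyset$ count.

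Two points deserve tightening. First, the phrase ``an ordinary double point precisely when $L^{\mathrm T}P(x_0)^{-1}L$ is nondegenerate, equivalently when $L$ has rank $3$'' is not an equivalence: rank $3$ is necessary but not sufficient. In fact your intrinsic-derivative argument delivers more than rank $3$, and you should use it to finish directly rather than appeal to a ``further generic condition on $P(x_0)$''. Explicitly, $\ker\bigl((P(x_0)^{-1})_{11}\bigr)=\operatorname{im}\tilde B(x_0)^{\mathrm T}=K^{\perp}$, and $L\xi\in K^{\perp}$ means precisely that the functional $L\xi$ vanishes on $K=\ker B(x_0)$, which is ruled out for $\xi\ne 0$ by the injectivity of the intrinsic derivative; hence $\operatorname{im}L\cap K^{\perp}=\{0\}$ and $L^{\mathrm T}P(x_0)^{-1}L$ is nondegenerate with no additional hypothesis. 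Second, for $\SdrOne$ the statement ``which I would again secure by a dimension count'' hides the one place the degenerate shape of $\Adr$ really bites: the $(2n{-}2)\times(2n{-}2)$ minors $d^{ij}_{ij}$ of $\Adr$ contain part of the forced zero block, so Catanese's transversality cannot be invoked verbatim, and it is not automatic that $d^{ij}_{ij}(p)\ne0$. The paper resolves this by Silvester's identity $d^{ij}_{ij}=(-1)^l\tilde b_{ij}^2$ together with $\rank B(p)=l$ at a generic $p\in\SdrOne$; if you want to stay inside your incidence-variety framework, you would still need some explicit input of this sort to see that the corank-$2$ stratum meets the discriminant relation transversally despite the constrained entries.
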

\begin{proof}
	Step 1: We first show that $\SdrTwo$ are isolated nodes of $\Sdr$. It is clear that $\SdrTwo\subset\Sdr$.
	The fact that $\SdrTwo$ consists of isolated points follows from~\cite[Proposition 1.3]{Porteous1971}, see also~\cite{Harris1984}, and one furthermore obtains
	\begin{align}
		\#\SdrTwo=\int\limits_{\mathbb{P}^3}c(E')/c(F')\,,\quad E'=\bigoplus\limits_{i=1}^{k'}\mathcal{O}_{\mathbb{P}^3}(-d_i H/2)\,,\quad F'=\bigoplus\limits_{i=k'+1}^l\mathcal{O}_{\mathbb{P}^3}(d_i H/2)\,.
  \label{eqn:sd2}
	\end{align}
A simple way to interpret (\ref{eqn:sd2}) is by treating $\mathcal{O}_{\mathbb{P}^3}(aH)$ formally for any $a\in\mathbb{Q}$ and assigning a chern class
\begin{align}
c\left(\bigoplus_i\mathcal{O}_{\mathbb{P}^3}(a_iH)\right)=\prod_i\left(1+a_iH\right) \in H^*(\mathbb{P}^3,\mathbb{Q})\,.
\end{align}
While the resulting integral computing $\#\SdrTwo$ is a priori rational, the proof of the formula shows that it is an integer which correctly counts $\#\SdrTwo$.

	Given a point $p\in\SdrTwo$, we can assume that $\det\,A_{k'\times k'}\ne 0$ and choose local analytic coordinates $x_1,x_2,x_3$ on a neighbourhood $U\simeq\mathbb{C}^3$ as well as a suitable basis such that to leading order
	\begin{align}
		\Adr\simeq \left(\begin{array}{ccc|ccc|ccccc}
			1& & &&&&x_1&&&&\\
			 &1& &&&&x_2&&&&\\
			 & &1&&&&x_3&&&&\\\hline
			 & & &1&&&&1&&&\\
			 & & &&\ddots&&&&\ddots&&\\
			 & & &&&1&&&&1&\\\hline
			x_1&x_2&x_3&&&&&&&&\\
			&&&1&&&&&&&\\
			&&&&\ddots&&&&&&\\
			&&&&&1&&&&&
		\end{array}\right)\,.
	\end{align}
	The determinant is $\det\,\Adr\simeq -x_1^2-x_2^2-x_3^2$ and therefore $p$ is a node.

	Step 2: Let us now show that $\SdrOne$ are also isolated nodes of $\Sdr$. 
	We slightly modify the proof of~\cite[Theorem 2.2]{Catanese1981}.
	We denote the determinants of the $(2n-1)\times (2n-1)$ minors of $\Adr$ that are obtained by deleting the $i$-th row and the $j$-th column by $b_{ij}$.
	Similarly we denote those of the $(2n-2)\times (2n-2)$ minors with deleted rows $j,l$ and deleted columns $i,k$ by $d^{jl}_{ik}$.
	The determinants of the $l\times l$ minors of $B_{l\times k'}$ that are obtained by deleting the columns $i,j$ will be denoted by $\tilde{b}_{ij}$.

	For a given point $p\in \SdrOne$, by genericity we have that $\rank\,B_{l\times k'}=l$ so there exists at least one pair $i\ne j \le k'$ with $\tilde{b}_{i,j}(p)\ne 0$.
	By~\cite[Theorem~2]{Silvester2000} it follows that $d^{i,j}_{i,j}=(-1)^{l}\tilde{b}_{i,j}^2$, so that $d^{i,j}_{i,j}(p)\ne 0$.
	The proof of~\cite[Theorem 2.2]{Catanese1981} implies that the intersection of $b_{ii}=0$, $b_{jj}=0$, and $b_{i,j}=0$ is transverse and, according to the inverse function theorem, we can use these minors as local coordinates around $p$.
	It then follows from the relation
	\begin{align}
		d^{ij}_{ij}\cdot \text{det}\,\Adr=b_{ii}b_{jj}-b_{ij}^2\,.
		\label{eqn:minorrelation}
	\end{align}
	that $p$ is an isolated node.

	Step 3: We claim that $\Sdr\backslash (\SdrOne\cup \SdrTwo)$ is smooth. The proof is deferred to the proof of Lemma~\ref{lem:resolution}.
\end{proof}

The number $\#\SdrOne$ of nodes of the first type are again given by (\ref{lem:nodes}), as the calculation can be reformulated as
\begin{align}
	\#S_{\vec{d},1}=4\left[c_1(E)c_2(E)-c_3(E)\right]\,,\quad E=\bigoplus\limits_{i=1}^k\mathcal{O}_{\mathbb{P}^3}(d_i H/2)\,,
\end{align}
and so the degenerated form of $\Adr$ does not matter in the calculation.  

\subsection{Small resolutions by complete intersections}\label{sec:smallrescompint}
In this section we construct complete intersections in projective bundles that are small K\"ahler resolutions of the degenerate determinantal double solids introduced in Section~\ref{sec:verysingular}.

Given a vector $\vec{a}\in\mathbb{N}^n$ with $n>0$ and $\abs{a}\le 4$, we consider the rank $n+1$ vector bundle
\begin{align}
	E=\mathcal{O}_{\mathbb{P}^3}\oplus\bigoplus\limits_{i=1}^{n}\mathcal{O}_{\mathbb{P}^3}(-a_i)\,,
	\label{eqn:Ebundle}
\end{align}
on $\mathbb{P}^3$.
The inclusion $\mathcal{O}_{\mathbb{P}^3}\xhookrightarrow{} E$ induces a holomorphic section $S$ of the projective bundle $\mathbb{P}(E)$  and this allows us to identify $\mathbb{P}^3$ with a subspace of $\mathbb{P}(E)$.  We also have the projection $\pi:\mathbb{P}(E)\to \mathbb{P}^3$.
We denote the hyperplane class on $\mathbb{P}^3$ by $H$.
The tautological bundle on $\mathbb{P}(E)$ is $\mathcal{O}_{\mathbb{P}(E)}(-1)$ and the first Chern class of its inverse will be denoted by $\xi$.

Note that $\mathbb{P}(E)$ is a toric variety and the Mori cone on $\mathbb{P}(E)$ is generated by $C_F=\xi^{n-1}\cdot \pi^{*}(H)^3,\,C_B=S\cdot \pi^{*}(H)^2$ while the dual basis of the K\"ahler cone is given by $J_1=\xi,\,J_2=\pi^{*}(H)$.
The anti-canonical class is
\begin{align}
	-K_{\mathbb{P}(E)}=(n+1)J_1+\left(4-\abs{\vec{a}}\right)J_2\,.
\end{align}

Given a vector $\vec{b}\in\mathbb{N}^n$ we define the effective divisor classes
\begin{align}
	D_1=2J_1+b_1J_2\,,\quad D_{i=2,\ldots,n}=J_1+b_iJ_2\,,
\end{align}
and denote the three dimensional complete intersection of $n$ generic divisors in the corresponding linear systems by $\Xab$.
\begin{lemma}
    Given $n>0$ and  $\vec{a},\vec{b}\in\mathbb{N}^n$ with $\abs{a}\le 3$, $\abs{\vec{b}}\le 4-\abs{\vec{a}}$ and $b_i>0$, the associated complete intersection $X=\Xab$ is smooth and $H_2(X,\mathbb{Z})=\mathbb{Z}^2$.
    \label{lem:cicy}
\end{lemma}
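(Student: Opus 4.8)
The plan is to realize $\Xab$ as a complete intersection in the smooth projective toric variety $\mathbb{P}(E)$ and to verify the Bertini-type genericity and a Lefschetz-type statement. First I would recall that $\mathbb{P}(E)$ is a smooth $5$-dimensional toric variety with $\mathrm{Pic}(\mathbb{P}(E))=\mathbb{Z}^2$ generated by $J_1=\xi$ and $J_2=\pi^*(H)$, and I would check that each of the $n$ divisor classes $D_1=2J_1+b_1J_2$ and $D_i=J_1+b_iJ_2$ (for $i\ge 2$) is base-point free. Base-point freeness of $J_1=\xi$ holds because $E=\mathcal{O}_{\mathbb{P}^3}\oplus\bigoplus_i\mathcal{O}_{\mathbb{P}^3}(-a_i)$ is a direct sum of nef (indeed globally generated) line bundles with a trivial summand, so $\mathcal{O}_{\mathbb{P}(E)}(1)$ is globally generated; and $J_2=\pi^*H$ is the pullback of a base-point-free class. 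Since sums of base-point-free classes are base-point-free and $b_i>0$, each $D_1,\ldots,D_n$ is base-point free, and a generic member is smooth by Bertini, with the total intersection $X=\Xab$ smooth of dimension $5-n=3$ by Bertini applied inductively (using $\abs{\vec a}\le 3$, $\abs{\vec b}\le 4-\abs{\vec a}$ only to keep the ambient anticanonical class effective so that the relevant linear systems are nonempty; one also checks the classes have enough sections that the generic complete intersection is irreducible of the expected dimension).

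Next I would establish the Calabi-Yau condition, namely $\sum_{i=1}^n D_i = -K_{\mathbb{P}(E)}$, which gives $\sum D_i = (n+1)J_1 + (\abs{\vec b})J_2$ versus $-K_{\mathbb{P}(E)}=(n+1)J_1+(4-\abs{\vec a})J_2$; these agree precisely when $\abs{\vec a}+\abs{\vec b}=4$. (When $\abs{\vec a}+\abs{\vec b}<4$ the canonical bundle of $X$ is the restriction of $\mathcal{O}(-(4-\abs{\vec a}-\abs{\vec b})J_2)$, which is nontrivial, but the smoothness and the computation of $H_2$ go through unchanged; I would simply note the distinction and not dwell on it since the statement only asserts smoothness and $H_2(X,\mathbb{Z})=\mathbb{Z}^2$.) For the homology, I would invoke the Lefschetz hyperplane theorem for ample or nef-and-big classes: $\mathbb{P}(E)$ is simply connected and has $H_2(\mathbb{P}(E),\mathbb{Z})=\mathbb{Z}^2$, $H_i(\mathbb{P}(E),\mathbb{Z})=0$ for $i$ odd; cutting by one ample hypersurface at a time (or, more carefully, using that the $D_i$ are nef and big, and appealing to the version of the Lefschetz theorem for ample divisors after perturbing, or to the toric/Grothendieck-Lefschetz statements available for nef and big divisors on smooth projective varieties) one gets $\pi_1(X)=0$, $H_1(X,\mathbb{Z})=0$ and $H_2(X,\mathbb{Z})\xrightarrow{\sim}H_2(\mathbb{P}(E),\mathbb{Z})=\mathbb{Z}^2$ for the threefold $X$, which is the range where Lefschetz still applies. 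In particular $H_2(X,\mathbb{Z})$ is torsion-free of rank $2$.

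The main obstacle I anticipate is the Lefschetz step when some of the $D_i$ are merely nef and big rather than ample — in particular the divisors $D_i=J_1+b_iJ_2$ with small $b_i$ restricted to the section $S\cong\mathbb{P}^3$ where $\xi$ degenerates. The cleanest fix is to perturb within the nef cone: for any $\epsilon>0$ the class $D_i+\epsilon(\text{ample})$ is ample, the generic smooth member of $|D_i|$ meets $S$ transversally (which one checks directly using that $E$ has the trivial summand, so $H^0(\mathbb{P}(E),\mathcal{O}(J_1))$ already separates $S$), and a degeneration/specialization argument transports the cohomology statement. Alternatively one can argue purely toroidally: intersecting with a generic member of a base-point-free linear system on a smooth toric variety, the result is a smooth quasi-projective variety and the relevant cohomology groups can be computed from the toric data and the Koszul resolution of $\mathcal{O}_X$, which reproduces $H_2(X,\mathbb{Z})=\mathbb{Z}^2$ directly. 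I would present the argument via the ample perturbation, since it is shortest, and remark that in all the cases appearing in Table~\ref{eqn:detocdos} the classes $D_i$ are in fact ample (or the needed transversality with $S$ is immediate), so no subtlety actually arises in the examples we use.
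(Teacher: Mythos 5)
Your approach mirrors the paper's (Bertini for smoothness, Lefschetz for $H_2$), but you manufacture a difficulty at the Lefschetz step that does not exist, and the workarounds you sketch are not rigorous as stated. The paper's argument is a one-liner: on the smooth projective toric variety $\mathbb{P}(E)$ the nef cone is the full-dimensional cone spanned by $J_1=\xi$ and $J_2=\pi^*H$, and since each $D_i$ is $2J_1+b_1J_2$ or $J_1+b_iJ_2$ with $b_i>0$, every $D_i$ is a \emph{strictly} positive combination of the generators, i.e.\ lies in the interior of the nef cone. By Kleiman's criterion that interior is the ample cone, so the $D_i$ are ample (and, being nef on a smooth toric variety, globally generated). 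Ampleness makes Bertini and the usual Lefschetz hyperplane theorem apply verbatim, and iterating down from dimension $5$ to $3$ gives $H_2(X,\mathbb{Z})\simeq H_2(\mathbb{P}(E),\mathbb{Z})=\mathbb{Z}^2$.

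Concretely, your worry that ``the divisors $D_i=J_1+b_iJ_2$ with small $b_i$'' might only be nef-and-big because ``$\xi$ degenerates on $S$'' is unfounded: $b_i>0$ already places $D_i$ off the boundary ray $\mathbb{R}_{\ge0}J_1$, hence in the interior of the nef cone, hence ample, no matter how small $b_i$ is. Because of this misconception you propose an $\epsilon$-perturbation-plus-specialization argument (which would require a careful monodromy/invariant-cycle justification that you do not supply) and a ``Koszul resolution'' alternative (which computes coherent cohomology, not integral $H_2$). Neither fix is needed. Separately, your claim that $\mathcal{O}_{\mathbb{P}(E)}(1)$ is globally generated ``because $E$ is a direct sum of globally generated line bundles'' is false: $\mathcal{O}_{\mathbb{P}^3}(-a_i)$ is not globally generated for $a_i>0$. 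The correct reason global generation of the $D_i$ holds is simply that nef line bundles on a smooth projective toric variety are automatically globally generated. If you instead just record that $b_i>0$ forces the $D_i$ to be ample, the rest of your write-up collapses to the paper's proof.
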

\begin{proof}
    The classes $D_{i=1,\ldots,n}$ are contained in the K\"ahler cone of $\mathbb{P}(E)$. Since $\mathbb{P}(E)$ is a smooth toric variety, this implies that the $D_i$ are basepoint-free and ample. The claims respectively follow from Bertini's theorem and the Lefschetz hyperplane theorem.
\end{proof}

\begin{lemma}
        Consider a normalized decomposition $\vec{d}\in\mathbb{N}^{2n}$ of length greater than one.
	The Calabi-Yau threefold $X= \Xab$ with
	\begin{align}
		a_i=\frac{d_i-d_{n+1}}{2}\,,\quad b_j=\frac{d_{n+1}+d_{n+j}}{2}\,,
	\end{align}
	for $i,j=1,\ldots,n$, is a smooth and simply connected small K\"ahler resolution of $\Xdr$ with $H_2(X,\mathbb{Z})=\mathbb{Z}^2$.
	The homology classes $C^{(i)}$ of the exceptional curves that resolve the nodes $p\in\Xdr\cap \SdrOneTwo$ are independent of $p$ and satisfy $C^{(2)}=2C^{(1)}$.
	\label{lem:resolution}
\end{lemma}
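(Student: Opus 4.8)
The plan is to establish three things about $X = \Xab$ with the prescribed $\vec{a},\vec{b}$: that it is smooth, that it maps to $\Xdr$ as a small resolution, and that the exceptional curves satisfy the claimed homology relation. First I would set up the map. The bundle $E = \mathcal{O}_{\mathbb{P}^3}\oplus\bigoplus_{i=1}^n\mathcal{O}_{\mathbb{P}^3}(-a_i)$ carries the tautological inclusion $\mathcal{O}_{\mathbb{P}(E)}(-1)\hookrightarrow \pi^*E$; the $n$ defining sections of $X$ in the classes $D_1 = 2J_1 + b_1 J_2$ and $D_{i} = J_1 + b_i J_2$ can be written so that, after pushing forward along $\pi$, they encode a $2n\times 2n$ symmetric matrix of the degenerate form~\eqref{eqn:adegen}. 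Concretely, a section of $2J_1 + b_1 J_2$ is a quadratic form on the fibers of $E$ with coefficients that are polynomials on $\mathbb{P}^3$ of the appropriate degrees, while the sections of $J_1 + b_i J_2$ are fiberwise-linear. Combining the "$p^2$" piece coming from the $\mathcal{O}_{\mathbb{P}^3}$ summand with the entries built from the $\mathcal{O}_{\mathbb{P}^3}(-a_j)$ summands reproduces exactly the block structure $\begin{pmatrix} A & B^T \\ B & 0\end{pmatrix}$ with block sizes $k' = n+1$ and $l = n-1$, and the degrees match the requirement $a_i = (d_i - d_{n+1})/2$, $b_j = (d_{n+1}+d_{n+j})/2$. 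The projection $\pi|_X : X \to \mathbb{P}^3$ then factors through the double cover $\Xdr$ because a point of $\mathbb{P}(E)$ over $x\in\mathbb{P}^3$ lies on $X$ only if the symmetric form degenerates at $x$, and the fiber of $X\to\Xdr$ over a generic point is a single point (the null direction of the form), while over the nodes of $\Xdr$ it jumps.

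Second I would verify smoothness and the small-resolution property. Smoothness of $X$ does not follow from Lemma~\ref{lem:cicy} directly because the $D_i$ here are on the boundary of (or fail) the ampleness hypothesis $b_i>0,\,\abs{\vec a}\le 3$ only in edge cases, and more importantly the sections are \emph{not} generic — they are constrained to have the block-zero structure. So I would instead analyze the incidence variety locally near the two types of nodes of $\Xdr$. Away from $\SdrOne\cup\SdrTwo$ the form has corank $1$, the null direction is a reduced point in the fiber, and a Jacobian computation (using that $\Sdr$ is smooth there, which is Step 3 of Lemma~\ref{lemma:genericLemma}, deferred to here) shows $X\to\Xdr$ is an isomorphism and $X$ is smooth. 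Over a node $p\in\SdrOne$, the corank is $2$, so the null space is a $\mathbb{P}^1$ inside the fiber $\mathbb{P}(E_p)\cong\mathbb{P}^n$, giving the exceptional curve $C^{(1)}$; I would use the local normal form of $\Adr$ near such a point together with the minor relation~\eqref{eqn:minorrelation} to check that the total space is smooth along this $\mathbb{P}^1$ and that $X\to\Xdr$ contracts exactly it, i.e.\ it is a small resolution there. Over a node $p\in\SdrTwo$ the relevant degeneration is $\corank B = 1$; using the local normal form from Step 1 of Lemma~\ref{lemma:genericLemma}, where $\Adr$ looks like an identity block with a column $(x_1,x_2,x_3)$, the null direction of the quadratic form again sweeps out a $\mathbb{P}^1$ in the fiber, and a local coordinate computation shows the total space is smooth and the curve $C^{(2)}$ is contracted. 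The simple-connectedness and $H_2(X,\mathbb{Z})=\mathbb{Z}^2$ then follow from the Lefschetz hyperplane theorem applied to $\mathbb{P}(E)$ (whose $\pi_1$ is trivial and $H_2 = \mathbb{Z}^2$ generated by $J_1, J_2$), once smoothness and the complete-intersection property of correct dimension are in hand.

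Third, the homology relation $C^{(2)} = 2 C^{(1)}$. This is the conceptual heart. The idea is that both exceptional curves are lines in fibers of $\mathbb{P}(E)\to\mathbb{P}^3$, so their classes are proportional to the fiber line class $C_F = \xi^{n-1}\pi^*(H)^3$ pushed into $X$ — but the degeneracy locus $X$ meets a generic fiber $\mathbb{P}^n$ in a subvariety, and the point is how many times a line in the null space counts. Over a $\SdrOne$ node the quadratic form restricted to the relevant $2$-plane vanishes identically, so the curve is a genuine line, $C^{(1)} = [\text{line in fiber}]$. Over a $\SdrTwo$ node, the structure is different: the null direction traces a conic-like locus, or more precisely, because the relevant block that degenerates is the off-diagonal $B$ rather than a symmetric block, the intersection multiplicity of $X$ with the fiber along the exceptional curve is doubled — one sees this because near such a point $X$ is cut out by equations one of which is quadratic ($D_1 = 2J_1 + \ldots$) and vanishes to order two along the null $\mathbb{P}^1$ in the appropriate sense. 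I would make this precise by a local intersection-theoretic computation: compute $C^{(1)}\cdot \xi$, $C^{(1)}\cdot\pi^*H$, $C^{(2)}\cdot\xi$, $C^{(2)}\cdot\pi^*H$ inside $X$ and check the ratios force $C^{(2)} = 2C^{(1)}$ in $H_2(X,\mathbb{Z})\cong\mathbb{Z}^2$. Since $H_2$ is torsion-free of rank $2$, it suffices to match the two intersection numbers with $J_1$ and $J_2$; $\pi^*H$ kills both curves (they sit in fibers) so that gives no information, and the real content is that $C^{(2)}\cdot\xi = 2\,(C^{(1)}\cdot\xi)$, which is exactly the statement that the $\SdrTwo$-exceptional line has twice the degree against the fiberwise hyperplane class. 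The main obstacle, and the step I expect to require the most care, is precisely this last local multiplicity computation at the $\SdrTwo$ nodes: one must choose coordinates on $\mathbb{P}(E)$ compatible with the block decomposition, write down the $n$ defining equations of $X$ explicitly to leading order near such a node using the normal form of Lemma~\ref{lemma:genericLemma}, identify the exceptional $\mathbb{P}^1$, and carefully compute its class — getting the factor $2$ rather than $1$ is the whole point and is easy to get wrong by conflating "the curve is a line in $\mathbb{P}^n$" (true) with "its class in $X$ equals the fiber line class" (false, off by the factor coming from how $X$ sits in the fiber).
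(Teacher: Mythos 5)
Your overall strategy — identify the two types of exceptional fiber, compute their intersection numbers against $J_1=\xi$ and $J_2=\pi^*H$, and use torsion-freeness of $H_2(X,\mathbb{Z})=\mathbb{Z}^2$ to deduce $C^{(2)}=2C^{(1)}$ — is exactly right, and that is indeed how the paper finishes. But there are two concrete errors in the middle that would derail the execution.

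First, you dismiss Lemma~\ref{lem:cicy} on the grounds that the defining sections "are constrained to have the block-zero structure" and hence are not generic. This is a misreading of the correspondence. The zero block $C_{l\times l}=0$ of $\Adr$ is not a constraint on the sections of $D_1,\ldots,D_n$ — it is built into the choice of ambient bundle $E$ and divisor classes. Concretely, pushing forward, a section of $D_1=2J_1+b_1J_2$ is an element of $H^0(\mathbb{P}^3,S^2E^\vee\otimes\mathcal{O}(b_1))$, which with your $\vec a,\vec b$ is precisely the space of symmetric $(n+1)\times(n+1)$ matrices $A_{k'\times k'}$ with entries of degree $(d_i+d_j)/2$; and a section of $D_{j}=J_1+b_jJ_2$ for $j\ge2$ is an element of $H^0(\mathbb{P}^3,E^\vee\otimes\mathcal{O}(b_j))$, i.e.\ one row of $B_{l\times k'}$. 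A \emph{generic} complete intersection $\Xab$ therefore corresponds exactly to a \emph{generic} $\Adr$ of the degenerate form~\eqref{eqn:adegen}, with no extra constraint. Moreover the hypotheses of Lemma~\ref{lem:cicy} are not borderline: length $>1$ gives $d_{n+1}>0$, hence all $b_j>0$, and since $\abs{\vec a}+\abs{\vec b}=d/2\le 4$ with $\abs{\vec b}>0$ one gets $\abs{\vec a}\le 3$. So the paper applies Lemma~\ref{lem:cicy} directly to conclude smoothness and $H_2=\mathbb{Z}^2$ in one line, and then uses the resulting isomorphism $X\to\Xdr$ off the exceptional locus to \emph{deduce} smoothness of $\Sdr\setminus(\SdrOne\cup\SdrTwo)$ — so there is no circularity to worry about, and no need for the local Jacobian analysis you sketch.

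Second, your picture of the exceptional curve over $p\in\SdrTwo$ is wrong, and this is where your "local multiplicity computation" would have failed. At such a point $\ker B_{l\times k'}(p)$ is $3$-dimensional, so $\mathbb{P}(\ker B)\simeq\mathbb{P}^2$, and the quadratic form $A_{k'\times k'}(p)$ restricted to that $\mathbb{P}^2$ generically has full rank $3$: its zero locus is a \emph{smooth plane conic}, reduced and irreducible, not "a line counted twice" and not a line with doubled class. The factor of $2$ in $\xi\cdot C^{(2)}_p=2$ is therefore just the degree of a conic against the fiberwise hyperplane class — no intersection multiplicity, no non-reduced structure, no subtle local computation. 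By contrast over $p\in\SdrOne$ the corank-$2$ degeneration forces $A$ to vanish identically on the relevant $\mathbb{P}^1\subset\mathbb{P}(\ker B)$, giving a genuine line with $\xi\cdot C^{(1)}_p=1$. With those two reduced curves in hand, and $J_2\cdot C^{(i)}_p=0$ because both sit in fibers, the conclusion $C^{(2)}=2C^{(1)}$ in $H_2(X,\mathbb{Z})\simeq\mathbb{Z}^2$ is immediate as you say. The lesson is that the geometry already hands you a conic, and chasing the factor of $2$ as a multiplicity rather than as a degree is the wrong mental model — it is precisely the kind of confusion that would have produced an incorrect local calculation had you carried it out.
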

\begin{proof}
	First note that $\abs{\vec{a}}+\abs{\vec{b}}=d/2\le 4$ and therefore in particular $\abs{\vec{a}}\le 4$.
        Moreover, since the length of $\vec{d}$ is greater than one, we have $d_{n+1}>0$ and therefore $b_{j=1,\ldots,n}>0$.
        Lemma~\ref{lem:cicy} then implies that $X=\Xab$ is smooth and $H_2(X,\mathbb{Z})=\mathbb{Z}^2$.

        Let again $l=n-1$ and $k'=n+1$.
	Denoting the homogeneous coordinates on the base $\mathbb{P}^3$ and the fiber $\mathbb{P}^{n}$ respectively by $x=[x_1:\ldots:x_4]$ and $y=[y_1:\ldots:y_{k'}]$, the defining equations of $\Xab$ take the form
\begin{align}
	B_{l\times k'}(x)y=0\,,\quad y^{\,T}A_{k'\times k'}(x){y}=0\,,
 \label{eqn:Xabequations}
\end{align}
	where the entries $a_{i,j}$ of the matrix
	\begin{align}
	A_{\vec{d}}^{\text{r}}=\left(\begin{array}{cc}
		A_{k'\times k'}&B_{l\times k'}^{T}\\
		B_{l\times k'}&0
	\end{array}\right)\,,
	\end{align}
	are generic homogeneous polynomials of degree $(d_i+d_j)/2$.

 We refer to the space of solutions for $y$ in the first equation in (\ref{eqn:Xabequations}) as the kernel of $B_{l\times k'}(x)$.  This kernel is isomorphic to $\mathbb{P}^2$ if $x\in\SdrTwo$ and is isomorphic to $\mathbb{P}^1$ otherwise, linearly embedded in either case.
We see that
over a generic point $x\in\mathbb{P}^3$ the solution set of (\ref{eqn:Xabequations}) for ${y}\in\mathbb{P}^n$ consists of two points.

The resulting generic double cover over $\mathbb{P}^3$ is ramified over points $x\in\mathbb{P}^3$ where the restriction of $A_{k'\times k'}(x)$ to the kernel of $B_{l\times k'}(x)$ has itself a non-trivial kernel
\begin{align}
\begin{split}
	&\text{ker}\left(A_{k'\times k'}(x)\big\vert_{\text{ker}\,B_{l\times k'}(x)}\right)\\
 =&\{\,{y}\in\text{ker}\,B_{l\times k'}(x)\,\,\vert\,\,{z}^{\,T}A_{k'\times k'}{y}=0,\,\forall {z}\in\text{ker}\,B_{l\times k'}(x)\,\}\,.
 \end{split}
\end{align}
	In other words, there has to exist some ${y}\in\mathbb{C}^{k'}\backslash \{0\}$ and ${p}\in\mathbb{C}^l$, such that
\begin{align}
	B_{l\times k'}(x){y}=0\,,\quad A_{k'\times k'}(x){y}+B_{l\times k'}(x)^T{p}=0\,.
\end{align}
Away from the points $\SdrTwo=\{\,\corank\,B_{l\times k'}=1\,\}$, this coincides with
\begin{align}
	\Sdr=\{\,\text{det}\,\Adr=0\,\}\,,\quad 
	\Adr=\left(\begin{array}{cc}
			A_{k'\times k'}&B_{l\times k'}^T\\B_{l\times k'}&0
		\end{array}\right)\,.
\end{align}

	Over $\Sdr\backslash(\SdrOne\cup\SdrTwo)$ the solution set in $\mathbb{P}^n$ is a point.
	On the other hand, over a point $p\in\SdrOne$ one obtains a line $C^{(1)}_p$ and over a point $p\in\SdrTwo$ a smooth plane conic curve $C^{(2)}_p$, so that
        \begin{align}
            C^{(i)}_p\cdot (J_1,J_2)=(i,0)\,.
            \label{eqn:cijint}
        \end{align}
        Since $H_2(X,\mathbb{Z})=\mathbb{Z}^2$ it follows that the corresponding homology classes $C^{(i)}$ are independent of $p$ and satisfy $C^{(2)}=2C^{(1)}$.
 
	Contracting these curves produces a small resolution $\pi:\,\Xab\rightarrow \Xdr$.  More precisely, if we let $\rho:\Xdr\to \mathbb{P}^3$ be the projection, then $\rho^*(\Sd)$ is a divisor on $\Xdr$ with multiplicity 2, so that $\rho^*(\Sd)=2D$ for some divisor $D$ on $\Xdr$. Since $\mathcal{O}_{\Xdr}(\rho^*(\Sd))\simeq\mathcal{O}_{\Xdr}(dJ_2)$, we have $\mathcal{O}_{\Xdr}(D)\simeq\mathcal{O}_{\Xdr}(\frac{d}2J_2)$, and $D$ is the divisor of a section $s_D\in H^0(\Xdr,\mathcal{O}_{\Xdr}(\frac{d}2J_2)$.  The contraction map $\pi$ is then defined by
 \begin{align}
     \Xab\rightarrow \mathbb{P}\left(1,1,1,1,\frac{d}2\right),\qquad (x_1,\ldots,x_4,\vec{y})\mapsto (x_1,\ldots,x_4,s_D)\,,
 \end{align}
 where $s_D$ has been multiplied by a scalar if necessary so that the equation of the image of $\pi$ is precisely (\ref{eqn:doublesolid}).

 Since $\Xab$ is smooth and $\pi$ is an isomorphism on the complement of $\SdOne\cup\SdrTwo$ (identified with its preimage in $\Xdr$), we conclude that $\Xdr$ is smooth on the complement of $\SdOne\cup\SdrTwo$, hence $\Sdr$ is also smooth on the complement of $\SdOne\cup\SdrTwo$.
	
Simply connectedness follows from~\cite[Corollary 1.19]{Clemens1983}.
\end{proof}

Note that the special case $d=8,\vec{a}=0$ has also been discussed in~\cite{Green:1988bp,Candelas:1989ug}.

\subsection{Proof of Proposition~\ref{prop:11classes}}
\label{sec:proofProp}
\begin{proof}
	The case $l=1$ follows from Lemma~\ref{lemma:genericLemma} together with \cite[Lemma 1.23]{Clemens1983}.

	Case $l=2$: For a normalized $\vec{d}\in\mathbb{N}^2$ we note that $\Xd=\Xdr$ and the claim follows from Lemma~\ref{lem:resolution}.

	Case $l\ge 3$: We again replace $\vec{d}$ by the corresponding normalized decomposition and can then consider the degeneration $\Xdr$ and the corresponding small resolution $X=\Xab$ as in Lemma~\ref{lem:resolution}.
	We denote again the exceptional curve over $p\in \Xdr\cap \SdrOneTwo$ by $C^{(i)}_p$ and the corresponding homology classes by $C^{(i)}$.
	From Lemma~\ref{lem:resolution} it follows that $H_2(X,\mathbb{Z})=\mathbb{Z}^2$ and
	\begin{align}
		[C^{(2)}]=2[C^{(1)}]\,.
	\end{align}

	Now $X$ becomes homotopy equivalent to a small resolution $\widehat{\Xd}$ of $\Xd$ after gluing the boundary of a 3-disk to each $C^{(2)}_p$.
	As a result, $C^{(2)}$ becomes trivial while $C^{(1)}$ becomes 2-torsional.

	More explicitly, we have the Mayer-Vietoris sequence
	\begin{align}
		0\rightarrow H_3(\widehat{\Xdr},\mathbb{Z})\rightarrow H_3(\widehat{\Xd},\mathbb{Z})\rightarrow \mathbb{Z}^{\#\SdrTwo}\rightarrow H_2(\widehat{\Xdr},\mathbb{Z})\rightarrow H_2(\widehat{\Xd},\mathbb{Z})\rightarrow 0\,.
	\end{align}
	The relations $C^{(2)}_p-C^{(2)}_{p'}=\partial S^{(3)}_{p,p'}$, with $S^{(3)}_{p,p'}$ a 3-chain on $\widehat{\Xdr}$ for each $p,p'\in \SdrTwo$, reduce this to
	\begin{align}
		0\rightarrow\mathbb{Z}\rightarrow H_2(\widehat{\Xdr},\mathbb{Z})\rightarrow H_2(\widehat{\Xd},\mathbb{Z})\rightarrow 0\,,
  \label{eqn:h2fromtransition}
	\end{align}
	with the image of $\mathbb{Z}$ being precisely the subgroup generated by $C^{(2)}$.
	As a result we obtain that $H_2(\widehat{\Xd},\mathbb{Z})\simeq\mathbb{Z}\oplus\mathbb{Z}_2$.

	Using~\cite[Theorem 3.5]{Werner,WernerTranslate}, this also implies that $\widehat{\Xd}$ is not K\"ahler.

\end{proof}

\subsection{Higgs transitions in M-theory}
The geometric transition from $\widehat{\Xdr}$ to $\Xd$ which we have used in order to determine $H_2(\widehat{\Xd},\mathbb{Z})$ has a direct physical interpretation as a Higgs transition in M-theory.

As is well known from physics, see e.g.~\cite{Witten:1996qb}, the M-theory 3-form field $C_3$ can be expanded along harmonic 2-forms $\omega_{i=1,\ldots,b_2(X)}$ on a Calabi-Yau threefold $X$ as
\begin{align}
    C_3=\ldots +\sum_{i=1}^{b_2(X)}\omega^i\wedge A_i\,,
\end{align}
in order to generate $U(1)$ gauge fields $A_{i=1,\ldots,b_2(X)}$ in the five-dimensional effective theory.
Morever, charged hypermultiplets arise from M2-branes wrapping curves $C$ in the Calabi-Yau with the $U(1)^{b_2(X)}$ charges given by $q_i=\omega_i\cdot C$.

For M-theory on $\widehat{\Xdr}$ it follows from Lemma~\ref{lem:resolution} that the gauge symmetry which comes from the harmonic form respectively associated to $J_1,J_2$, is $U(1)_1\times U(1)_2$.
Using~\eqref{eqn:cijint} one also finds that the $\#\SdrOne+\#\SdrTwo$ hypermultiplets $\Phi^{(1)},\Phi^{(2)}$ associated to M2-branes wrapping exceptional curves $C^{(i)}$ have charges $(i,0)$.

Note that the hypermultiplets $\Phi^{(2)}$ only exist of $\#\SdrTwo>0$ and therefore if the length of the decomposition $\vec{d}$ is $l\ge 3$, while for $l=2$ one has $\Xd=\Xdr$.

The mass of the hypermultiplets $\Phi^{(1)},\Phi^{(2)}$ is proportional to the volume of the corresponding exceptional curves and physically we are on the Coulomb branch of the theory.
Contracting the curves, going from $\widehat{\Xdr}$ to $\Xdr$, means that we set the Coulomb branch parameter associated to $U(1)_1$ to zero so that the hypermultiplets become massless.

On the origin of the $U(1)_1$ Coulomb branch we can then turn on a vacuum expectation value for the scalar fields in $\Phi^{(1)},\Phi^{(2)}$.
Turning on a generic vacuum expectation value for the scalars in $\Phi^{(2)}$, while keeping the expectation value of scalars in $\Phi^{(1)}$ at zero, geometrically means that we deform away the nodes $\SdrTwo$ and therefore deform $\Xdr$ into $\Xd$.

Physically, the vacuum expectation value of the charge $2$ scalars breaks the gauge symmetry from $U(1)_1\times U(1)_2$ to $\mathbb{Z}_2\times U(1)_2$ and the remaining massless hypermultiplets $\Phi^{(1)}$ carry non-trivial charge under the discrete gauge symmetry.

From the perspective of M-theory on $\Xd$ we therefore find the following situation, depending on the length $l$ of the decomposition $\vec{d}$:
\begin{enumerate}
\setlength{\itemsep}{5pt}
\item $l=1$: M-theory on $X_{(8)}$ has gauge symmetry $U(1)$ and no massless charged matter.
\item $l=2$: M-theory on $\Xd\simeq \Xdr$ has gauge symmetry $U(1)_1\times U(1)_2$ and $\#\SdOne$ massless hypermultiplets with charge $(1,0)$.
\item $l\ge 3$: M-theory on $\Xd$ has gauge symmetry $\mathbb{Z}_2\times U(1)_2$ and $\#\SdOne$ massless hypermultiplets with charge $(-,0)$.
\end{enumerate}

This is the physical equivalent of Proposition~\ref{prop:11classes} and explicitly confirms our more general claim from~\cite{Katz:2022lyl} that discrete symmetries in M-theory arise from torsion in the homology of curves in small resolutions, even if the latter are non-K\"ahler.

\section{Brauer group and sheaves of Clifford algebras}
\label{sec:cliffordresolution}
We will now discuss from a mathematical perspective the Clifford non-commutative resolutions $\Xdnc$ of the determinantal octic double solids $\Xd$ associated to decompositions $\vec{d}$ of degree $d=8$ and length $l\ge 3$.
Readers that are less interested in the mathematical details can safely skip this section but might still appreciate the discussion of the representation theory of Clifford algebras in Section~\ref{sec:Clifford}.

We first discuss the general construction of the Clifford non-commutative resolutions in Section~\ref{sec:evenClifford}.
Then, in Section~\ref{sec:k4}, we focus on decompositions of length $l=3,4$ and explicitly describe how the exceptional curves in any small resolution $\widehat{\Xd}$ correspond to modules over a sheaf of Clifford algebras on $\mathbb{P}^3$.
The representation theory of the latter will be described in Section~\ref{sec:Clifford}.
The derived equivalence between $(\widehat{X},\widehat{\mathcal{B}})$ and $(\mathbb{P}^3,\mathcal{B}_0)$ as well as a conjectured generalization to decompositions of length $l=6,8$ will then be discussed in Section~\ref{sec:derived}.

\subsection{Sheaves of even Clifford algebras and Azumaya algebras}
\label{sec:evenClifford}
Consider again a normalized decomposition $\vec{d}\in\mathbb{N}^{k}$ with $k=2n$. 
If the decomposition is even, we put 
\begin{align}
    \mathcal{L}=\mathcal{O}_{\mathbb{P}^3}, \qquad W=\bigoplus_{i=1}^k\mathcal{O}_{\mathbb{P}^3}\left(-\frac{d_i}2\right)\,.
\label{eqn:Weven}
\end{align}
Otherwise, if the decomposition is odd, we put 
\begin{align}
\mathcal{L}=\mathcal{O}_{\mathbb{P}^3}(-1),\qquad W=\bigoplus_{i=1}^k\mathcal{O}_{\mathbb{P}^3}\left(\frac{1-d_i}2\right)\,.
\label{eqn:Wodd}
\end{align}
In either case, a symmetric matrix $\Ad$ can be interpreted as a map $\sigma:\,\mathcal{L}\rightarrow S^2W^\vee$.  

We denote the associated sheaf of even parts of Clifford algebras on $\mathbb{P}^3$ that has been constructed in~\cite{Kuznetsov2008} by $\mathcal{B}_0$.  As an $\mathcal{O}_{\mathbb{P}^3}$-module, we have
\begin{align}
    \mathcal{B}_0\simeq \mathcal{O}_{\mathbb{P}^3}\oplus \left(\Lambda^2W\otimes\mathcal{L} \right)\oplus\left(\Lambda^4W\otimes \mathcal{L}^2\right)\oplus \cdots \oplus \left(\Lambda^kW \otimes \mathcal{L}^{n}\right)
\,.
\end{align}
Let $p\in \mathbb{P}^3$ and denote by $\mathrm{Cl}(\Ad(p))$ the Clifford algebra associated to the quadratic form determined by the matrix $\Ad(p)$.  Then $\mathcal{B}_0\otimes\mathcal{O}_p\simeq \mathrm{Cl}^0(\Ad(p))$.

It was shown in~\cite{Kuznetsov2008} that $Z(\mathcal{B}_0)=\mathcal{O}_{\mathbb{P}^3}\oplus(\Lambda^kW \otimes \mathcal{L}^{n})$ is the center of $\mathcal{B}_0$.  
We have $\Lambda^kW \otimes \mathcal{L}^{n}\simeq\mathcal{O}_{\mathbb{P}^3}(-d/2)$, and the map $\mathcal{O}_{\mathbb{P}^3}(-d/2)\otimes \mathcal{O}_{\mathbb{P}^3}(-d/2)\to \mathcal{O}_{\mathbb{P}^3}$ used to describe multiplication in $Z(\mathcal{B}_0)$ is just multiplication by $\det(\Ad)$.  Thus $\pi:\Xd\to\mathbb{P}^3$ is the relative spectrum
\begin{align}
    \Xd=\mathrm{Spec}_{\mathcal{O}_{\mathbb{P}^3}}\left(Z\left(\mathcal{B}_0\right)\right)\,.
\end{align}
From this description, we see that $\pi_*\mathcal{O}_{\Xd}=Z(\mathcal{B}_0)$.  Since $\mathcal{B}_0$ is clearly a $Z(\mathcal{B}_0)$-module, we can view $\mathcal{B}_0$  as an $\mathcal{O}_{\Xd}$-module.  More precisely, we have a sheaf ${\mathcal{B}}$ of $\mathcal{O}_{\Xd}$-modules such that $\pi_*{\mathcal{B}}=\mathcal{B}_0$.

It was shown in \cite{Kuznetsov2008} that  ${\mathcal{B}}$  restricts to an Azumaya algebra on the complement of the nodes $\SdOne$.  However, ${\mathcal{B}}$ is not Azumaya at the nodes.  Let $f:\widehat\Xd\to \mathbb{P}^3$ be the natural map. By the isomorphism 
\begin{align}
    \Xd-\SdOne\simeq f^{-1}\left(\Xd-\SdOne\right)
\end{align}
we can view ${\mathcal{B}}\vert_{\Xd-\SdOne}$ as an Azumaya algebra on $f^{-1}\left(\Xd-\SdOne\right)$.  It would be desirable to extend this Azumaya algebra to all of $\widehat{\Xd}$.  In the next section, we explain how to do this for the length $l=3,4$ cases, with $k=4$, following \cite{Kuznetsov2013}.

\subsection{The k=4 case.}
\label{sec:k4}

In~\cite{Kuznetsov2013}, an Azumaya algebra $\widehat{\mathcal{B}}$ on $\widehat{\Xd}$ was constructed in the $k=4$ case extending 
${\mathcal{B}}\vert_{\Xd-\SdOne}$ .  It was further shown that $f_*(\widehat{\mathcal{B}})={\mathcal{B}}_0$ and the induced map 
\begin{align}
    Rf_*:D^b(\widehat\Xd,\widehat{\mathcal{B}})\to D^b(\mathbb{P}^3, \mathcal{B}_0)
\label{eq:derivedequiv}
\end{align}
is a derived equivalence with inverse
\begin{align}
  D^b(\mathbb{P}^3, \mathcal{B}_0)\to   D^b(\widehat\Xd,\widehat{\mathcal{B}}), \qquad F\mapsto Lf^*(F)\stackrel{L}{\otimes}_{\mathcal{B}_0}\widehat{\mathcal{B}}\,.
  \label{eq:inversederivedequiv}
\end{align}
If $\alpha$ is the Brauer class represented by $\widehat{\mathcal{B}}$, then if desired we can also express (\ref{eq:derivedequiv}) as
\begin{align}
    D^b(\widehat{\Xd},\alpha) = D^b(\mathbb{P}^3,\mathcal{B}_0)\,.
\end{align}

For a node $p$, let $C_p\subset\widehat{\Xd}$ be the exceptional $\mathbb{P}^1$ over $p$.  It was also shown in~\cite{Kuznetsov2013} that $\widehat{\mathcal{B}}\vert_{C_p}\simeq \underline{End}_{\mathcal{O}_{\mathbb{P}^1}}(\mathcal{O}\oplus \mathcal{O}(-1))$.  It follows that $\mathcal{O}_{C_p}\oplus \mathcal{O}_{C_p}(-1)$ is a $\widehat{\mathcal{B}}$-module supported on $C_p$.  More generally, there is a 1-1 correspondence between sheaves of $\mathcal{O}_{C_p}$-modules and $\widehat{\mathcal{B}}$-modules supported on $C_p$ given by 
\begin{align}
    F\mapsto F\otimes_{\mathcal{O}_C}\left(\mathcal{O}_{C_p}\oplus \mathcal{O}_{C_p}(-1)\right)\,.
\label{eqn:CtoB}
\end{align}
Under this correspondence, $\mathcal{O}_{C_p}$ corresponds to $\mathcal{O}_{C_p}\oplus \mathcal{O}_{C_p}(-1)$.

This correspondence allows us to make an explicit identification between the derived categories of $\widehat{\mathcal{B}}$-modules supported scheme-theoretically on $C_p$ and $\mathcal{B}_0$-modules supported supported scheme-theoretically on $p$, i.e.\ $\mathcal{B}_0\otimes\mathcal{O}_p=\mathrm{Cl}^0(\Ad(p))$-modules.  

For illustration, we start with the short exact sequence of $\mathcal{O}_{C_p}$-modules
\begin{align}
    0 \rightarrow \mathcal{O}_{C_p}(-1) \rightarrow  \mathcal{O}_{C_p} \to \mathcal{O}_q\rightarrow 0\,,
\label{eqn:P1ptses}
\end{align}
where $q\in C_p$.
We arrive at a short exact sequence of sheaves of $\widehat{\mathcal{B}}$-modules
\begin{align}
0\rightarrow\mathcal{O}_{C_p}(-1)\oplus\mathcal{O}_{C_p}(-2) \rightarrow\mathcal{O}_{C_p}\oplus\mathcal{O}_{C_p}(-1) \rightarrow\mathcal{O}_q\oplus\mathcal{O}_q(-1)\rightarrow 0\,.
\label{eqn:sesB}
\end{align}
\noindent

Applying $Rf_*$ and noting that $\mathcal{O}_{C_p}\oplus\mathcal{O}_{C_p}(-1)$ and $\mathcal{O}_q\oplus\mathcal{O}_q(-1)$ have no higher cohomology, while $\mathcal{O}_{C_p}(-1)\oplus\mathcal{O}_{C_p}(-2)$ has only $H^1$, we get an exact triangle
\begin{align}
\begin{split}
    &R^1f_*(\mathcal{O}_{C_p}(-1)\oplus \mathcal{O}_{C_p}(-2))[-1]\\
    \rightarrow& f_*(\mathcal{O}_{C_p}\oplus \mathcal{O}_{C_p}(-1)) \rightarrow f_*(\mathcal{O}_q\oplus \mathcal{O}_q(-1))\stackrel{+1}{\rightarrow}\,.
    \end{split}
\label{eqn:pointshift}
\end{align}
However, the first term is not a $\mathrm{Cl}^0(\Ad(p))$-module, only a shift of a $\mathrm{Cl}^0(\Ad(p))$-module in the derived category.  However, we can rotate (\ref{eqn:pointshift}) and arrive at a short exact sequence
\begin{align}
\begin{split}
    0&\rightarrow f_*(\mathcal{O}_{C_p}\oplus \mathcal{O}_{C_p}(-1)) \rightarrow f_*(\mathcal{O}_q\oplus \mathcal{O}_q(-1))\\
    &\rightarrow R^1f_*(\mathcal{O}_{C_p}(-1)\oplus \mathcal{O}_{C_p}(-2)) \rightarrow 0\,,
    \end{split}
\label{eqn:point}
\end{align}
of $\mathrm{Cl}^0(A(x))$-modules.  Since $H^0(C_p,\mathcal{O}_{C_p}\oplus \mathcal{O}_{C_p}(-1))$ and $H^1(C_p,\mathcal{O}_{C_p}(-1)\oplus \mathcal{O}_{C_p}(-2))$ are 1-dimensional, while $H^0(\mathcal{O}_q\oplus \mathcal{O}_q(-1))$ is 2-dimensional, we see that the three nontrivial representations of $\mathrm{Cl}^0(\Ad(p))$ in (\ref{eqn:point}) have dimensions 1,2,1 respectively.  We will identify these representations explicitly in Section~\ref{sec:Clifford}.

We can alternatively reexpress this calculation to arrive at (\ref{eqn:point}) directly.  We ``rotate" (\ref{eqn:P1ptses}) to obtain the exact triangle in $D^b(\widehat{\Xd})$

\begin{align}
\mathcal{O}_{C_p}\rightarrow \mathcal{O}_q\rightarrow \mathcal{O}_{C_p}(-1)[1]\stackrel{+1}{\rightarrow}\,,
\label{eqn:rotate}
\end{align}
which leads to the exact triangle in $D^b(\widehat{\Xd},\widehat{\mathcal{B}})$
\begin{align}
\mathcal{O}_{C_p}\oplus\mathcal{O}_{C_p}(-1) \rightarrow \mathcal{O}_q\oplus \mathcal{O}_q(-1) \rightarrow (\mathcal{O}_{C_p}(-1)\oplus\mathcal{O}_{C_p}(-2))[1]\stackrel{+1}{\rightarrow}\,.
\label{eqn:rotatedB}
\end{align}
Then the exact triangle~(\ref{eqn:point}) corresponds to~(\ref{eqn:rotatedB}) under the derived equivalence~(\ref{eq:derivedequiv}).

Since  $\mathcal{O}_{C_p}\oplus\mathcal{O}_{C_p}(-1)$ and $(\mathcal{O}_{C_p}(-1)\oplus\mathcal{O}_{C_p}(-2))[1]$ are non-isomorphic objects of $D^b(\widehat{\Xd},\widehat{\mathcal{B}})$, the 1-dimensional $\mathrm{Cl}^0(\Ad(p))$-modules $f_*(\mathcal{O}_{C_p}\oplus \mathcal{O}_{C_p}(-1))$ and $R^1f_*(\mathcal{O}_{C_p}(-1)\oplus \mathcal{O}_{C_p}(-2))$ are not isomorphic.  Furthermore (\ref{eqn:point}) does not split since (\ref{eqn:sesB}) does not split.  In particular, the 2-dimensional representation $f_*(\mathcal{O}_q\oplus \mathcal{O}_q(-1))$ of $\mathrm{Cl}^0(\Ad(p))$ is indecomposable.

If we now flop $C_p\subset\widehat{\Xd}$ to obtain another non-K\"ahler small resolution $\widehat{\Xd'}$ with exceptional curve $C_p'$, then under the derived equivalence 
\begin{align}
D^b(\widehat{\Xd})=D^b(\widehat{\Xd'})
\label{eqn:BridgelandFlop}
\end{align}
of \cite{Bridgeland}, the object $\mathcal{O}_{C_p'}$ of $D^b(\widehat{\Xd'})$ corresponds to the object $\mathcal{O}_{C_p}(-1)[1]$ of $D^b(\widehat{\Xd})$.
Referring back to (\ref{eqn:rotate}) and (\ref{eqn:point}), we see that the two 1-dimensional $\mathrm{Cl}^0(\Ad(p))$-modules described above correspond to the two small resolutions of the node $p$. 

Furthermore, given a point $q\in C_p'$, we let $E_q\in D^b(\widehat{\Xd})$ correspond to $\mathcal{O}_q\in D^b(\widehat{\Xd'})$ via (\ref{eqn:BridgelandFlop}).  Then $E_q$ fits into an exact triangle \cite{Bridgeland}
\begin{align}
\mathcal{O}_{C_p}(-1)[1]\rightarrow E_q\rightarrow \mathcal{O}_{C_p}\stackrel{+1}{\rightarrow}\,.
\end{align}
Repeating the previous argument, we conclude that $Rf_*(E_q\stackrel{L}{\otimes}(\mathcal{O}_{C_p}\oplus\mathcal{O}_{C_p}(-1)))$ is a two-dimensional representation of $\mathrm{Cl}^0(\Ad(p))$.  In perfect analogy to (\ref{eqn:point}), it is indecomposable and arises as an extension of two 1-dimensional representations: the exact same representations which appear in (\ref{eqn:point}) except that they are interchanged.

In conclusion, both small resolutions appear quite symmetrically in this description, just as they do in our proposal for torsion refined invariants.

\smallskip
For later use, we include a comment about the construction of $\widehat{\mathcal{B}}$ in \cite{Kuznetsov2013}.  The matrix $\Ad$ determines a bundle of quadric surfaces over $\mathbb{P}^3$.  Consider the relative Fano variety $F\to \mathbb{P}^3$ of lines in these quadric surfaces $Q(p)$ parametrized by $p\in \mathbb{P}^3$.  When $Q(p)$ has rank~4, i.e.\ when $p\not\in\Sd$, the fiber of the relative Fano variety over $p$ is the union of two $\mathbb{P}^1$s.  When $Q(p)$ has rank 3, i.e.\ when $p\in S-\SdOne$, the fiber over $p$ is a single $\mathbb{P}^1$. 
 These fibers fit together to form a $\mathbb{P}^1$ bundle over $\Xd-\SdOne$.  By running the relative minimal model program for $F\to\mathbb{P}^3$, Kuznetsov finds a birational transformation of $F$ (a flip) which transforms $F$ into a $\mathbb{P}^1$-bundle over $\widehat{\Xd}$.  The Azumaya algebra $\widehat{\mathcal{B}}$ is associated to this $\mathbb{P}^1$ bundle.   

Before turning to the $k>4$ case, we describe the representation theory of the Clifford algebras of interest in the next section.

\subsection{Clifford algebras and their representations}
\label{sec:Clifford}

In this section, we begin by reviewing and extending some standard results about  Clifford algebras, referring the reader to \cite{LM} for more details.  We are interested in representations of even Clifford algebras, but in our situation we can describe these representations in terms of representations of Clifford algebras by a straightforward extension of the well-known result for nondegenerate quadratic forms proven for example in \cite{LM}.

\begin{lemma}
    Let $V$ be a finite-dimensional vector space and $Q$ a quadratic form on $V$.  Let $\mathrm{Cl}(V,Q)$ denote the associated Clifford algebra.  Let $Q_1$ be a nondegenerate quadratic form on a 1-dimensional vector space $V_1$.  Then there is an isomorphism of $\mathbb{C}$-algebras $\mathrm{Cl}^0(V_1\oplus V,Q_1\oplus Q)\simeq \mathrm{Cl}(V,Q)$.
\label{lem:Cliffordisom}
\end{lemma}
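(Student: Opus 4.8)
The plan is to write down an explicit isomorphism using the universal property of the Clifford algebra, following the classical even-Clifford-algebra identity in \cite{LM} but keeping track of the fact that $Q$ is allowed to be degenerate. Fix a nonzero vector $e_0\in V_1$. Since $Q_1$ is nondegenerate on the line $V_1$, the element $e_0\in\mathrm{Cl}(V_1\oplus V,Q_1\oplus Q)$ satisfies $e_0^2=c\cdot 1$ for a nonzero scalar $c$, and because we work over $\mathbb{C}$ we may rescale $e_0$ so that $e_0^2=-1$ (this normalization is convention-independent; it is the only place where algebraic closure of the ground field is used). Moreover $e_0$ anticommutes with every $v\in V$, since $e_0\perp v$ in the orthogonal direct sum and hence $e_0v+ve_0=0$ with no lower-order correction. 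Define $\varphi\colon V\to \mathrm{Cl}^0(V_1\oplus V,Q_1\oplus Q)$ by $\varphi(v)=e_0\cdot v$; this lands in the even part because it is a product of two generators. A one-line computation gives $\varphi(v)^2=e_0ve_0v=-e_0^2v^2=v^2$, so $\varphi$ satisfies the defining relation of $\mathrm{Cl}(V,Q)$ and, by the universal property, extends uniquely to a homomorphism of $\mathbb{C}$-algebras $\Phi\colon \mathrm{Cl}(V,Q)\to \mathrm{Cl}^0(V_1\oplus V,Q_1\oplus Q)$.

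Next I would show $\Phi$ is surjective. From $(e_0v)(e_0w)=e_0ve_0w=-e_0^2vw=vw$ for $v,w\in V$ one computes $\Phi(v_1\cdots v_{2m})=v_1\cdots v_{2m}$ and $\Phi(v_1\cdots v_{2m+1})=e_0\cdot v_1\cdots v_{2m+1}$ for all $v_i\in V$. On the other hand, every monomial in the generators of $V_1\oplus V$ can be reordered using $e_0v=-ve_0$, with no lower-order terms produced since $e_0\perp V$, and then simplified with $e_0^2=-1$ to $\pm$ a monomial in the $v_i$ alone, or $e_0$ times such a monomial; the even part $\mathrm{Cl}^0(V_1\oplus V,Q_1\oplus Q)$ is spanned by exactly those of total even length, which are precisely the elements appearing on the right-hand sides above. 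Hence $\Phi$ is surjective.

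Finally a dimension count closes the argument: for any quadratic form the degree filtration on a Clifford algebra has associated graded the exterior algebra, so $\dim_{\mathbb{C}}\mathrm{Cl}(V,Q)=2^{\dim V}$ regardless of degeneracy, and likewise $\dim_{\mathbb{C}}\mathrm{Cl}^0(V_1\oplus V,Q_1\oplus Q)=\tfrac12\dim_{\mathbb{C}}\mathrm{Cl}(V_1\oplus V,Q_1\oplus Q)=\tfrac12\cdot 2^{\dim V+1}=2^{\dim V}$. A surjection between vector spaces of equal finite dimension is an isomorphism, so $\Phi$ is the asserted isomorphism. I do not expect a serious obstacle: the only substantive point beyond the nondegenerate case of \cite{LM} is that nondegeneracy of $Q$ is never used — only that of $Q_1$, which is what makes $e_0$ invertible — and the rest is bookkeeping with signs and the spanning set of $\mathrm{Cl}^0$. (Equivalently, without rescaling $e_0$ one gets $\mathrm{Cl}^0(V_1\oplus V,Q_1\oplus Q)\simeq \mathrm{Cl}(V,\lambda Q)$ with $\lambda\in\mathbb{C}^\times$, together with $\mathrm{Cl}(V,\lambda Q)\simeq\mathrm{Cl}(V,Q)$ over $\mathbb{C}$ via $v\mapsto\mu v$, $\mu^2=\lambda^{-1}$.)
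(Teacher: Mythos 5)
Your proof is correct and follows the same route as the paper: both define $\varphi(v)=e_0v$ with $e_0$ normalized so $e_0^2=-1$, verify the Clifford relation on $V$, and extend via the universal property of $\mathrm{Cl}(V,Q)$. The paper then merely asserts the map is ``readily checked to be an isomorphism upon choosing a basis for $V$,'' which is precisely the surjectivity-by-reordering plus dimension-count $2^{\dim V}$ that you write out explicitly (correctly noting that the degree filtration gives $\dim\mathrm{Cl}(V,Q)=2^{\dim V}$ even when $Q$ is degenerate).
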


\begin{proof}
    We choose $e\in V_1$ with $Q_1(e,e)=-1$ and define $\phi:V\to \mathrm{Cl}^0(V_1\oplus V,Q_1\oplus Q)$ by $\phi(v)=ev$.  Then for $v_1,v_2\in V$ we have
    \begin{equation}
        \phi(v_1)\phi(v_2)+\phi(v_2)\phi(v_1)=v_1v_2+v_2v_1=2Q(v_1,v_2)\,,
    \end{equation}
so by the universal property of $\mathrm{Cl}(V,Q)$, $\phi$ extends to a map $\mathrm{Cl}(V,Q)\rightarrow\mathrm{Cl}^0(V_1\oplus V,Q_1\oplus Q)$, which is readily checked to be an isomorphism upon choosing a basis for $V$.
\end{proof}

Clifford algebras are $\mathbb{Z}_2$-graded, and the $\mathbb{Z}_2$ grading of these Clifford algebras and their $\mathbb{Z}_2$-graded modules will simplify our analyses. Recall the notion of the $\mathbb{Z}_2$-graded tensor product of $\mathbb{Z}_2$-graded algebras $A$ and $B$, which we denote by $A\widehat\otimes B$.  As a vector space, it is the usual tensor product $A\otimes B$ with the grading
\begin{align}
\left(A\widehat\otimes B\right)^0=\left(A^0\otimes B^0\right)\oplus \left(A^1\otimes B^1\right)\qquad
\left(A\widehat\otimes B\right)^1=\left(A^0\otimes B^1\right)\oplus \left(A^1\otimes B^0\right)\,,
\end{align}
with a sign introduced in the product
\begin{align}
    (a_1\otimes b_1)(a_2\otimes b_2)=(-1)^{\deg(b_1)\deg(a_2)}(a_1a_2)\otimes (b_1b_2),\  a_1,a_2\in A,\ b_1,b_2\in B\,.
\end{align}

Similarly, if $M$ is a $\mathbb{Z}_2$-graded right $A$-module and $N$ is a $\mathbb{Z}_2$-graded right $B$-module, we can define the $\mathbb{Z}_2$-graded right $A\widehat\otimes B$-module $M\widehat\otimes N$ by imposing an analogous sign on the natural action of $A\widehat\otimes B$ on $M\widehat\otimes N$.
\begin{align}
    (m\otimes n)\cdot(a\otimes b)=(-1)^{\deg{n}\deg{a}}(m\cdot a)\otimes (n\cdot b)\,.
\end{align}
On occasion, we will omit ``$\mathbb{Z}_2$" from terminology and simply write graded tensor products or graded modules, when the $\mathbb{Z}_2$ is clear in context.

The utility of the $\mathbb{Z}_2$ grading arises from the following lemma.
\begin{lemma}
Let $(V_1,Q_1)$ and $(V_2,Q_2)$ be finite-dimensional vector spaces with quadratic forms.  Then
\begin{align}
\mathrm{Cl}(V_1\oplus V_2,Q_1\oplus Q_2)\simeq \mathrm{Cl}(V_1,Q_1)\widehat\otimes \mathrm{Cl}(V_2,Q_2)\,.
\end{align}
\label{lem:cliffsum}
\end{lemma}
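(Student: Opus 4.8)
The plan is to follow the classical argument for nondegenerate forms (as in \cite{LM}), producing the isomorphism from the universal property of the Clifford algebra and then finishing with a dimension count, being careful that it is the $\mathbb{Z}_2$-graded tensor product that appears. First I would define a linear map $\phi:V_1\oplus V_2\to \mathrm{Cl}(V_1,Q_1)\widehat\otimes \mathrm{Cl}(V_2,Q_2)$ by $\phi(v_1,v_2)=v_1\otimes 1+1\otimes v_2$, noting that both summands lie in the degree-$1$ part of the graded tensor product. The universal property of $\mathrm{Cl}(V_1\oplus V_2,Q_1\oplus Q_2)$ reduces everything to checking that $\phi(v)^2$ equals the value of the form $Q_1\oplus Q_2$ on $v$ times $1\otimes 1$ for every $v\in V_1\oplus V_2$.

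That computation is the heart of the proof. Writing $\phi(v_1,v_2)^2$ as a sum of four terms: the ``diagonal'' terms $(v_1\otimes 1)^2=v_1^2\otimes 1$ and $(1\otimes v_2)^2=1\otimes v_2^2$ contribute $Q_1(v_1,v_1)$ and $Q_2(v_2,v_2)$ (times $1\otimes 1$), while the two cross terms $(v_1\otimes 1)(1\otimes v_2)$ and $(1\otimes v_2)(v_1\otimes 1)$ are computed from the graded product rule to be $v_1\otimes v_2$ and $(-1)^{\deg(v_2)\deg(v_1)}v_1\otimes v_2=-v_1\otimes v_2$, hence cancel. So $\phi$ extends to an algebra homomorphism $\Phi:\mathrm{Cl}(V_1\oplus V_2,Q_1\oplus Q_2)\to \mathrm{Cl}(V_1,Q_1)\widehat\otimes \mathrm{Cl}(V_2,Q_2)$, which is automatically a map of $\mathbb{Z}_2$-graded algebras since it sends degree-$1$ generators to degree-$1$ elements. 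Then I would observe that the image of $\Phi$ contains $v_1\otimes 1=\Phi(v_1,0)$ and $1\otimes v_2=\Phi(0,v_2)$, and products of such elements span the target, so $\Phi$ is surjective; since $\dim_{\mathbb{C}}\mathrm{Cl}(V,Q)=2^{\dim V}$ for any quadratic form (the Clifford filtration has associated graded $\Lambda^\bullet V$ whether or not $Q$ is degenerate), both sides have dimension $2^{\dim V_1+\dim V_2}$ and a surjection of finite-dimensional vector spaces of equal dimension is an isomorphism.

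I do not expect a genuine obstacle; the only points needing care are exactly the two places where the $\mathbb{Z}_2$-grading intervenes. First, the sign $(-1)^{\deg(v_2)\deg(v_1)}=-1$ in the graded product is precisely what forces $V_1\otimes 1$ and $1\otimes V_2$ to anticommute, matching the fact that $V_1$ and $V_2$ anticommute inside $\mathrm{Cl}(V_1\oplus V_2,Q_1\oplus Q_2)$ because $Q_1\oplus Q_2$ pairs them trivially — this is why the ordinary tensor product would not work. Second, I must use the dimension formula $2^{\dim V}$ in the possibly degenerate case, since in the applications of this lemma one of the forms will be the (often degenerate) quadratic form associated to $\Ad(p)$ at a node. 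As an optional alternative to the dimension count one can write down the inverse on generators directly and check it is well-defined, but the surjectivity-plus-dimension argument is the cleanest.
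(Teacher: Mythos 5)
Your proof is correct and is the standard argument (as in Lawson--Michelsohn, the reference \cite{LM}); the paper states this lemma without proof, implicitly treating it as a basic Clifford-algebra fact. You correctly flag both places where degeneracy of $Q$ or the $\mathbb{Z}_2$-grading might cause concern: the Koszul sign in the graded product is exactly what makes $V_1\otimes 1$ and $1\otimes V_2$ anticommute, matching the orthogonality of $V_1$ and $V_2$ under $Q_1\oplus Q_2$ (an ordinary tensor product would give commutativity and the universal-property check would fail), and the dimension formula $\dim\mathrm{Cl}(V,Q)=2^{\dim V}$ persists for degenerate $Q$ via the filtered-to-graded argument, which is genuinely needed here since the lemma is applied with one factor $\mathrm{Cl}(V_2,0)$ at a node of $\Sd$.
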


We will also make use of a result about representations of Clifford algebras, generalizing a result in \cite{LM}.  
\begin{lemma}
Let $Q$ be a nonzero quadratic form on a finite-dimensional vector space $V$. Then there is an equivalence of categories between the category of right $\mathbb{Z}_2$-graded $\mathrm{Cl}(V,Q)$-modules and the category of (ungraded) right $\mathrm{Cl}^0(V,Q)$-modules, which associates to a $\mathbb{Z}_2$-graded $\mathrm{Cl}(V,Q)$-module $M$ the $\mathrm{Cl}^0(V,Q)$-module $M^0$, and to a $\mathrm{Cl}^0(V,Q)$-module $M^0$ the $\mathbb{Z}_2$-graded $\mathrm{Cl}(V,Q)$ module $M=M^0\otimes_{\mathrm{Cl}^0(V,Q)}\mathrm{Cl}(V,Q)$.
\label{lem:gradungrad}
\end{lemma}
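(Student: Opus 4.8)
The plan is to write down the two functors explicitly and to check that the induced unit and counit are natural isomorphisms; the only real input is that $Q\neq 0$ forces $\mathrm{Cl}(V,Q)$ to be \emph{strongly} $\mathbb{Z}_2$-graded, i.e.\ $\mathrm{Cl}^1(V,Q)\cdot\mathrm{Cl}^1(V,Q)=\mathrm{Cl}^0(V,Q)$, so that its odd part is an invertible bimodule over the even part. Concretely: since $Q\neq 0$ there is $v\in V$ with $c:=Q(v,v)\neq 0$ (otherwise $Q$ vanishes identically by polarization). In $\mathrm{Cl}(V,Q)$ one has $v^2=c$, so $v$ is a unit with $v^{-1}=c^{-1}v$, and $v$ is odd. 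Since the product of two odd elements is even, $w\mapsto wv^{-1}$ maps $\mathrm{Cl}^1(V,Q)$ into $\mathrm{Cl}^0(V,Q)$, and from $w=(wv^{-1})v$ we obtain $\mathrm{Cl}^1(V,Q)=\mathrm{Cl}^0(V,Q)\,v=v\,\mathrm{Cl}^0(V,Q)$. In particular $\mathrm{Cl}^1(V,Q)$ is free of rank one, as a left and as a right $\mathrm{Cl}^0(V,Q)$-module, generated by the unit $v$, and $\mathrm{Cl}(V,Q)=\mathrm{Cl}^0(V,Q)\oplus\mathrm{Cl}^0(V,Q)\,v$.

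Next I would set up the functors between the category of $\mathbb{Z}_2$-graded right $\mathrm{Cl}(V,Q)$-modules (with degree-preserving morphisms) and the category of right $\mathrm{Cl}^0(V,Q)$-modules. Let $F$ send $M=M^0\oplus M^1$ to $M^0$, which is a right $\mathrm{Cl}^0(V,Q)$-module because $M^0\cdot\mathrm{Cl}^0(V,Q)\subseteq M^0$, and send a degree-preserving morphism to its restriction to degree $0$. Let $G$ send $N$ to $N\otimes_{\mathrm{Cl}^0(V,Q)}\mathrm{Cl}(V,Q)$, graded via the decomposition $\mathrm{Cl}(V,Q)=\mathrm{Cl}^0(V,Q)\oplus\mathrm{Cl}^1(V,Q)$, so that $G(N)^0=N\otimes_{\mathrm{Cl}^0(V,Q)}\mathrm{Cl}^0(V,Q)=N$ and $G(N)^1=N\otimes_{\mathrm{Cl}^0(V,Q)}\mathrm{Cl}^1(V,Q)$; right multiplication on the second factor is a graded $\mathrm{Cl}(V,Q)$-action because $\mathrm{Cl}^i(V,Q)\,\mathrm{Cl}^j(V,Q)\subseteq\mathrm{Cl}^{i+j}(V,Q)$.

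It then remains to produce the two natural isomorphisms. For $F\circ G$ this is immediate: $F(G(N))=\bigl(N\otimes_{\mathrm{Cl}^0(V,Q)}\mathrm{Cl}(V,Q)\bigr)^0=N\otimes_{\mathrm{Cl}^0(V,Q)}\mathrm{Cl}^0(V,Q)=N$, naturally in $N$. For $G\circ F$ the counit is the graded $\mathrm{Cl}(V,Q)$-module map $\varepsilon_M\colon M^0\otimes_{\mathrm{Cl}^0(V,Q)}\mathrm{Cl}(V,Q)\to M$, $m\otimes a\mapsto ma$, which is well defined, degree preserving and natural in $M$; one checks it is an isomorphism in each degree. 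In degree $0$ it is $M^0\otimes_{\mathrm{Cl}^0(V,Q)}\mathrm{Cl}^0(V,Q)=M^0\xrightarrow{\ =\ }M^0$. In degree $1$ it factors as $M^0\otimes_{\mathrm{Cl}^0(V,Q)}\mathrm{Cl}^1(V,Q)\xrightarrow{\ \sim\ }M^0\xrightarrow{\ \cdot v\ }M^1$, where the first arrow is induced by the left $\mathrm{Cl}^0(V,Q)$-module isomorphism $\mathrm{Cl}^0(V,Q)\xrightarrow{\ \sim\ }\mathrm{Cl}^1(V,Q)$, $a\mapsto av$, and the second is an isomorphism with inverse $w\mapsto wv^{-1}$ (well defined since $v^{-1}$ is odd, so $M^1\cdot v^{-1}\subseteq M^0$). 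Hence $\varepsilon_M$ is an isomorphism of graded $\mathrm{Cl}(V,Q)$-modules.

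The only nontrivial point, and the place where the hypothesis enters, is the structural step of the first paragraph: $Q\neq 0$ is exactly what makes $\mathrm{Cl}^1(V,Q)$ a free rank-one $\mathrm{Cl}^0(V,Q)$-module generated by a unit. (If $Q=0$ then $\mathrm{Cl}(V,Q)=\Lambda^{\bullet}V$ is not strongly graded and the statement fails already for $\dim V=1$, where $\mathbb{Z}_2$-graded $\mathbb{C}[v]/(v^2)$-modules are not equivalent to $\mathbb{C}$-vector spaces.) For $Q$ nondegenerate this recovers the classical result of \cite{LM}; the content here is only that nondegeneracy may be weakened to $Q\neq 0$. Everything else is routine bookkeeping with the $\mathbb{Z}_2$-grading.
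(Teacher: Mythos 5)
Your proof is correct and takes essentially the same approach as the paper: both arguments hinge on choosing $v\in V$ with $Q(v,v)\neq 0$ so that right multiplication by $v$ (with inverse $Q(v,v)^{-1}v$) gives isomorphisms between the even and odd components of $\mathrm{Cl}(V,Q)$ and of any graded module. Your phrasing via $\mathrm{Cl}^1(V,Q)=\mathrm{Cl}^0(V,Q)\,v$ being free of rank one over a unit (the ``strongly graded'' viewpoint) is a minor repackaging of the paper's commutative-diagram step with vertical maps $1\otimes(\cdot v)$ and $\cdot v$.
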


Lemma~\ref{lem:gradungrad} is false for $Q=0$.  For example, we will construct a 1-dimensional representation $\Pi\textbf{1}$ of $\mathrm{Cl}(\mathbb{C},0)$ below with $(\Pi\textbf{1})^0=0$ and $(\Pi\textbf{1})^1=\mathbb{C}$, so that $(\Pi\textbf{1})^0\otimes_{\mathrm{Cl}^0(V,Q)}\mathrm{Cl}(V,Q)$ is 0 instead of $\Pi\textbf{1}$.

\begin{proof}
The proof of the lemma in the case of a nondegenerate quadratric form in \cite{LM} works in this more general situation. We just have to show that the functors are inverse to each other.
Suppose that $M^0$ is a $\mathrm{Cl}^0(V,Q)$-module.  Then
\begin{align}
    \left(M^0\otimes_{\mathrm{Cl}^0(V,Q)}\mathrm{Cl}(V,Q)\right)^0\simeq M^0\otimes_{\mathrm{Cl}^0(V,Q)}\mathrm{Cl}^0(V,Q)\simeq M^0\,.
\end{align}
In the other direction, let $M=M^0\oplus M^1$ be a $\mathrm{Z}_2$-graded $\mathrm{Cl}(V,Q)$ module.  We have to show that the natural map of $\mathrm{Z}_2$-graded $\mathrm{Cl}(V,Q)$ modules
\begin{align}
M^0\otimes_{\mathrm{Cl}^0(V,Q)}\mathrm{Cl}(V,Q)\rightarrow M,\qquad m\otimes \omega\mapsto m\cdot \omega
\end{align}
is an isomorphism, and for that we just have to check that the even and odd parts are isomorphisms.  The even part
\begin{align}
M^0\otimes_{\mathrm{Cl}^0(V,Q)}\mathrm{Cl}^0(V,Q)\rightarrow M^0
\end{align}
is clearly an isomorphism.  

For the odd part, let $v\in V$ be such that $Q(v)\ne0$.  Right multiplication by $v$ gives $\mathbb{C}$-linear maps $M^i\to M^{i+1}$ and $\mathrm{Cl}^i(V,Q)\to \mathrm{Cl}^{i+1}(V,Q)$
for $i\in \mathbb{Z}_2$.  Then the maps $M^0\to M^1$ and $M^1\to M^0$ are isomorphisms of vector spaces, since their compositions are multiplication by $Q(v)$.  Similarly, $\mathrm{Cl}^0(V,Q)\to \mathrm{Cl}^{1}(V,Q)$ is an isomorphism.  Now consider the commutative diagram of right $\mathrm{Cl}^0(V,Q)$-modules and $\mathrm{Cl}^0(V,Q)$-module homomorphisms

\begin{equation}
\begin{array}{ccc}
M^0\otimes_{\mathrm{Cl}^0(V,Q)}\mathrm{Cl}^0(V,Q)&\rightarrow& M^0\\
{1\otimes \cdot v}\downarrow \phantom{1\otimes \cdot v}&&\phantom{\cdot v}\downarrow \cdot v\\
M^0\otimes_{\mathrm{Cl}^0(V,Q)}\mathrm{Cl}^1(V,Q)&\rightarrow &M^1
\end{array}
\end{equation}
Since the vertical maps and the top map are isomorphisms, the bottom map is as well.
\end{proof}

Note that under this correspondence, the dimension of a $\mathbb{Z}_2$-graded $\mathrm{Cl}(V,Q)$-module is twice the dimension of the corresponding $\mathrm{Cl}^0(V,Q)$-module. 

\medskip
Our main interest is in sheaves of (right) $\mathcal{B}_0$-modules, or equivalently, of ${\mathcal{B}}$-modules.  In Section~\ref{sec:derived}, we attempt to generalize the $k=4$ case by suggesting a derived equivalence under which points of small resolutions of $\Xd$ correspond to $\mathcal{B}_0$-modules which are (scheme-theoretically) supported at a point $p\in\mathbb{P}^3$, with stalk of dimension $2^{n-1}$ (recall that $k$ is even and $n=k/2$).  In other words, we set out to classify $2^{n-1}$-dimensional $\mathcal{B}_0\otimes\mathcal{O}_p=\mathrm{Cl}^0(\Ad(p))$-modules, or equivalently $2^{n}$-dimensional $\mathbb{Z}_2$-graded $\mathrm{Cl}(\Ad(p))$-modules.

We denote the Clifford algebra of a maximal rank quadratic form on an $m$-dimensional vector space by $\mathrm{Cl}_m$.  As is well-known, the representation theory of $\mathrm{Cl}_m$ depends on the parity of $m$:

\medskip
\begin{tabular}{l}
     $\mathrm{Cl}_{2r}$ has a unique irreducible representation $S_{2r}$ of dimension $2^r$, \\[.2em]
     $\mathrm{Cl}_{2r+1}$ has two irreducible representations $S_{2r+1}',\ S_{2r+1}''$ of dimension $2^r$.
\end{tabular}
\medskip

By Lemma~\ref{lem:Cliffordisom} we have $\mathrm{Cl}_m^0\simeq \mathrm{Cl}_{m-1}$.  If $m=2r$ is even, then using this identification together with Lemma~\ref{lem:gradungrad}, we get two inequivalent minimal $\mathrm{Z}_2$-graded $\mathrm{Cl}_{2r}$-modules $\widehat{S}'_{2r}:=S'_{2r-1}\otimes_{\mathrm{Cl}^0_{2r}}\mathrm{Cl}_{2r}$ and $\widehat{S}''_{2r}:=S''_{2r-1}\otimes_{\mathrm{Cl}^0_{2r}}\mathrm{Cl}_{2r}$.  If $m=2r+1$ is odd, we get a unique minimal $\mathrm{Z}_2$-graded $\mathrm{Cl}_{2r+1}$-module $\widehat{S}_{2r+1}:=S_{2r}\otimes_{\mathrm{Cl}^0_{2r+1}}\mathrm{Cl}_{2r+1}$.  Thus
\medskip
\begin{tabular}{l}
     $\mathrm{Cl}_{2r}$ has two irreducible $\mathbb{Z}_2$-graded representations $\widehat{S}'_{2r},\widehat{S}''_{2r}$ of dimension $2^r$, \\[.2em]
     $\mathrm{Cl}_{2r+1}$ has a unique irreducible $\mathbb{Z}_2$-graded representation $\widehat{S}_{2r+1}$ of dimension $2^{r+1}$.
\end{tabular}
\medskip
Given a $\mathbb{Z}_2$-graded module $M=M^0\oplus M^1$ over any $\mathbb{Z}_2$-graded ring $A$, we define the $\mathbb{Z}_2$-graded $A$-module $\Pi M$ by switching the parities, i.e.
\begin{align}
    \left(\Pi M\right)^0=M^1,\qquad \left(\Pi M\right)^1=M^0\,.
\end{align}
For later use, we note here that $\Pi\widehat{S}_{2r+1}\simeq \widehat{S}_{2r+1}$, since there is only one graded representation of dimension $2^{r+1}$.  Also, $\widehat{S}''_{2r}\simeq\Pi\widehat{S}'_{2r}$ as $\mathbb{Z}_2$-graded modules.  This can be checked directly for $r=1$, and then the statement follows inductively using $\mathrm{Cl}_{2r}\simeq \mathrm{Cl}_{2}\widehat{\otimes}\mathrm{Cl}_{2r-2}$.

We can now describe the $2^{n-1}$-dimensional $\mathcal{B}_0\otimes\mathcal{O}_p=\mathrm{Cl}^0(\Ad(p))$-modules in terms of their associated $\mathbb{Z}_2$-graded $2^n$-dimensional $\mathrm{Cl}(\Ad(p))$-modules.

\smallskip
If $\rank(\Ad(p))=k$, i.e.\ if $p\not\in \Sd$, then $\mathrm{Cl}(\Ad(p))\simeq\mathrm{Cl}_k$.
We then have the $2^n$-dimensional graded modules $\widehat{S}'_k$ and $\widehat{S}''_k$ which we identify with the two points of $\pi^{-1}(p)$ in $\Xd$.   We will see this identification and similar identifications below more explicitly in the $k=4$ case.

\smallskip
If $\rank(\Ad(p))=k-1$, i.e.\ if $p$ is a smooth point of $\Sd$, by Lemma~\ref{lem:cliffsum} we have $\mathrm{Cl}(\Ad(p))\simeq \mathrm{Cl}_{k-1}\widehat\otimes\mathrm{Cl}(V_1,0)$ with $\dim V_1=1$.
Given a $\mathbb{Z}_2$-graded $\mathrm{Cl}_{k-1}$-module $M$ and a $\mathbb{Z}_2$-graded $\mathrm{Cl}(V_1,0)$-module $N$, then $M\widehat\otimes N$ is a $\mathbb{Z}_2$-graded $ \mathrm{Cl}_{k-1}\widehat\otimes\mathrm{Cl}(V_1,0)$-module.

Now  $\mathrm{Cl}_{k-1}$ has the irreducible $\mathbb{Z}_2$-graded representation $M=\widehat{S}_{k-1}$,  which has dimension $2^n$, so we have to take $N$ to be 1-dimensional in order to get a $2^n$-dimensional graded representation of $\mathrm{Cl}(\Ad(p))$.  Since $V_1$ is nilpotent in $\mathrm{Cl}(V_1,0)$, it must act trivially on $N$, which determines the entire $\mathrm{Cl}(V_1,0)$-module structure on $N$. We give $N$ the structure of a $\mathbb{Z}_2$-graded module by assigning it a parity.  Let $\textbf{1}$ be this 1-dimensional representation with even parity.  Then the other graded $\mathrm{Cl}(V_1,0)$-module is $\Pi\textbf{1}$.  We then get $\mathbb{Z}_2$-graded $\mathrm{Cl}_{k-1}\widehat\otimes\mathrm{Cl}(V_1,0)$ modules $\widehat{S}_{k-1}\widehat\otimes \textbf{1}$ and $\widehat{S}_{k-1}\widehat\otimes (\Pi\textbf{1})$.  We show that these are isomorphic as follows.  The graded $\mathrm{Cl}_{k-1}\widehat\otimes\mathrm{Cl}(V_1,0)$-module structures on $\widehat{S}_{k-1}\widehat\otimes \textbf{1}$ and $\widehat{S}_{k-1}\widehat\otimes (\Pi\textbf{1})$ are completely determined by the action of the subring $\mathrm{Cl}_{k-1}\widehat\otimes\mathbb{C}\simeq \mathrm{Cl}_{k-1}\otimes\mathbb{C}\simeq \mathrm{Cl}_{k-1}$, as $V_1$ acts trivially.  As $\mathrm{Cl}_{k-1}$-modules, they are both irreducible, hence both are isomorphic to the unique irreducible $\mathbb{Z}_2$-graded module $\widehat{S}_{k-1}$:
\begin{align}
\widehat{S}_{k-1}\widehat\otimes \textbf{1}\simeq \widehat{S}_{k-1}\simeq \widehat{S}_{k-1}\widehat\otimes \left(\Pi\textbf{1}\right)\,.
\end{align}
So there is only one graded representation of $\mathrm{Cl}_{k-1}\widehat\otimes\mathrm{Cl}(V_1,0)$, which 
 we identify with the unique point of $\pi^{-1}(p)$. 

\smallskip
Finally, we assume $k\ge4$  and consider the case $\rank(\Ad(p))=k-2$, i.e.\ $p$ is a node of $\Sd$ and $\Xd$.  We have $\mathrm{Cl}(\Ad(p))\simeq \mathrm{Cl}_{k-2}\widehat\otimes \mathrm{Cl}(V_2,0)$ for a 2-dimensional vector space $V_2$.  As with $V_1$, we have 1-dimensional representations of $\mathrm{Cl}(V_2,0)$ on which $V_2$ acts trivially.  Assigning this representation even parity, we again call this representation $\textbf{1}$.  Then we have $2^{n-1}$-dimensional graded representations
\begin{align}
    \widehat{S}'_{k-2}\widehat\otimes\textbf{1},\ \widehat{S}'_{k-2}\widehat\otimes\left(\Pi\textbf{1}\right),\ \widehat{S}''_{k-2}\widehat\otimes\textbf{1},\ \widehat{S}''_{k-2}\widehat\otimes\left(\Pi\textbf{1}\right)\,.
\end{align}

We claim
\begin{align}
 \widehat{S}'_{k-2}\widehat\otimes\textbf{1}\simeq \widehat{S}''_{k-2}\widehat\otimes\left(\Pi\textbf{1}\right) ,\qquad \widehat{S}'_{k-2}\widehat\otimes\left(\Pi\textbf{1}\right)\simeq\widehat{S}''_{k-2}\widehat\otimes\textbf{1}\,,
 \label{eqn:gradedisos}
\end{align}
so there are only two inequivalent representations among these. It suffices to show that these are isomorphisms of graded $\mathrm{Cl}_{k-2}$-modules. We only explain the first isomorphism as the second is similar.    We have $\widehat{S}'_{k-2}\widehat\otimes\textbf{1}\simeq \widehat{S}'_{k-2}$ as $\mathrm{Cl}_{k-2}$-modules. 
But we can only see that $\widehat{S}''_{k-2}\widehat\otimes\left(\Pi\textbf{1}\right)\simeq \Pi\widehat{S}''_{k-2}$ as $\mathbb{Z}_2$-graded vector spaces without a further argument, as the actions of $a\in\mathrm{Cl}_{k-2}$ on these two $\mathbb{Z}_2$-graded vector spaces differ by the sign $(-1)^{\deg(a)}$ due to the definition of the module structure on the $\mathbb{Z}_2$-graded tensor product.  Now let $e_1,\ldots,e_{k-2}$ be an orthonormal basis for $\mathbb{C}^{k-2}$ and put $\epsilon_{k-2}=e_1\cdots e_{k-2}$. Since $k-2$ is even and the $e_i$ anticommute with $\epsilon_{k-2}$, we see that multiplication by $\epsilon_{k-2}$ defines an even linear map
\begin{align}
     \Pi\widehat{S}''_{k-2} \rightarrow \Pi\widehat{S}''_{k-2}\,,
\end{align}
which is an isomorphism of $\mathbb{Z}_2$-graded $\mathrm{Cl}_{k-2}$-modules if one module has the standard $\mathrm{Cl}_{k-2}$-module structure while in the other module structure, multiplication by $a\in\mathrm{Cl}_{k-2}$ is modified by the sign $(-1)^{\deg(a)}$ due to the above-mentioned anticommutativity.  Thus, we have isomorphisms of $\mathbb{Z}_2$-graded $\mathrm{Cl}_{k-2}$-modules
\begin{align}
   \widehat{S}''_{k-2}\widehat\otimes\left(\Pi\textbf{1}\right)\simeq \Pi\widehat{S}''_{k-2}\simeq  \widehat{S}'_{k-2}\,,
   \label{eqn:parityisos}
\end{align}
and $\widehat{S}'_{k-2}\widehat\otimes\textbf{1}\simeq \widehat{S}''_{k-2}\widehat\otimes\left(\Pi\textbf{1}\right)$ as claimed.

We now turn to our main interest, $2^n$-dimensional graded representations of $\mathrm{Cl}_{k-2}\widehat\otimes\mathrm{Cl}(V_2,0)$.  Since the irreducible graded representations of $\mathrm{Cl}_{k-2}$ have dimension $2^{n-1}$, we seek $2$-dimensional graded representations of $\mathrm{Cl}(V_2,0)$ to tensor with.  While we can take direct sums of the 1-dimensional graded representations $\textbf{1}$ and $\Pi\textbf{1}$, the more interesting representations are 2-dimensional indecomposable graded representations $P=P^0\oplus P^1$.  The action of $v\in V_2$ on $P$ is given by linear maps
\begin{align}
P^0\rightarrow P^1,\qquad P^1\rightarrow P^0\,,
\end{align}
either of which we denote by $r_v$.  For $v_1,v_2\in V_2$, $r_{v_1}$ and $r_{v_2}$ anticommute since $v_1,v_2$ anticommute in $\mathrm{Cl}(V_2,0)$.  In particular $r_v^2=0$.  If all $r_v=0$, then it is easy to see that $P$ is a direct sum of 1-dimensional representations.  So we assume that $r_v\ne 0$ for some $v$.  It is straightforward to check that $r_w$ must be a scalar multiple of $r_v$ for any $w\in V_2$. 
Therefore the linear map
\begin{align}
    V_2\rightarrow \mathrm{End}(P),\ v\mapsto r_v\,,
\end{align}
has 1-dimensional image, hence 1-dimensional kernel $U\subset V_2$.  Furthermore, it follows that the action of $\Lambda^2V_2\subset\Lambda^*V_2\simeq\mathrm{Cl}(V_2,0)$ on $P$ is trivial.  These facts determine $P$ completely as a $\mathrm{Cl}(V_2,0)$-module, and we denoted it by $P_U$.  If desired, we can describe $P_U$ explicitly as the ideal
\begin{equation}
    P_U=U\oplus\Lambda^2V_2\subset \mathrm{Cl}(V_2,0)\,.
\end{equation}
We can and will give $P_U$ the structure of a graded module by putting $P^0=\Lambda^2V_2$ and $P^1=U$.

We note the short exact sequence of graded $\Lambda^*V_2$-modules
\begin{align}
    0\rightarrow \Lambda^2V_2 \rightarrow P_U \rightarrow P_U \big\slash \Lambda^2V_2 \rightarrow 0\,,
\end{align}
which is immediately identified with 
\begin{align}
    0\rightarrow \textbf{1} \rightarrow P_U \rightarrow \Pi\textbf{1} \rightarrow 0\,.
\label{eqn:ClV2ext}
\end{align}

We can now write down all possible indecomposable $2^n$-dimensional graded $\mathrm{Cl}_{k-2}\widehat\otimes\mathrm{Cl}(V_2,0)$ modules:
\begin{align}
       \widehat{S}'_{k-2}\widehat\otimes P_U,\ \widehat{S}'_{k-2}\widehat\otimes\left(\Pi P_U\right),\ \widehat{S}''_{k-2}\widehat\otimes P_U,\ \widehat{S}''_{k-2}\widehat\otimes\left(\Pi P_U\right)\,,
\end{align}
and we claim that
\begin{align}
       \widehat{S}'_{k-2}\widehat\otimes P_U\simeq \widehat{S}''_{k-2}\widehat\otimes\left(\Pi P_U\right),\qquad \widehat{S}'_{k-2}\widehat\otimes\left(\Pi P_U\right)\simeq \widehat{S}''_{k-2}\widehat\otimes P_U\ \,.
\label{eqn:parityisos2}
\end{align}
Thus, we have two families $\widehat{S}'_{k-2}\widehat\otimes P_U$ and $\widehat{S}''_{k-2}\widehat\otimes P_U$ of $2^n$-dimensional graded representations of $\mathrm{Cl}_{k-2}\widehat\otimes\mathrm{Cl}(V_2,0)$-modules, which we identify with the points of both small resolutions of $p$.  They fit into short exact sequences
\begin{align}
    0\rightarrow \widehat{S}'_{k-2}\widehat\otimes \textbf{1} \rightarrow \widehat{S}'_{k-2}\widehat\otimes P_U \rightarrow \widehat{S}''_{k-2}\widehat\otimes \textbf{1} \rightarrow 0\,,
\label{eqn:ext1}
\end{align}
and 
\begin{align}
    0\rightarrow \widehat{S}''_{k-2}\widehat\otimes \textbf{1} \rightarrow \widehat{S}''_{k-2}\widehat\otimes P_U \rightarrow \widehat{S}'_{k-2}\widehat\otimes \textbf{1} \rightarrow 0\,,
    \label{eqn:ext2}
\end{align}
where (\ref{eqn:ext1}) and (\ref{eqn:ext2}) are respectively obtained from (\ref{eqn:ClV2ext}) by graded tensor product with $\widehat{S}'_{k-2}$ and $\widehat{S}'_{k-2}$, then employing the relevant isomorphism from (\ref{eqn:gradedisos}).

The proof of (\ref{eqn:parityisos2}) is a straightforward adaption of the proof of (\ref{eqn:parityisos}). Since $P_U$ contains both even and odd summands, the graded tensor product introduces signs in the Clifford action on both sides of the above isomorphisms.  However, they can be compensated for as before using isomorphisms $\epsilon_{k-2}\otimes 1$.

At times, it will be convenient to think of $U\subset V_2$ as a point $q=[U]\in \mathbb{P}(V_2)\simeq\mathbb{P}^1$, so we sometimes write $P_q$ instead of $P_U$.  

\subsection{Derived equivalences}
\label{sec:derived}

We begin this section with a conjecture. Conjecture~\ref{conj:hatb0} sets up a derived equivalence which allows us to relate our torsion refined invariants to the purely algebraic setting of sheaves of $\mathcal{B}_0$-modules over $\mathbb{P}^3$, simultaneously handling all small resolutions of $\Xd$. This conjecture is motivated by the $k=4$ case where it is known to hold \cite{Kuznetsov2013}, as we summarized in Section~\ref{sec:k4}.
At the end of this section, we point out some facts which are consistent with Conjecture~\ref{conj:hatb0}.

\begin{conjecture}
There exists an Azumaya algebra  $\widehat{\mathcal{B}}$ on $\widehat{X}$ extending ${\mathcal{B}}\vert_{\Xd-\SdOne}$ with $f_*(\widehat{\mathcal{B}})={\mathcal{B}}_0$, such that (\ref{eq:derivedequiv}) and (\ref{eq:inversederivedequiv}) are inverse derived equivalences.  Furthermore, for each exceptional curve $C_p$ we have an isomorphism of sheaves of algebras
\begin{align}
    \widehat{\mathcal{B}}\vert_{C_p}\simeq \underline{End}\left(\left(\mathcal{O}_{C_p}\oplus \mathcal{O}_{C_p}(-1)\right)^{\oplus 2^{n-2}}\right)\,.
\label{eqn:bcp}
\end{align}
\label{conj:hatb0}
\end{conjecture}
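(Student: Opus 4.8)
The natural strategy is to extend to arbitrary even $k=2n$ the construction of \cite{Kuznetsov2013} that was reviewed in Section~\ref{sec:k4}. The symmetric map $\sigma\colon\mathcal{L}\to S^2W^\vee$ determined by $\Ad$ cuts out a bundle of $(2n-2)$-dimensional quadrics $\mathcal{Q}\subset\mathbb{P}(W)$ over $\mathbb{P}^3$, degenerating to corank $1$ along $\Sd-\SdOne$ and to corank $2$ at the nodes $\SdOne$. First I would form the relative Fano scheme $F\to\mathbb{P}^3$ parametrizing maximal isotropic linear subspaces of the fibres $\mathcal{Q}(p)$. Over $p\notin\Sd$ its fibre is two copies of the spinor variety $\mathbb{S}_{n}$ of $\mathrm{Spin}(2n)$, one over each point of $\pi^{-1}(p)$; over $p\in\Sd-\SdOne$ it is a single spinor variety, sitting over the ramification point; and over a node it jumps up in dimension (for $k=4$ this is Kuznetsov's ``two $\mathbb{P}^2$s meeting along a $\mathbb{P}^1$''). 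Thus $F$ is already a smooth $\mathbb{S}_{n}$-fibration over $\Xd-\SdOne$.

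The core step is the relative minimal model program for $F\to\mathbb{P}^3$: one expects a sequence of flips supported over $\SdOne$ transforming $F$ into a model $F^{+}$ that is a flat $\mathbb{S}_{n}$-fibration over a small resolution $\rho\colon\widehat{\Xd}\to\Xd$. From $F^{+}$ one extracts the Azumaya algebra $\widehat{\mathcal{B}}$. When $n\le 3$ (so $k\le 6$, covering all the octic cases other than $\vec{d}=(1^8)$) the spinor variety $\mathbb{S}_{n}$ is $\mathbb{P}^{2^{n-1}-1}$, so $F^{+}$ is itself a Brauer--Severi scheme over $\widehat{\Xd}$ and $\widehat{\mathcal{B}}$ is its sheaf of endomorphism algebras, exactly as for $k=4$. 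For $n\ge 4$, i.e.\ the case $\vec{d}=(1^8)$, the fibre $\mathbb{S}_{4}$ is the $6$-dimensional quadric $Q^6\subsetneq\mathbb{P}^7$, so one must instead construct directly the Brauer--Severi scheme of $\mathcal{B}\vert_{\Xd-\SdOne}$, namely the $\mathbb{P}(\text{half-spinor})$-bundle over $\Xd-\SdOne$, and extend it across the exceptional curves; since $\widehat{\Xd}$ is a smooth complex threefold and $\widehat{\Xd}\setminus\rho^{-1}(\Xd-\SdOne)=\bigsqcup_p C_p$ has codimension $2$, an analytic version of Grothendieck--Gabber purity should yield a unique extension of the Brauer class, and the normalization $f_*\widehat{\mathcal{B}}=\mathcal{B}_0$ then determines $\widehat{\mathcal{B}}$ itself. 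Alternatively, for $(1^8)$ one can lean on the homological projective duality $D^b(\Xdnc)\simeq D^b(X_{2,2,2,2})$ and the Borisov--Li construction.

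Once $\widehat{\mathcal{B}}$ is in hand, the remaining verifications are local. The identity $f_*\widehat{\mathcal{B}}=\mathcal{B}_0$ holds over $\Xd-\SdOne$ by construction, and near a node one checks it in the local conifold model of $\widehat{\Xd}$, using Lemmas~\ref{lem:Cliffordisom},~\ref{lem:cliffsum} and~\ref{lem:gradungrad} to pin down the even Clifford algebra of a corank-$2$ form. For~(\ref{eqn:bcp}) I would split off a nondegenerate rank-$(2n-4)$ summand of $W$ near the node; this trivialises a factor $\mathrm{Cl}_{2n-4}\simeq M_{2^{n-2}}(\mathbb{C})$ by Morita equivalence and reduces the computation of $\widehat{\mathcal{B}}\vert_{C_p}$ to the $k=4$ result $\underline{End}(\mathcal{O}_{C_p}\oplus\mathcal{O}_{C_p}(-1))$ of \cite{Kuznetsov2013}, tensored up to $\underline{End}\bigl((\mathcal{O}_{C_p}\oplus\mathcal{O}_{C_p}(-1))^{\oplus 2^{n-2}}\bigr)$; the multiplicity $2^{n-2}$ is also forced by matching the stalk dimension $2^{n-1}$ of the $\mathcal{B}_0$-modules classified in Section~\ref{sec:Clifford}. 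Finally, for the derived equivalence~(\ref{eq:derivedequiv})--(\ref{eq:inversederivedequiv}) the two functors are adjoint, and one shows the unit and counit are isomorphisms by a computation over $\mathbb{P}^3$: away from $\SdOne$ it is classical Morita theory, while at a node the ``rotation'' argument of Section~\ref{sec:k4} --- using that $Rf_*(\mathcal{O}_{C_p}\oplus\mathcal{O}_{C_p}(-1))$ is a module and $Rf_*(\mathcal{O}_{C_p}(-1)\oplus\mathcal{O}_{C_p}(-2))$ a shifted module --- shows $Rf_*$ is conservative and hits a generating set; alternatively one invokes Kuznetsov's general results on derived categories of quadric fibrations once $\widehat{\mathcal{B}}$ has been identified with the expected Azumaya algebra.

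The main obstacle is the relative minimal model program for $k\ge 6$: for $k=4$ it is a single, well-understood flip, but for higher-rank quadric bundles controlling the birational surgery over the corank-$2$ locus --- and proving that the outcome really is a flat spinor-variety fibration over a small resolution rather than something worse --- is the substantive geometric input. A secondary difficulty, specific to $n\ge 4$, is that the spinor fibre is no longer a projective space, so the Brauer--Severi scheme underlying $\widehat{\mathcal{B}}$ has to be produced separately (or via purity) rather than read off as the projectivization of an obvious vector bundle, and making the analytic purity argument rigorous on the non-K\"ahler manifold $\widehat{\Xd}$ requires care.
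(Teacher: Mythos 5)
The statement you set out to prove is labeled a \emph{Conjecture} in the paper, and the authors do not attempt a proof. The only case in which it is actually established is $k=4$, where it is a theorem of \cite{Kuznetsov2013}; for $k\ge 6$ the paper offers only two short paragraphs of what it calls ``(weak) consistency'' at the end of Section~\ref{sec:derived}. For $k=6$, the authors sketch precisely the first step of your strategy --- the relative Fano variety of $\mathbb{P}^2$'s in the quadric fourfold bundle, whose fibers assemble into a $\mathbb{P}^3$-bundle over $\Xd-\SdOne$ carrying a rank-$16$ Azumaya algebra --- and stop with ``perhaps this Azumaya algebra extends,'' supported only by the rank count showing $f_*\widehat{\mathcal{B}}$ has the same rank as $\mathcal{B}_0$. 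For $k=8$, they invoke the twisted derived equivalence with $X_{2,2,2,2}$ from \cite{Addington:2012zv} together with HPD, which gives the Azumaya algebra and the derived equivalence but does not by itself verify~(\ref{eqn:bcp}).

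Your proposal is a natural extension of the paper's $k=6$ hint to all even $k$: replace the Fano of lines in quadric surfaces by the relative Fano of maximal isotropics, run the relative MMP to reach a flat spinor-variety fibration $F^{+}$ over a small resolution, observe that for $n\le 3$ the spinor fiber is $\mathbb{P}^{2^{n-1}-1}$ so that $F^{+}$ is Brauer--Severi and yields $\widehat{\mathcal{B}}$ directly, and for $n\ge 4$ fall back on analytic purity or HPD. The rank and dimension bookkeeping is consistent with the paper's, and the local reduction to the $k=4$ case by splitting off a nondegenerate rank-$(2n-4)$ summand and invoking Lemmas~\ref{lem:Cliffordisom},~\ref{lem:cliffsum},~\ref{lem:gradungrad} is the right way to attack~(\ref{eqn:bcp}). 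But this is a roadmap, not a proof: the termination of the relative MMP in a flat fibration over a small resolution for $k\ge 6$ --- which you flag yourself --- is the substantive content of the conjecture, and for $n\ge 4$ the purity argument on the non-K\"ahler complex manifold $\widehat{\Xd}$ is left heuristic. In short, your strategy matches the approach the authors themselves sketch, and the steps you could not close are the same ones they could not close; that is precisely why the statement is a conjecture rather than a proposition, and there is no ``paper's own proof'' for your attempt to be measured against.
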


Assuming that this conjecture holds, there is a 1-1 correspondence between sheaves of $\mathcal{O}_{C_p}$-modules and $\widehat{\mathcal{B}}$-modules supported on $C_p$ given by 
\begin{align}
    F\mapsto F\otimes_{\mathcal{O}_C}\left(\left(\mathcal{O}_{C_p}\oplus \mathcal{O}_{C_p}(-1)\right)^{\oplus 2^{n-2}}\right)\,.
\label{eqn:CtoBgeneral}
\end{align}
Under this correspondence, $\mathcal{O}_{C_p}$ corresponds to $(\mathcal{O}_{C_p}\oplus \mathcal{O}_{C_p}(-1))^{\oplus 2^{n-2}}$.

Some comments are in order here.  Since $\mathcal{B}_0$ is a locally free sheaf of rank $2^{k-1}$ and $\widehat{\Xd}\to \mathbb{P}^3$ is generically a 2-1 cover, it follows that $\widehat{\mathcal{B}}$ must be locally free of rank $2^{k-2}$.  If $q\in \widehat{\Xd}$, then $\widehat{\mathcal{B}}\otimes \mathcal{O}_q$ is isomorphic to a $2^{k-2}$-dimensional matrix algebra, necessarily the algebra $M_{2^{n-1}}$ of $2^{n-1}\times 2^{n-1}$ matrices.  The algebra $M_{2^{n-1}}$ has a unique irreducible representation of dimension $2^{n-1}$, identified with a $\widehat{\mathcal{B}}$-module skyscraper sheaf of dimension $2^{n-1}$ supported on $q$.  By the derived equivalence (\ref{eq:derivedequiv}), these correspond with ${\mathcal{B}}_0$-module skyscraper sheaves of dimension $2^{n-1}$ supported on $p=f(q)$, or equivalently, by $2^{n-1}$-dimensional $\mathrm{Cl}^0(\Ad(p))$-modules, or equivalently again by $2^{n}$-dimensional graded $\mathrm{Cl}(\Ad(p))$-modules.  Note further that for any $q\in C_p$ we have
\begin{align}
\begin{split}
    \underline{End}\left(\left(\mathcal{O}_{C_p}\oplus \mathcal{O}_{C_p}(-1)\right)^{\oplus 2^{n-2}}\right)\otimes\mathcal{O}_q\,,
\end{split}
\end{align}
is a skyscraper sheaf supported at $q$ with fiber $M_{2^{n-1}}$, consistent with (\ref{eqn:bcp}).

We can now easily modify the identifications of the exceptional curves and their points described in Section~\ref{sec:k4} so that it applies in the general case.

We first observe that via the derived equivalence ({\ref{eq:derivedequiv}), $(\mathcal{O}_{C_p}\oplus \mathcal{O}_{C_p}(-1))^{\oplus{2^{n-2}}}$ is the $\widehat{\mathcal{B}}$-module associated to the graded representation $\hat{S}'_{k-2}\widehat\otimes \textbf{1}$ of $\mathrm{Cl}(\Ad(p))$ described in Section~\ref{sec:Clifford}.  To see this, we simply observe that $Rf_*((\mathcal{O}_{C_p}\oplus \mathcal{O}_{C_p}(-1))^{\oplus{2^{n-2}}})$ is a $2^{n-2}$-dimensional skyscraper sheaf supported at $p$. We saw that there are only two $2^{n-2}$-dimensional representations of $\mathrm{Cl}^0(\Ad(p))$, corresponding to the $2^{n-1}$-dimensional graded representations $\widehat{S}'_{k-2}\widehat\otimes \textbf{1}$ and $\widehat{S}''_{k-2}\widehat\otimes \textbf{1}$ of $\mathrm{Cl}(\Ad(p))$.  The conventional choice $\widehat{S}'_{k-2}\widehat\otimes \textbf{1}$ is determined by the choice of small resolution.

Similarly, $(\mathcal{O}_{C_p}(-1)\oplus\mathcal{O}_{C_p}(-2))^{\oplus2^{n-2}}[1]$ is the object of $D^b(\widehat{\Xd},\widehat{\mathcal{B}})$  associated to the other graded representation $\widehat{S}''_{k-2}\widehat\otimes \textbf{1}$ of $\mathrm{Cl}(\Ad(p))$ described in Section~\ref{sec:Clifford}.  We simply observe that $Rf_*((\mathcal{O}_{C_p}(-1)\oplus\mathcal{O}_{C_p}(-2))^{\oplus2^{n-2}}[1])$ is a skyscraper sheaf of dimension $2^{n-2}$ supported at $p$. Extending the $k=4$ case, we also note that $(\mathcal{O}_{C_p}(-1)\oplus\mathcal{O}_{C_p}(-2))^{\oplus2^{n-2}}[1]$ is associated to $\mathcal{O}_{C_p}(-1)[1]$ by (\ref{eqn:CtoBgeneral}), and as we have seen before, $\mathcal{O}_{C_p}(-1)[1]$ corresponds to $\mathcal{O}_{C_p'}$ via (\ref{eqn:BridgelandFlop}), where $C_p'$ is the flopped curve.  

We next claim that the graded representations $\widehat{S}'_{k-2}\widehat\otimes P_q$ correspond to the points $q\in C_p$.  To see this, we start with (\ref{eqn:P1ptses}) and apply (\ref{eqn:CtoBgeneral}) to get 
\begin{align}
\begin{split}
  0\rightarrow\left(\mathcal{O}_{C_p}(-1)\oplus\mathcal{O}_{C_p}(-2)\right)^{\oplus{2^{n-2}}}\rightarrow\left(\mathcal{O}_{C_p}\oplus\mathcal{O}_{C_p}(-1)\right)^{\oplus{2^{n-2}}}\\ \rightarrow\left(\mathcal{O}_q\oplus\mathcal{O}_q(-1)\right)^{\oplus{2^{n-2}}}\rightarrow 0  \,,
\end{split}
\end{align}
which rotates to the exact triangle
\begin{align}
\begin{split}
  \left(\mathcal{O}_{C_p}\oplus\mathcal{O}_{C_p}(-1)\right)^{\oplus{2^{n-2}}}\rightarrow\left(\mathcal{O}_q\oplus\mathcal{O}_q(-1)\right)^{\oplus{2^{n-2}}}\\ \rightarrow \left(\mathcal{O}_{C_p}(-1)\oplus\mathcal{O}_{C_p}(-2)\right)^{\oplus{2^{n-2}}}[1]\stackrel{+1}{\rightarrow}\,.
\end{split}
\end{align}
The claim follows immediately by applying $Rf_*$, using the preceding discussion to identify the first and last terms with the graded representations $\widehat{S}'_{k-2}\widehat{\otimes}\textbf{1}$ and $\widehat{S}''_{k-2}\widehat{\otimes}\textbf{1}$ respectively, and comparing to (\ref{eqn:ext1}).  This argument also completes our discussion in the $k=4$ case.

Recall~\cite{Bridgeland} that points $q'$ of the flopped $\mathbb{P}^1$ correspond to objects $E_{q'}$ of $D^b(\widehat{\Xd})$ fitting into an exact triangle which we write as
\begin{align}
    0 \rightarrow \mathcal{O}_{C_p}(-1)[1] \rightarrow E_{q'} \rightarrow \mathcal{O}_{C_p}\rightarrow 0\,.
\end{align}
Tensoring with $(\mathcal{O}_{C_p}\oplus \mathcal{O}_{C_p}(-1))^{\oplus 2^{n-2}}$, applying $Rf_*$,we similarly conclude by comparison with (\ref{eqn:ext2}) that the points $q'$ of the flopped $\mathbb{P}^1$ correspond to the graded representation $\widehat{S}''_{k-2}\widehat\otimes P_q$ as claimed.

Once again, both small resolutions appear quite symmetrically in this description, just as they do in our proposal for torsion refined invariants.

We also point out that even without assuming Conjecture~\ref{conj:hatb0}, the discussion in Section~\ref{sec:Clifford} still shows the existence of a 1-1 correspondence between $2^{n-1}$-dimensional representations of $\mathrm{Cl}^0(\Ad(p))$ and points of small resolutions lying over $p\in \mathbb{P}^3$.  In particular, if $p\in\SdOne$ is a node, then these representations of $\mathrm{Cl}^0(\Ad(p))$ are the representations $(\widehat{S}'_{k-2}\widehat\otimes P_U)^0$ and $(\widehat{S}''_{k-2}\widehat\otimes P_U)^0$, which are each parametrized by $\mathbb{P}^1$.  While it is tempting to identify these $\mathbb{P}^1$s with exceptional $\mathbb{P}^1$s of some small resolution, we don't have a precise way to make this identification in the absence of an explicit derived equivalence.

\smallskip
Continuing to assume Conjecture~\ref{conj:hatb0}, we can now describe the 2-torsion in $H_2(\widehat{\Xd},\mathbb{Z})$ in terms of sheaves of $\widehat{\mathcal{B}}$-modules.  
We have $D^b(\mathbb{P}^3,\mathcal{B}_0)= D^b(\widehat{\Xd},\alpha)$, with a twist by an element $\alpha$ of the Brauer group of $\widehat{\Xd}$.  Then $\alpha$ also defines a twist of K-theory.  We have the subspace $K_{(2)}^\alpha(\widehat{\Xd})$ of $K^\alpha(\widehat{\Xd})$ generated by D2-D0 branes \cite{Brunner:2001eg}.
The Atiyah-Hirzebruch spectral sequence gives a short exact sequence
\begin{align}
    0 \rightarrow H_0(\widehat{\Xd},\mathbb{Z}) \rightarrow K_{(2)}^\alpha(\widehat{\Xd}) \rightarrow H_2(\widehat{\Xd},\mathbb{Z}) \rightarrow 0\,,
\label{eqn:abut}
\end{align}
where we have used the Poincar\'e duality isomorphisms $H_j(\widehat{\Xd},\mathbb{Z})\simeq H^{6-j}(\widehat{\Xd},\mathbb{Z})$.   
Our model of $\alpha$-twisted K-theory will be in terms of sheaves of $\widehat{\mathcal{B}}$-modules.  

Let $\mathcal{E}$ be the rank $2^{n-1}$ $\alpha$-twisted sheaf on $\widehat{\Xd}$ determined by the cohomology class in $H^1(\widehat{\Xd},\operatorname{PGL}(2^{n-1}))$ associated with the Azumaya algebra $\widehat{\mathcal{B}}$, so that $\mathcal{E}\vert_{C_p}\simeq (\mathcal{O}_{C_p}\oplus \mathcal{O}_{C_p}(-1))^{\oplus{2^{n-2}}}$. 

There are natural maps
\begin{align}
  H_0(\widehat{\Xd},\mathbb{Z}) \rightarrow K_{(2)}^\alpha(\widehat{\Xd}),\qquad [q]\mapsto   [\mathcal{E}\otimes\mathcal{O}_q]\,,
  \label{eqn:firstmap}
\end{align}
and
\begin{align}
K_{(2)}^\alpha(\widehat{\Xd}) \rightarrow H_2(\widehat{\Xd},\mathbb{Z}),\qquad [F]\mapsto\mathrm{ch}_2(F)/2^{n-1}\,.
\label{eqn:secondmap}
\end{align}
We say a few words explaining why (\ref{eqn:secondmap}) takes values in integral homology.  For a twisted sheaf $F$, let $q$ be an element of the support of $F$.  Since $\widehat{\mathcal{B}}\otimes \mathcal{O}_q\simeq M_{2^{n-1}}$, and $M_{2^{n-1}}$ has a unique irreducible representation of dimension $2^{n-1}$, it follows that the dimension of $F\otimes\mathcal{O}_q$ must be a multiple of $2^{n-1}$.  Therefore the multiplicity of each irreducible component of the support cycle of $F$ (taking multiplicities and ranks into consideration) naturally representing $\mathrm{ch}_2(F)$ is a multiple of $2^{n-1}$ as well.  So the target of the map (\ref{eqn:secondmap}) takes values in integral homology, as claimed.

In the special case of a twisted sheaf $G\otimes\mathcal{E}$ associated to an ordinary sheaf $G$ on $\widehat{\Xd}$ supported on a curve, the image of $G\otimes\mathcal{E}$ under (\ref{eqn:secondmap}) is just $\mathrm{ch}_2(G)$, which is represented by the integral support cycle of $G$.

We let $\delta\in K_{(2)}^\alpha(\widehat{\Xd})$ denote the class of $(\mathcal{O}_{C_p}\oplus \mathcal{O}_{C_p}(-1))^{\oplus{2^{n-2}}}$.  Since $\mathrm{ch}_2(\delta)=2^{n-1}[C_p]$, we see that $\delta$ maps via ({\ref{eqn:abut}) to $[C_p]$, the nontrivial 2-torsion element of $H_2(\widehat{\Xd},\mathbb{Z})$.  

Next, we let $\gamma\in K_{(2)}^\alpha(\widehat{\Xd})$ denote the class of
$\mathcal{E}\otimes\mathcal{O}_q$, which corresponds to the class of a point via (\ref{eqn:firstmap}).  We claim that $2\delta = \gamma$. To see this, we first note that $2\delta$ maps to $2[C_p]=0$ via (\ref{eqn:secondmap}), so must be of the form $m\gamma$ for some $m\in\mathbb{Z}$.  Applying $Rf_*$ to $2\delta=m\gamma$ and noting that $Rf_*(\mathcal{E}\vert_{C_p})$ is a skyscraper sheaf of dimension $2^{n-2}$ while $Rf_*(\mathcal{E}\vert_q)$ is a skyscraper sheaf of dimension $2^{n-1}$, we conclude that $m=1$.  

We note in passing that if one could show directly that $\gamma=2\delta$, then the image of $\gamma$ in $H_2(\widehat{\Xd},\mathbb{Z})$ would be 2-torsion and we would have an independent proof that $H_2(\widehat{\Xd},\mathbb{Z})$ has 2-torsion without appealing to degenerate octic double solids as in Section~\ref{sec:verysingular}.

Finally, we show that $K_{(2)}^\alpha(\widehat{\Xd})\simeq\mathbb{Z}^2$.  Since $H_0(\widehat{\Xd},\mathbb{Z})\simeq\mathbb{Z}$ and $H_2(\widehat{\Xd},\mathbb{Z})\simeq\mathbb{Z}\oplus\mathbb{Z}_2$, we only need to show that (\ref{eqn:abut}) does not split, and for that we only have to show that $[C_p]$ is not in the image of a 2-torsion element of $K_{(2)}^\alpha(\widehat{\Xd})$. The most general K-theory class mapping to $[C_p]$ is $\gamma+m\delta$ with $m\in\mathbb{Z}$, so we only have to show that $2\delta+2m\alpha$ is nonzero for any $m$.  That can be confirmed by applying $Rf_*$.  We get a virtual $\mathcal{B}_0$-module, a sum or difference of skyscraper sheaves of total dimension $2(2^{n-2})+2m(2^{n-1})=2^{n-1}(2m+1)$, which is nonzero.

\smallskip
We now make a new proposal for the torsion refined invariants in terms of sheaves of ${\mathcal{B}}_0$-modules.  Let $\beta\in K_{(2)}^\alpha(\widehat{\Xd})$.  Using the derived equivalence $D^b(\mathbb{P}^3,\mathcal{B}_0)=D^b(\widehat{\Xd},\widehat{\mathcal{B}})$, we consider the moduli space $M_\beta$ of $\mathcal{B}_0$-modules with class $\beta$.  Since sheaves of  $\mathcal{B}_0$-modules can be viewed as ordinary sheaves of $\mathcal{O}_{\mathbb{P}^3}$-modules, we have a map $M_\beta\to \mathrm{Chow}$ from $M_\beta$ to the ordinary Chow variety of $\mathbb{P}^3$, which plays the same role as the map from the moduli space of D2-D0 branes to its support curve in \cite{Gopakumar:1998jq}.  We can then apply the same mathematical definitions which have been used to define the ordinary Gopakumar-Vafa invariants in this situation \cite{kiem2008GV,Maulik2018GV}. The result are GV invariants $n_g^\beta$.

Note that the mathematical definition of the $g=0$ GV invariants do not depend on the D0-brane charge for ordinary D2-D0 branes, \cite{Joyce2012}, and this independence is conjectured for all $g$ \cite{Toda2017}.  It seems reasonable to assume that something similar happens here, so that $n_g^\beta=n_g^{\beta+m\gamma}$ for any $m\in\mathbb{Z}$.  The conclusion would be that the $n_b^\beta$ only depend on the image of $\beta$ in $H_2(\widehat{\Xd},\mathbb{Z})$ and we have recovered the notion of torsion refined invariants!

There is clearly a lot to check here, which we leave to future work.  Issues include showing that $M_\beta$ is a good moduli space, that it supports a perverse sheaf of vanishing cycles, perhaps associated to a $\mathcal{B}_0$-module analogue of holomorphic Chern-Simons theory, and verifying $n_g^\beta=n_g^{\beta+m\gamma}$.

\smallskip
We conclude this section with a few comments about the geometry in the $k=6$ and $k=8$ cases, observing (weak) consistency with 
Conjecture~\ref{conj:hatb0}.

\paragraph{Case $k=6$} 
We start to adapt the discussion at the end of Section~\ref{sec:k4}.  
The matrix $\Ad$ determines a bundle of quadric fourfolds over $\mathbb{P}^5$.  Consider the relative Fano variety $F\to \mathbb{P}^3$ of $\mathbb{P}^2$s in these quadric fourfolds $Q(p)$ parametrized by $p\in \mathbb{P}^3$.  When $Q(p)$ has rank~6, i.e.\ when $p\not\in\Sd$, the fiber of the relative Fano variety over $p$ is the union of two $\mathbb{P}^3$s \cite{Harris}. When $Q(p)$ has rank 5, i.e.\ when $p\in S-\SdOne$, the fiber over $p$ is a single $\mathbb{P}^3$. 
 These fibers fit together to form a $\mathbb{P}^3$ bundle over $\Xd-\SdOne$.  To this $\mathbb{P}^3$-bundle is associated a rank~16 sheaf of Azumaya algebras on $\Xd-\SdOne$.  Perhaps this Azumaya algebra extends to a sheaf $\widehat{B}$ of Azumaya algebras on all of $\widehat{\Xd}$ which satisfies the conditions of Conjecture~\ref{conj:hatb0}. As a weak check, since $f:\widehat{X}\to\mathbb{P}^3$ is generically 2 to 1, $f_*(\widehat{\mathcal{B}})$ has rank 32, which is indeed the rank of $\mathcal{B}_0$.

\paragraph{Case $k=8$}
It was shown in~\cite{Addington:2012zv} that $D^b(\widehat{\Xd},\widehat{\mathcal{B}})\simeq D^b(Y)$, where $\widehat{\mathcal{B}}$ generates the Brauer group of $\widehat{\Xd}$ and $Y$ is a complete intersection of four generic quadrics in $\mathbb{P}^7$.
The latter is homologically projective dual to $D^b(\mathbb{P}^3,\mathcal{B}_0)$~\cite{Kuznetsov2008} and it follows that
\begin{align}
    D^b(\widehat{\Xd},\widehat{\mathcal{B}})\simeq D^b(Y)\simeq D^b(\mathbb{P}^3,\mathcal{B}_0)\,,
\end{align}
consistent with Conjecture~\ref{conj:hatb0}.

\section{Non-commutative resolutions from GLSMs}
\label{sec:glsm}
In this section we are going to construct and study gauged linear sigma models (GLSMs)~\cite{Witten:1993yc} that exhibit phases in which the theories flow to non-linear sigma models (NLSMs) on non-commutative resolutions of determinantal octic double solids.

First, in Section~\ref{sec:glsmModels}, we construct the GLSMs with hybrid phases associated to the non-commutative resolutions $\Xdnc$ of the determinantal octic double solids $\Xd$.
We then discuss the transitions $\widehat{\Xdr}\rightarrow \Xdnc$ in Section~\ref{sec:conifoldTransitionsWithBfields} to argue that the GLSM phases indeed describe $\Xd$ together with a topologically non-trivial flat B-field.
We then calculate the sphere partition function in Section~\ref{sec:spherepartition}, using localization, in order to extract the fundamental period of a mirror Calabi-Yau.
\subsection{The gauged linear sigma models}
\label{sec:glsmModels}
Consider again a normalized decomposition $\vec{d}\in\mathbb{N}^k$ of degree $8$ with length $l\ge 3$ and let $q=\gcd(d_1,\ldots,d_k)$, such that $q\in\{1,2\}$.
We consider GLSMs with gauge group
\begin{align}
    G=\left\{\begin{array}{cl}
    U(1)&q=1\\
    U(1)\times\mathbb{Z}_2&q=2
    \end{array}\right.\,,
\end{align}
vector $U(1)_V$ R-symmetry and chiral fields $P_{i=1,\ldots,k}$, $X_{j=1,\ldots,4}$ with the respective scalar components denoted by $p_i,x_j$.
We denote the $U(1)$ Fayet-Iliopoulous parameter by $r$, the theta angle by $\theta$ and introduce the FI-theta parameter $t=\frac{\theta}{2\pi}+ir$ as well as $v=e^{2\pi i t}$.

The charges of the fields are as follows
\begin{align}
	\begin{array}{c|cc}
		&P_{i=1,\ldots,k}&X_{j=1,\ldots,4}\\\hline
		U(1)&-d_i/q&2/q\\
            \mathbb{Z}_2&-&+\\
		U(1)_V&1-2d_i\mathfrak{q}/q&4\mathfrak{q}/q
	\end{array}\,,
	\label{eqn:glsmCharges}
\end{align}
where for cases with $q=1$ the $\mathbb{Z}_2$ is present as a subgroup of the $U(1)$ gauge symmetry.
We introduce $0<\mathfrak{q}\ll 1\in\mathbb{R}$ as an additional parameter of the R-charges that will be useful in the localization calculation of the sphere partition function but can eventually be set to zero.

 The theory satisfies the Calabi-Yau condition that the $U(1)$ charges of the fields sum to zero.
 As a result, the axial R-symmetry is free of anomalies and and the theory flows to a (2,2)-SCFT in the infrared.
The $\mathbb{Z}_2$ symmetry is required to prevent gauge-invariant chiral operators from having an R-charge $\notin2\mathbb{Z}$, which would break spacetime supersymmetry~\cite{Vafa:1989xc}.
 
Our assignment of $U(1)_V$ charges ensures that the generic superpotential, which has charge $2$, takes the form
\begin{align}
	W=\vec{P}^T\Ad(X)\vec{P}\,,
\end{align}
where $\Ad$ is a symmetric $k\times k$ matrix with entries $a_{i,j}$ that are generic homogeneous polyomials of degree $(d_i+d_j)/2$ in the chiral fields $X_{i=1,\ldots,4}$.

The special case $\vec{d}=(1^8)$ has been studied in detail in~\cite{Aspinwall:1994cf,Caldararu:2010ljp,Addington:2012zv,Katz:2022lyl} while the case $\vec{d}=(2^4)$ has previously been discussed in~\cite{Halverson:2013eua,Sharpe:2013bwa}.

As will be discussed in Section~\ref{sec:conifoldTransitionsWithBfields}, the relationship to the GLSM associated to $\widehat{\Xdr}$ suggests the possibility that there is also a topological term that has to be added to the usual GLSM Lagrangian and which introduces an additional phase depending on the $\mathbb{Z}_2$-bundle on the worldsheet.
Such a term is usually called a discrete theta angle.
Since there are no non-trivial $\mathbb{Z}_2$-bundles on the sphere, the term will not enter the localization calculation of the sphere partition function directly.

Vacua of the theory correspond to gauge equivalence classes of solutions to the F- and D-term equations.
The F-term equations take the form
\begin{align}
	\begin{split}
		\text{(i)}\quad\frac{\partial W}{\partial p_i}=&0\,,\quad i=1,\ldots,k\,,\\
		\text{(ii)}\quad\frac{\partial W}{\partial x_j}=&0\,,\quad i=1,\ldots,4\,,
	\end{split}
	\label{eqn:fterm}
\end{align}
while the D-term equation is
\begin{align}
	\frac{2}{q}\sum\limits_{i=j}^4\lvert x_j\rvert^2-\frac{1}{q}\sum\limits_{i=1}^nd_i\lvert p_i\rvert^2=r\,.
	\label{eqn:dterm}
\end{align}

We will focus in our analysis on the phase $r\gg 0$.
The D-term equations~\eqref{eqn:dterm} then imply that the $x_{j=1,\ldots,4}$ are not allowed to vanish simultaneously and can be interpreted as homogeneous coordinates on $\mathbb{P}^3$.
On the other hand, $\vec{p}=0$ solves the F-term equations and we can interpret $\Ad$ as the mass matrix of a family of Landau-Ginzburg theories -- with a $\mathbb{Z}_2$ orbifold action -- over $\mathbb{P}^3$. 
We therefore obtain a hybrid phase, very similar to the one in the moduli space of the GLSM associated to the intersection of 4 quadrics in $\mathbb{P}^7$.
In fact, as can be seen by comparing the charge vectors and the superpotential, the latter corresponds to the special case $\vec{d}=(1^8)$.

The analysis of the case $\vec{d}=(1^8)$ in~\cite{Caldararu:2010ljp,Sharpe:2013bwa,Sharpe:2013anr,Addington:2012zv,Katz:2022lyl} carries over to a generic choice for $\vec{d}$.
Away from the locus $\{\,\corank\,\Ad>0\,\}\subset \mathbb{P}^3$ the mass matrix of the Landau-Ginzburg fiber has full rank and the fields can be integrated out. However, as a consequence of decomposition~\cite{Hellerman:2006zs} with respect to the trivially acting $\mathbb{Z}_2$ symmetry, the theory has not one but two vacua~\cite{Sharpe:2013anr}.
On the other hand, over $\{\,\corank\,\Ad=1\,\}\subset \mathbb{P}^3$ the Landau-Ginzburg theory contains a single massless field with non-trivial $\mathbb{Z}_2$ action and has a unique vacuum.
Away from the points $\SdOne\subset\mathbb{P}^3$ the hybrid model therefore realizes a double cover of $\mathbb{P}^3$ that is ramified over $\Sd$.

Over the points $\SdOne\subset\mathbb{P}^3$ where the classical geometry of $\Xd$ is singular, the hybrid model remains smooth.

From the open string perspective, this can be interpreted as corresponding to a non-commutative resolution $\Xdnc$ of $\Xd$~\cite{Caldararu:2010ljp,Addington:2012zv}.
The zero branes -- and as a result all topological B-branes -- in a Landau-Ginzburg theory with quadratic superpotential have been identified in~\cite{Kapustin:2002bi} as modules over the associated Clifford algebra.
Taking into account the orbifold action one has to consider instead the even part of the Clifford algebra and given that the hybrid model is fibered over $\mathbb{P}^3$ one obtains a description of branes as modules over a sheaf of even parts of Clifford algebras over $\mathbb{P}^3$.
In this way one reproduces the categorical resolution $(\mathbb{P}^3,\mathcal{B}_0)$ of $\Xd$ due to Kuznetsov~\cite{Kuznetsov2008,Kuznetsov2013} that we have discussed in the previous Section~\ref{sec:cliffordresolution}.

From the closed string perspective, it is more natural to think about the phase as corresponding to $\Xd$ together with a topologically non-trivial flat B-field that stabilizes the singularities~\cite{Aspinwall:1995rb,Schimannek:2021pau,Katz:2022lyl}.
We will now provide further evidence that this is the correct interpretation.

\subsection{Conifold transitions with B-fields from GLSM}
\label{sec:conifoldTransitionsWithBfields}
In this section we will use the GLSM description to study the conifold transitions between $\widehat{\Xdr}$ and $\Xdnc$.
This will allow us to directly see the presence of the fractional B-field along exceptional curves of zero volume in $\Xdnc$ in the phase $r\gg 0$ of the GLSM introduced in Section~\ref{sec:glsmModels}.

For the sake of exposition we focus on the example $\vec{d}=(2^4)$ but the structure is completely analogous for every decomposition $\vec{d}\in\mathbb{N}^k$ of degree $8$ and length $l\ge 3$~\footnote{It also generalizes to analogous constructions over other Fano bases.}.
In the rest of this section we will use $Y=\widehat{X}^{\text{r}}_{(2^4)}$ to denote the small K\"ahler resolution of the specialization $X^{\text{r}}_{(2^4)}$ of $X_{(2^4)}$.

From Lemma~\ref{lem:resolution} it follows that $Y$ can be realized as a complete intersection of two hypersurfaces of respective degrees $(2,2)$ and $(1,2)$ in $\mathbb{P}^2\times\mathbb{P}^3$.
One can construct a corresponding GLSM that has gauge symmetry $G=U(1)_1\times U(1)_2$ and nine chiral fields $P_{1,2}$, $U_{1,2,3}$, $X_{1,2,3,4}$ with gauge charges and $U(1)_V$ R-charges  given by
\begin{align}
    \begin{array}{c|cccc|c}
        & P_1 & P_2 & U_{1,2,3} & X_{1,\ldots, 4}&\text{FI}\\\hline
        U(1)_1& -2&-1&1&0&r_1\\
        U(1)_2& -2&-2&0&1&r_2\\
        U(1)_V&2&2&0&0&
    \end{array}\,.
\end{align}
We denote the corresponding scalar fields by $p_{1,2},u_{1,2,3},x_{1,2,3,4}$.

The Fayet-Iliopoulous parameters are $r_1,r_2$ and denoting the theta angles by $\theta_1,\theta_2$ we define the algebraic coordinates $z_k=\exp(-2\pi r_k+i\theta_k),\,k=1,2$.
A generic superpotential takes the form
\begin{align}
    W=&\vec{\Phi}^T\left(\begin{array}{cc}
    P_1 A_{3\times 3}(X)&B_{1\times 3}(X)^T\\
    B_{1\times 3}(X)&0
    \end{array}\right)\vec{\Phi}\,,
\end{align}
where $\vec{\Phi}=(U_1,U_2,U_3,P_2)$ and $A_{3\times 3},B_{1\times 3}$ are the matrices of quadratic polynomials in $X$ from $\Adr$ in~\eqref{eqn:adegen}.
The geometric phase in which the GLSM flows to a non-linear sigma model on $Y$ corresponds to the region $r_1,r_2\gg 0$.

The complexified K\"ahler parameter associated to the rational curves $C^{(1)}$ that resolve the 80 nodes in $\SdrOne\subset \Xdr$ takes the form
\begin{align}
        t_1(z_1,z_2)=\frac{1}{2\pi i}\log\left(\frac{1-\sqrt{1-4z_1}-2z_1}{2z_1}\right)+\mathcal{O}(z_2)\,.
        \label{eqn:cplxvol}
\end{align}
As discussed in the proof of Lemma~\ref{lem:resolution}, the exceptional curves $C^{(2)}$ that resolve the 8 nodes in $\SdrTwo\subset \Xdr$ are in the homology class $[C^{(2)}]=2\cdot [C^{(1)}]$ and therefore have a complexified volume $2t_1$.
The relation~\eqref{eqn:cplxvol} is actually independent of the choice of $\vec{d}$.

Recall that the complexified K\"ahler parameters take the form
\begin{align}
    t=B+iV\,,
\end{align}
with the imaginary part corresponding to the ``classical'' K\"ahler volume and the real part measuring the holonomy of the B-field along the curve.
The latter is only defined up to shifts by integers and therefore $t\simeq t+1$.

In order to carry out the conifold transition we would like to take a limit where the complexified volume of the curves $C^{(2)}$ is equivalent to zero, i.e. $2t_1\in \mathbb{Z}$.
This only happens for $z_1=1/4$ and $z_1\rightarrow \infty$, with the respective limits of $t_1$ given by
\begin{align}
    \lim_{z_1\rightarrow \frac14}t_1(z_1,z_2)=0\,,\quad \lim_{z_1\rightarrow \infty}t_1(z_1,z_2)=\frac12\,.
    \label{eqn:tlimit}
\end{align}
Note that limits where a complexified K\"ahler parameter takes a fractional value appear quite generically in stringy K\"ahler moduli spaces and have already been observed e.g. in~\cite{Aspinwall:1993xz}.
However, as we will now see, in this case the limit actually allows us to carry out the conifold transition $Y\rightarrow\Xdnc$.

Let us first discuss the underlying geometry.
In the first limit the complexified volumes of all exceptional curves are zero and we can perform a conifold transition to $\Xd$.
On the other hand, in the second limit the exceptional curves $C^{(1)}$ and $C^{(2)}$ have zero K\"ahler volume but there is a $1/2$ B-field holonomy along the curves $C^{(1)}$.
This does not prevent us from deforming away the nodes in $\SdrTwo$, since the corresponding B-field holonomy $1$ is equivalent to $0$, but it does obstruct deformation of the nodes $\SdrOne$.
We therefore obtain a transition to $\Xd$ together with a fractional B-field that stabilizes the singularities, i.e. to $\Xd^{\text{n.c.}}$.

Let us stress that after deforming the nodes associated to $C^{(2)}$, the value of the factional B-field is frozen and we can not continuously deform the theories with $t_1=0$ and $t_1=1/2$ into each other anymore.
As a result, the moduli spaces of the theories on $\Xd$ and $\Xd^{\text{n.c.}}$ -- at a generic point of complex structure for the underlying space $\Xd$ -- are disconnected.
In particular, $\Xd$ and $\Xd^{\text{n.c.}}$ have different mirrors.
We will find further evidence in the next section when calculating the corresponding mirror fundamental periods.

From the perspective of the GLSM, setting $z_1=1/4$ while keeping $r_2\gg 0$ corresponds to a region where the Coulumb branch becomes non-compact and the gauge theoretic description is singular.
However, the limit $z_1\rightarrow \infty,\,r_1\ll 0$, while again keeping $r_2\gg 0$, can be understood from a GLSM perspective.

We will now argue that the second limit precisely leads to the GLSM description of $\Xdnc$ introduced in Section~\ref{sec:glsmModels}.

To this end, we first change the basis of the gauge group such that the new FI-parameters are given by $r_1'=-r_1,r_2'=r_2-r_1$.
We also change the basis for the R-charges such that $P_1$ is only charged under $U(1)_2'$.
The resulting charges are as follows:
\begin{align}
    \begin{array}{c|cccc|c}
        & P_1 & P_2 & U_{1,2,3} & X_{1,\ldots, 4}&\text{FI}\\\hline
        U(1)_1'& 2&1&-1&0&r_1'\\
        U(1)_2'& 0&-1&-1&1&r_2'\\
        U(1)_V'&0&1&1&0&
    \end{array}
    \label{eqn:transitionGlsm2}
\end{align}

In this basis we can directly see how to obtain the GLSM~\eqref{eqn:glsmCharges} associated to the non-commutative resolution $\Xdnc$ of $\Xd$ with $\vec{d}=(2^4)$.
At least heuristically, we can proceed as follows:
\begin{enumerate}
  \setlength{\itemsep}{5pt}
    \item Give a large vacuum expectation value to $p_1$, breaking as a result the gauge symmetry $U(1)_1'$ to a $ \mathbb{Z}_2$ subgroup that acts non-trivially on $P_2,U_{1,2,3}$.
    \item Add a deformation $C_{1\times 1}(X)P_2^2/P_1$ to the superpotential, where $C_{1\times 1}(X)$ is a generic quadratic polynomial in $X_{1,2,3,4}$, such that
    \begin{align}
    W'=&\vec{\Phi}^T\left(\begin{array}{cc}
    P_1 A_{3\times 3}(X)&B_{1\times 3}(X)^T\\
    B_{1\times 3}(X)&(P_1)^{-1}C_{1\times 1}(X)
    \end{array}\right)\vec{\Phi}\,.
\end{align}
    \item Integrate out $P_1$ to obtain a GLSM with fields
    \begin{align}
    \begin{array}{c|ccc|c}
        & P_2 & U_{1,2,3} & X_{1,\ldots, 4}&\text{FI}\\\hline
        \mathbb{Z}_2&-&-&+&\\
        U(1)_2'& -1&-1&1&r_2'\\
        U(1)_V'&1&1&0&
    \end{array}\,.
    \label{eqn:transitionGlsm3}
    \end{align}
    One can now identify $U_{1,2,3},P_2$ and $X_{1,2,3,4}$ in~\eqref{eqn:transitionGlsm2} with the fields $P_{1,2,3,4}$ and $X_{1,2,3,4}$ in~\eqref{eqn:glsmCharges} to recover the GLSM described in Section~\ref{sec:glsmModels}.
\end{enumerate}
Effectively, we performed what looks like a GLSM Higgs transition using $P_1$.
As a non-trivial cross check we verified that the topological string free energies associated to $\Xdnc$ at genus $g=0,1$, that will be further discussed in Section~\ref{sec:direct}, are obtained from those associated to $\widehat{\Xdr}$ after setting $t_1\rightarrow 1/2$.

Recall that the $\theta_1$ theta angle enters the Lagrangian of the GLSM associated to $\widehat{\Xdr}$ via a term
\begin{align}
    \mathcal{L}=\ldots +\frac{\theta_1}{2\pi }\int dv_1\,,
\end{align}
where $v_1$ is the $U(1)_1$ gauge field and $\int dv_1/(2\pi )$ is the corresponding first Chern class~\cite{Witten:1993yc}.
The IR value $t_1\rightarrow 1/2$ in the relevant limit~\eqref{eqn:tlimit} then suggests that the $\mathbb{Z}_2$ gauge symmetry in the GLSM associated to $\Xdnc$ comes with a corresponding topological term that leads to a phase depending on  the choice of $\mathbb{Z}_2$-bundle on the worldsheet.

This seems consistent with the result from~\cite{Hori:2011pd} that applying a Seiberg-like duality to the GLSM associated to $\vec{d}=(1^8)$ produces a GLSM with gauge group $U(1)\times SO(8)$ where the $SO(8)$ is equipped with a non-trivial mod $2$ theta angle.
More generally this points towards a relationship between topological terms in GLSM and fractional B-fields in the infrared theories.

While  the procedure outlined above may seem somewhat ad-hoc -- and certainly asks for a deeper physical understanding in the future --- let us point out that it produces the correct result when applied to ordinary conifold transitions without a B-field that happen in phase limits of GLSM~\cite{Avram:1995pu}.
A more careful analysis shows that the region $r_1',r_2'\gg 0$ is a so-called exoflop phase~\cite{Aspinwall:1993nu,Addington:2013gpa,Aspinwall:2014vea}~\footnote{We thank Ronen Plesser for valuable discussions about this point and for explaining to us the interpretation as an exoflop phase in an example of a conifold transition that is purely geometric as well as the relationship with~\cite{Avram:1995pu}.} and their role in extremal transitions has been discussed e.g. in~\cite{Aspinwall:2014vea}.
We will not try to develop this point of view further in this paper but instead leave it for future work.

\subsection{Localization of the sphere partition function}
\label{sec:spherepartition}
The partition function of two dimensional $\mathcal{N}=(2,2)$ gauge theories was evaluated by localization in~\cite{Benini:2012ui,Doroud:2012xw} and the results were applied to non-linear sigma models on Calabi-Yau threefolds in~\cite{Jockers:2012dk,Halverson:2013eua}, see also~\cite{Erkinger:2019umg}. A recent discussion, that also makes contact with the mathematical literature, can be found e.g. in~\cite{Erkinger:2020cjt}.

The sphere partition function is conjecturally related to the quantum corrected K\"ahler potential on the K\"ahler moduli space via~\cite{Jockers:2012dk,Gomis:2012wy}
\begin{align}
	Z_{S^2}(v,\bar{v})=e^{-K(v,\bar{v})}\,.
\end{align}
Instead of evaluating the full expression for the sphere partition function, we will use the fact that
\begin{align}
    e^{-K(v,\bar{v})}\sim \varpi_0(v)+\mathcal{O}(\log(v),\bar{v})\,,
\end{align}
in order to obtain the fundamental period $\varpi_0(v)$ of the mirror Calabi-Yau threefold.

For a GLSM with gauge group $G=U(1)$, anomaly free axial R-symmetry and chiral fields $\Phi_i$ with $U(1),U(1)_V$ charges $Q_i,q_i$, the localized sphere partition takes the form
\begin{align}
	Z_{S^2}=\frac{1}{2\pi}\sum\limits_{m\in\mathbb{Z}}\int\limits_{-\infty}^\infty d\sigma e^{-4\pi i r\sigma-i\theta m}\prod\limits_{i=1}^n\frac{\Gamma\left(\frac{q_i}{2}-iQ_i\sigma-\frac12 Q_im\right)}{\Gamma\left(1-\frac{q_i}{2}+iQ_i\sigma-\frac12 Q_im\right)}\,,
	\label{eqn:sphereZ}
\end{align}

Considering now the matter content~\eqref{eqn:glsmCharges}.
After a change of variables $\tau=\mathfrak{q}-i\sigma$, the sphere partition function~\eqref{eqn:sphereZ} takes the form
\begin{align*}
\begin{split}
	Z_{S^2}=&e^{-4\pi r \mathfrak{q}}\sum\limits_{m\in\mathbb{Z}}e^{-i\theta m}\int\limits_{\mathfrak{q}-i\infty}^{\mathfrak{q}+i\infty} \frac{d\tau}{2\pi i} e^{4\pi r\tau}\\
 &\cdot\left[\prod\limits_{i=1}^k\frac{\Gamma\left(\frac12-\frac{d_i}{2q}\left(2\tau- m\right)\right)}{\Gamma\left(\frac12+\frac{d_i}{2q}\left(2\tau+ m\right)\right)}\right]\left[\frac{\Gamma\left(\frac{1}{q}(2\tau-m)\right)}{\Gamma\left(1-\frac{1}{q}(2\tau+m)\right)}\right]^4\,.
\end{split}
\end{align*}
We consider the phase $r\gg 0$, which implies that we need to close the contour to the left.
The potential poles of the numerator are located at
\begin{align}
	\tau^{(0)}_n=\frac{1}{2}(m-qn)\,,\quad \tau^{(i=1,\ldots,k)}_n=\frac{1}{2}\left[ m +\frac{q}{d_i}\left(2n+1\right)\right]\,,
\end{align}
for $n\in\mathbb{N}$. One can check that, recalling $0<\mathfrak{q}\ll 1$, the poles at $\tau^{(i)}_n$ are cancelled by corresponding poles in the denominator.
On the other hand, the poles $\tau^{(0)}_n$ are only contained in the contour for $qn\ge m$.
This leads to
\begin{align*}
	\begin{split}
		Z_{S^2}=&\sum\limits_{n=0}^\infty\sum\limits_{m\le qn}v^{-m}\oint \frac{d\epsilon}{2\pi i} (z\bar{z})^{\frac{qn}{2}+\mathfrak{q}+\epsilon}\\
  &\cdot\left[\prod\limits_{i=1}^k\frac{\Gamma\left(\frac{1}{2}+\frac{d_i}{2}n\right)}{\Gamma\left(\frac{1}{2}-\frac{d_i}{2}n+\frac{d_i}{q}m-\epsilon\right)}\right]\left[\frac{\Gamma\left(-n-\epsilon\right)}{\Gamma\left(1+n-\frac{2}{q}m\right)}\right]^4\,.
	\end{split}
\end{align*}

To calculate the regular holomorphic part of the partition function, we can focus on the contributions from $n=0$, replace $(v\bar{v})^{\epsilon}$ by $1$ and set $\mathfrak{q}\rightarrow 0$ to obtain
\begin{align}
	\begin{split}
		Z'=&\sum\limits_{m\ge 0}v^{m}\oint \frac{d\epsilon}{2\pi i} \left[\prod\limits_{i=1}^k\frac{\Gamma\left(\frac{1}{2}\right)}{\Gamma\left(\frac{1}{2}-\frac{d_i}{q}m\right)}\right]\left[\frac{\Gamma\left(-\epsilon\right)}{\Gamma\left(1+\frac{2}{q}m\right)}\right]^4\\
		\propto&\sum\limits_{m\ge 0}2^{-16\frac{m}{q}}v^{m}\left[\prod\limits_{i=1}^k\frac{\Gamma\left(1+\frac{2d_i}{q}m\right)}{\Gamma\left(1+\frac{d_i}{q}m\right)}\right]\frac{1}{\Gamma\left(1+\frac{2}{q}m\right)^4}\,,
	\end{split}
 \label{eqn:locresult1}
\end{align}
where in the second step we have used the identity
\begin{align}
	\Gamma\left(\frac12-n\right)=\frac{(-4)^nn!}{(2n)!}\sqrt{\pi}\,.
\end{align}

It will turn out to be convenient to choose a parametrization $z(v)$ such that the complexified volume of a degree 1 curve in the $\mathbb{Z}_2$-charge $0$ sector takes the form
\begin{align}
    t(z)=\frac{1}{2\pi i}\log(z)+\mathcal{O}(z)\,.
\end{align}
This leads us to identify $z=2^{-\frac{16}{q^2}}v^{1/q}$ and we obtain the final result that the fundamental period of the mirror takes the form
\begin{align}
	\varpi_0(z)=\sum\limits_{m\ge 0}z^{qm}\left[\prod\limits_{i=1}^k\frac{\Gamma\left(1+\frac{2d_i}{q}m\right)}{\Gamma\left(1+\frac{d_i}{q}m\right)}\right]\frac{1}{\Gamma\left(1+\frac{2}{q}m\right)^4}\,.
 \label{eqn:fundamentalPeriod}
\end{align}

\section{Borisov-Li mirrors of Clifford nc-resolutions}
\label{sec:gorenstein}
We will now explicitly construct the Calabi-Yau threefold mirrors of $\Xd^{\text{n.c.}}$ using the combinatorial description of Clifford non-commutative resolutions via reflexive Gorenstein cones from~\cite{borisov2016clifford,Li2020}.
We then calculate the corresponding fundamental periods in the examples $\vec{d}=(2^4)$ and $\vec{d}=(5,1^3)$ and find that they match the result~\eqref{eqn:fundamentalPeriod} from the localization calculation in the previous section.

We only provide a lightning review of the general construction, essentially to fix our notations.
For further background on the nef partitions formulation  of mirror symmetry for complete intersections in toric varieties, the reader is referred to~\cite{borisov1993,Batyrev:1994pg}.  The Gorenstein cone formulation of mirror symmetry for complete intersections in toric varieties was introduced in \cite{ Batyrev:1996hiq} and is nicely discussed in~\cite{Batyrev:2007cq}.

\subsection{Nef partitions and Gorenstein cones}
\label{sec:nefgorenstein}
Consider a lattice $M\simeq \mathbb{Z}^k$ and denote the dual lattice by $N$. A reflexive polytope $\Delta$ in $M_\mathbb{R}=M\otimes\mathbb{R}$ determines a toric variety $\mathbb{P}_\Delta$ which is projective, Gorenstein, and Fano.  The Gorenstein condition says that the singularities are mild enough so that $\mathbb{P}_\Delta$ has a canonical bundle, and the Fano condition is that the anticanonical bundle is ample.

The complement of the dense torus $T=T^k$ in $\mathbb{P}_\Delta$ is the union $D_1\cup\ldots \cup D_\ell$ of $T$-invariant divisors $D_i$, whose sum is an anticanonical divisor on $\mathbb{P}_\Delta$.  The divisors $D_i$ are in 1-1 correspondence with the codimension 1 faces $\Theta_i\subset\Delta$.  To each $D_i$ is associated a vector $e_i\in N$ normal to $\Theta_i$ satisfying the equalities
\begin{align}
\begin{split}
\Delta=&\left\{m\in M_\mathbb{R}\mid \langle m,e_i\rangle\ge -1\ \forall\ 1\le i\le \ell\right\}\,,\\
\Theta_i=&\left\{m\in \Delta\mid \langle m,e_i\rangle =-1\right\}\,.
\end{split}
\label{eqn:facet}
\end{align}
The polar polytope of $\Delta$ is given by
\begin{align}
    \Delta^\circ = \{n\in N_\mathbb{R}\mid \langle m,n\rangle\ge-1\ \forall\ m\in \Delta \}=\text{Conv}(e_1,\ldots,e_\ell)\,.
\end{align}
The polytope $\Delta^\circ$ is also reflexive, and $(\Delta^\circ)^\circ=\Delta$.

A nef partition with $r$ parts is a partition of $\{D_1,\ldots,D_\ell\}$ into $r$ nonempty subsets $S_1,\ldots,S_r$ such that for each $j$, the divisor $E_j=\sum_{D_i\in S_j}D_i$ is generated by global sections, and in particular is nef.  Reformulating in terms of $\Delta$ as in \cite{Batyrev:1994pg}, this data determines a set of lattice polytopes 
$\Delta_j\subset \Delta$ consisting of all $m\in \Delta$ satisfying
\begin{align}
\begin{split}
    \langle m,e_i\rangle \ge -1\ {\rm if\ } D_i\in S_j\,,\\
    \langle m,e_i\rangle \ge 0\ {\rm if\ } D_i\not\in S_j\,.
    \end{split}
\end{align}
These polytopes have Minkowski sum $\Delta=\Delta_1+\ldots+\Delta_r$.

For a given lattice polytope $P\subset\mathbb{Z}^k\otimes\mathbb{R}$ and generic choice of complex coefficients $c_p\in\mathbb{C}$ for all $p\in P\cap\mathbb{Z}^k$ we define the Laurent polynomial 
\begin{align}
    F_P(x)=\sum\limits_{p\in P\cap \mathbb{Z}^k}c_p x^p\,,
\end{align}
using coordinates $(x_1,\ldots, x_k)$ on $T^k\simeq \left(\mathbb{C}^*\right)^k$.
The polynomials $F_{\Delta_j}(x)$ on $T^k$ extend to holomorphic sections of $\mathcal{O}(E_j)$ on $\mathbb{P}_\Delta$.  The zero locus of $F_{\Delta_j}(x)$ is a hypersurface  $Z_j\subset\mathbb{P}_\Delta$.

The toric variety $\mathbb{P}_\Delta$ can be blown up to a toric variety $\widetilde{\mathbb{P}_\Delta}$ determined by an appropriate choice of a fan $\Sigma$ subdividing the normal fan of $\Delta$, i.e.\ the fan formed by the cones over the faces of $\Delta^\circ$.  Letting $\widetilde{Z_j}$ be the proper transform of $Z_j$ on $\widetilde{\mathbb{P}_\Delta}$, we get a Calabi-Yau complete intersection $Z_{\Delta_{i=1,\ldots,r}}=\widetilde{Z_1}\cap\cdots\cap \widetilde{Z_r}\subset\widetilde{\mathbb{P}_\Delta}$.
The choice of fan will not affect the calculation of the mirror periods and in the following we will leave it implicit.

For each $1\le j\le r$, we put
\begin{align}
    \nabla_j=\text{Conv}\left(\{e_i\mid D_i\in S_j\}\cup\{0\}\right)\subset N_\mathbb{R}\,.
\end{align}
Then the Minkowski sum
\begin{align}
 \nabla=\nabla_1+\ldots+\nabla_r \,,
\label{eqn:nefpartition}
\end{align}
is reflexive.  We also have
\begin{align}
\nabla^{\circ}=\text{Conv}(\Delta_1,\ldots,\Delta_r)\,,\qquad \Delta^\circ =\text{Conv}(\nabla_1,\ldots,\nabla_r)\,,
\label{eqn:nefpartdelta}
\end{align}
while for all $p\in \Delta_i$ and $q\in\nabla_j$ one has $\langle p,q\rangle\ge-\delta_{i,j}$.

Having achieved a symmetric situation after an appropriate subdivision, to each polytope $\nabla_i$ we have a generic hypersurface $\widetilde{Z_i^\circ}\subset \widetilde{\mathbb{P}_{\nabla}}$ and a Calabi-Yau complete intersection $Z_{\nabla_{i=1,\ldots,r}}=\widetilde{Z_1^\circ}\cap\cdots\cap \widetilde{Z_r^\circ}$ which is mirror to $Z_\Delta\subset \widetilde{\mathbb{P}_\Delta}$.

The combinatorics of nef partitions can be rephrased, and generalized, using reflexive Gorenstein cones~\cite{Batyrev:1996hiq,Batyrev:2007cq}.
A $k$-dimensional integral cone $\sigma\subset M_\mathbb{R}$ is Gorenstein if there is an $n_\sigma\in N$ such that $\langle e,n_\sigma\rangle=1$ for each primitive integral generator $e$ of $\sigma$.  The vector $n_\sigma$ is unique, and is called the degree element $\mathrm{deg}^\vee$ in~\cite{borisov2016clifford}.

A reflexive Gorenstein cone $K\in\widetilde{M}_{\mathbb{R}}$ with $\widetilde{M}=M\oplus \mathbb{Z}^{r}$ is generated by the points $(p,\delta_{0,i})$ for all $p\in\Delta_i$ and $i=1,\ldots,r$.

The dual cone $K^\vee\in\widetilde{N}_{\mathbb{R}}$ with $\widetilde{N}=N\oplus \mathbb{Z}^{r}$ is similarly generated by the points  $(q,\delta_{0,j})$ for $q\in\nabla_j$ and $j=1,\ldots,r$.

The corresponding degree elements for $K$ and $K^\vee$ are $\text{deg}^\vee=(0,1,\ldots, 1)\in N\oplus \mathbb{Z}^r$ and $\text{deg}=(0,1,\ldots, 1)\in M\oplus \mathbb{Z}^r$ respectively, where in both cases the vector contains $r$ ones and $0$ denotes the zero element of $N$ (resp.\ $M$).  We also have $\langle \text{deg},\text{deg}^\vee\rangle=r$, which by definition indicates that the Gorenstein cones $K$ and $K^\vee$ each have index $r$.
One defines
\begin{align}
    K_{(1)}=\{\,p\in K\cap \widetilde{M}\mid \langle p,\text{deg}^\vee\rangle=1\,\}\,,
\end{align}
and similarly $K^\vee_{(1)}$.  

The nef partition in the form (\ref{eqn:nefpartdelta}) can be recovered from $K$ by the decomposition of the degree element
\begin{align}
    \text{deg}^\vee=u_1+\ldots+u_r\,,
\end{align}
with $u_i=(0,\delta_{1,i},\ldots,\delta_{r,i})\in K^\vee_{(1)}$.  
The dual nef partition (\ref{eqn:nefpartition}) can similarly be recovered from $K^\vee$ and the decomposition
\begin{align}
\text{deg}=t_1+\ldots+t_r\,,
\end{align}
of $\text{deg}$,  with $t_i=(0,\delta_{0,i},\ldots,\delta_{r,i})\in K_{(1)}$.  We have $\langle t_i,u_j\rangle=\delta_{ij}$.

To recover the nef partition description from the Gorenstein cone description without reference to the direct sum decomposition of $\widetilde{M}$ and $\widetilde{N}$, 
the lattice $N$ is recovered as the quotient lattice $\widetilde{N}/\langle u_1,\ldots, u_r\rangle \simeq N$, and the lattice $M$ is recovered as the quotient lattice $\widetilde{M}/\langle t_1,\ldots, t_r\rangle \simeq M$.\footnote{Using both of these descriptions obscures the intrinsic pairing between $M$ and $N$. In order to more easily see the pairing, one can use the quotient description of $N$ and identify $M$ with the annihilator of $\{u_1,\ldots,u_r\}$ in $\widetilde{M}$.  The restriction of the projection map $\widetilde{M}\rightarrow\widetilde{M}/\langle t_1,\ldots, t_r\rangle$ to the annihilator of $\{u_1,\ldots,u_r\}$ is an isomorphism of lattices, which can be used to compare the two descriptions of $M$ when desired.}
The polytopes $\Delta_i$ are the convex hulls of the images in $M$ of the points $p\in K$ that satisfy $\langle u_j,p\rangle =\delta_{i,j}$ for $j=1,\ldots,r$.

Using the isomorphism $M\simeq \text{Ann}(u_1,\ldots, u_r)\subset\widetilde{M}$, a generic function
\begin{align}
    c:K_{(1)} \rightarrow \mathbb{C}\,,
\label{eqn:coefficients}
\end{align}
determines the coefficients of the polynomials $F_{\Delta_i}(x)$ defining the hypersurfaces $Z_i$ used to construct the Calabi-Yau complete intersection $Z_1\cap\cdots\cap Z_r\subset\mathbb{P}_\Delta$.

The points in $p\in K_{(1)}^\vee\backslash U$ with $U=\{u_1,\ldots,u_r\}$ are associated to homogeneous coordinates $x_p$ on $\mathbb{P}_{\Delta}$.
We can also associate formal variables $p_i$ to $u_i$ for $i=1,\ldots,r$.
The function~\eqref{eqn:coefficients} then determines a polynomial 
\begin{align}
  P=\sum\limits_{m\in K_{(1)}}c(m)\prod\limits_{i=1}^rp_i^{\langle m,u_i\rangle} \prod\limits_{p\in K_{(1)}^\vee\backslash U}x_p^{\langle m,p\rangle}\,.
  \label{eqn:gorensteinP1}
\end{align}
From $\langle m,\text{deg}^\vee\rangle =1$ for all $m\in K_{(1)}$ it follows that $P$ is linear in the $p_i$.
The coefficient of $p_i$ can be identified with $F_{\Delta_i}(x)$.

More abstractly, without recourse to the nef partition formalism, we can start with dual lattices $\widetilde{M}$ and $\widetilde{N}$, dual reflexive Gorenstein cones $K\subset \widetilde{M}_\mathbb{R}$ and $K^\vee\subset \widetilde{N}_\mathbb{R}$ of index $r$, with degree elements $\deg^\vee\in K^\vee\cap\widetilde{N}$ for $K$ and $\deg\in K\cap\widetilde{M}$ for $K^\vee$, such that $\langle \deg,\deg^\vee\rangle =r$ and the generators of $K,K^\vee$ are respectively contained in $K_{(1)}$ and $K_{(1)}^\vee$.

One can then consider decompositions $\deg^\vee=u_1+\ldots u_r$ with $u_i\in K^\vee_{(1)}$ and $\deg=t_1+\ldots t_r$ with $t_i\in K_{(1)}$, satisfying $\langle t_i,u_j\rangle = \delta_{ij}$.
The above discussion can be modified to obtain a nef partition $\Delta_1,\ldots\Delta_r$ in $M_\mathbb{R}$, with $M=\text{Ann}(u_1,\ldots,u_r)$. 
This decomposition leads to a mirror nef partition $\nabla_1,\ldots,\nabla_r$ and a mirror pair of complete intersection Calabi-Yaus in toric varieties.

As was pointed out in~\cite{Batyrev:2007cq,borisov2016clifford}, in situations where different decompositions of $\text{deg}^\vee$ or $\text{deg}$ are possible, this leads to the double mirror phenomenon~\cite{Aspinwall:1993yb} where multiple large complex structure limits appear in the same moduli space.

Decompositions of the form
\begin{align}
    \text{deg}^\vee=\frac12\left(s_1+\ldots +s_{2r}\right)\,,\quad s_i\in K_{(1)}^\vee\,,
\end{align}
have been associated to Clifford type non-commutative resolutions in~\cite{borisov2016clifford}, building also on results from~\cite{Kuznetsov2008,BFK19}.
In this situation, the function (\ref{eqn:coefficients}) encodes the data of a bundle of quadrics, which has an associated bundle of even Clifford algebras \cite{Kuznetsov2008}.

Analogous to~\eqref{eqn:gorensteinP1}, we can associate formal variables $p_i$ to $s_i$ for $i=1,\ldots,{2r}$ and homogeneous coordinates $x_p$ to each $p\in K_{(1)}^\vee\backslash S$ with $S=\{s_1,\ldots,s_{2r}\}$.
The function~\eqref{eqn:coefficients} then determines a polynomial
\begin{align}
  P=\sum\limits_{m\in K_{(1)}}c(m)\prod\limits_{i=1}^{2r}p_i^{\langle m,s_i\rangle} \prod\limits_{p\in K_{(1)}^\vee\backslash S}x_p^{\langle m,p\rangle}\,,
  \label{eqn:gorensteinP2}
\end{align}
 and from $\langle \text{deg}^\vee, m\rangle =1$ it follows that $P$ is quadratic in the $p_i$.  
 
 The $p_i$ are interpreted as fiber coordinates on a line bundle $\mathcal{L}_i$ over a toric base $B$ (which may be a toric stack rather than a toric variety).  Then the polynomial (\ref{eqn:gorensteinP2}) is quadratic on the fibers and so induces a bundle of quadric hypersurfaces in $\mathbb{P}(\mathcal{L}_1\oplus\cdots\oplus \mathcal{L}_{2r})$.  In this situation, we get a sheaf of even Clifford algebras $\mathcal{B}_0$ on $B$ \cite{Kuznetsov2008}.

More generally, one can consider decompositions of the form
\begin{align}
    \text{deg}^\vee=u_1+\ldots+u_{r-r'}+\frac12\left(s_1+\ldots+s_{2r'}\right)\,,
    \label{eqn:degus}
\end{align}
which lead to non-commutative resolutions associated to quadric bundles over complete intersections in toric varieties.

\subsection{Mirrors of Clifford nc-resolutions}
\label{sec:gorensteincones}
Lev Borisov kindly provided to us the following construction of the Clifford nc-resolutions of $\Xd$ and their mirrors in terms of decompositions of degree elements of dual Gorenstein cones based on~\cite{borisov2016clifford}.

We consider a basis $e_{i=1,\ldots,4}^*,\,f_{j=1,\ldots,k}^*$ of a rank $k+4$ lattice $M'$ and denote the dual basis on the dual lattice $N'$ by $e_{i=1,\ldots,4},\,f_{j=1,\ldots,k}$.
Using $\Sigma_e=\sum_{i=1}^4e_i$, $\Sigma_f=\sum_{j=1}^kf_i$, with corresponding definitions for $\Sigma_{e^*},\Sigma_{f^*}$, and $\Sigma_{f,\vec{d}}=\sum_{j=1}^k d_i f_i$ we then define the lattice
\begin{align}
    \widetilde{M}=\left\{\,v \in M'\,\,\vert\,\,\langle v, 2\Sigma_e-\Sigma_{f,\vec{d}}\rangle =0\,,\, \langle v,\Sigma_{f}\rangle = 0\text{ mod }2\,\right\}\,,
    \label{eqn:mlattice}
\end{align}
with dual lattice
\begin{align}
\widetilde{N}=\left(N'+\mathbb{Z}\cdot \frac{\Sigma_{f}}{2}\right)\big/\left(\Sigma_e-\frac12\Sigma_{f,\vec{d}}\right)\,.
\label{eqn:ntilde}
\end{align}
The dual lattices $\widetilde{M}$ and $\widetilde{N}$ have rank $k+3$.
We freely identify elements of $M',N'$ and their images in/lifts to $\widetilde{M},\widetilde{N}$.
The elements
\begin{align}
    d_i e_l^*+2f_i^*\in \widetilde{M}\,,\quad 1\le l\le 4\,,\quad 1\le i\le k\,,
    \label{eqn:gorensteinGens}
\end{align}
generate a Gorenstein cone $K\subset \widetilde{M}_{\mathbb{R}}$ with degree element $\text{deg}^{\vee}=\frac12\Sigma_f$. The primitive elements of $K_{(1)}$ are given by
\begin{align}
    \sum_{l=1}^4 a_le_l^*+f_i^*+f_j^*\in K\,,\quad a_l\ge0,\ \sum_{i=1}^4 a_l=\frac{d_i+d_j}2, \quad 1\le i,j\le k\,.
\label{eqn:k1prim}
\end{align}
The dual cone $K^\vee \in \widetilde{N}_{\mathbb{R}}$ is generated by $e_{i=1,\ldots,4}$, $f_{j=1,\ldots,k}$ and the degree element is $\text{deg}=\Sigma_{e^*}+\Sigma_{f^*}$.  The index of the Gorenstein cones $K$ and $K^\vee$ is $\langle \deg,\deg^\vee\rangle=k/2$ (recall that $k=2n$ is even).

The relevant decomposition of the degree element $\text{deg}^{\vee}$ that corresponds to the Clifford nc-resolution of $X_{\vec{d}}$ is given by
\begin{align}
\text{deg}^{\vee}=\frac12\left(f_1+\ldots+f_k\right)\,.
\end{align}

We next sketch how this toric description corresponds to the bundle of quadrics over $\mathbb{P}^3$ described in Section~\ref{sec:sods}.  We consider the fan $\Sigma$ in $\tilde{N}_\mathbb{R}$ whose maximal cones are spanned by the $f_j$ and all but one of the $e_i$.  We claim that the resulting toric stack $X_\Sigma$ is the total space of an orbi-vector bundle over $\mathbb{P}^3$.

Define the rank 3 lattice $\underline{N}=\widetilde{N}/\langle f_1,\ldots,f_k,\deg^\vee\rangle$.  Then the images $\underline{e}_i$ of $e_i$ in $\underline{N}$ generate $\underline{N}$ and satisfy $\underline{e}_1+\underline{e}_2+\underline{e}_3+\underline{e}_4=0$.  Let $\underline{\Sigma}$ be the complete fan in $\underline{N}_\mathbb{R}$ with edges spanned by the $\underline{e}_i$. The associated toric variety $X_{\underline{\Sigma}}$ is isomorphic to $\mathbb{P}^3$.

Since the natural projection $\widetilde{N}\rightarrow\underline{N}$ sends the cones of $\Sigma$ to the cones of $\underline{\Sigma}$, we get a map of toric varieties $X_\Sigma\to X_{\widehat{\Sigma}}=\mathbb{P}^3$ which is easily seen to be a fiber bundle.  The fiber is the toric stack defined by the fan in the lattice $\langle f_1,\ldots,f_k,\deg^\vee\rangle$
with a single maximal cone spanned by the $f_j$.  If the lattice were spanned by $f_1,\ldots,f_k$, we would get the total space of a direct sum of line bundles on $\mathbb{P}^3$.  However, that lattice has index 2 in the larger lattice due to the presence of $\deg^\vee$, so we get a $\mathbb{Z}_2$ stacky quotient of that vector bundle over $\mathbb{P}^3$.   The relationship $2\deg^\vee=\Sigma_f$ shows that this $\mathbb{Z}_2$ acts as $-1$ on each of the coordinates $p_i$.  This additional $\mathbb{Z}_2$ action becomes trivial after projectivization and so we deduce a projective bundle $\mathbb{P}(\mathcal{L}_1\oplus\cdots\oplus \mathcal{L}_r)$, as stated at the end of Section~\ref{sec:nefgorenstein}. This $\mathbb{Z}_2$ perfectly matches the $\mathbb{Z}_2$ appearing in the GLSM description in (\ref{eqn:glsmCharges}) and the subsequent discussion.

Finally, to complete the translation to Section~\ref{sec:sods} we show that the coefficient of $p_ip_j$ has degree $(d_i+d_j)/2$ on $\mathbb{P}^3$.  From (\ref{eqn:gorensteinP2}) (where we now use $f_\ell$ in place of $s_\ell$), we see that the monomials involving $p_ip_j$ come from the elements (\ref{eqn:k1prim}) of $K_{(1)}$ with the indicated values of $i,j$.
Then the coefficient of $p_ip_j$ coming from one of these generators is precisely $\sum_{l=1}^4 x_l^{a_l}$, a monomial of degree $(d_i+d_j)/2$.  We can then realize any polynomial of degree $(d_i+d_j)/2$ in $x_1,\ldots,x_4$ by choosing an appropriate coefficient function $c$.

To construct a mirror decomposition of $\text{deg}$ we first choose subsets $I_i\subset \{e_1^*,\ldots,e_4^*\},\,i=1,\ldots,\frac{k}{2}$ such that $\#I_i=\frac{d_{2i-1}+d_{2i}}{2}$ and $\cup_i I_i=\{e_1^*,\ldots,e_4^*\}$.
We then define the generators
\begin{align}
    t_i=\sum\limits_{e^*\in I_i}e^*+f^*_{2i-1}+f^*_{2i}\,,
    \label{eqn:tdec}
\end{align}
and use the decomposition $\text{deg}=t_1+\ldots+t_{\frac{k}{2}}$.
This realizes the mirror of $\Xdnc$ as a toric complete intersection Calabi-Yau threefold without a topologically non-trivial flat B-field.
All we need to do to make contact with Section~\ref{sec:spherepartition} is to calculate the fundamental period.

\bigskip
In the following sections, we look at two examples.  Rather than describing our Gorenstein cones in the more symmetric form given above, we break symmetry by constructing isomorphisms $\widetilde{M}\simeq \mathbb{Z}^{4+k-1}$ that lead to an explicit description of the toric variety associated to $K^\vee$ and the Calabi-Yau threefold complete intersection associated to the decomposition of $\text{deg}$ which is more amenable to computer calculation.

\subsection{$\vec{d}=(2^4),\,n_s=80$}
We first construct a map $T:\,\mathbb{Z}^7\rightarrow M'$ that restricts to an isomorphism on $\widetilde{M}$~\eqref{eqn:mlattice}.
As a matrix we can choose this to take the form
\begin{align}
    T=\left(
\begin{array}{ccccccc}
 2 & 1 & 0 & 0 & 0 & 1 & 0 \\
 0 & 1 & 1 & 0 & 0 & 1 & 0 \\
 0 & 0 & 1 & 1 & 0 & 0 & 1 \\
 0 & 0 & 0 & 1 & 0 & 0 & 1 \\
 2 & 0 & 0 & 0 & 0 & 1 & 0 \\
 0 & 2 & 0 & 0 & 1 & 1 & 0 \\
 0 & 0 & 2 & 0 & -1 & 0 & 1 \\
 0 & 0 & 0 & 2 & 0 & 0 & 1 \\
\end{array}
    \right)\,.
    \label{eqn:gextmatrix}
\end{align}
We denote by $T'$ the full rank matrix that is obtained by adding the column $(0^7,1)$ to $T$.
We can then use $(T')^{-1}$ to translate the generators~\eqref{eqn:gorensteinGens} into $\mathbb{Z}^7\oplus\mathbb{Z}$ and drop the last component to obtain the image in $\mathbb{Z}^7$.

We can choose the subsets $I_1,I_2$ in~\eqref{eqn:tdec} such that
\begin{align}
    t_1=e_1^*+e_2^*+f_1^*+f_2^*\,,\quad t_2=e_3^*+e_4^*+f_3^*+f_4^*\,.
    \label{eqn:gextdec}
\end{align}
Note that these correspond to the columns 6,7 in~\eqref{eqn:gextmatrix} such that we can directly identify $\mathbb{Z}^7=\mathbb{Z}^5\oplus\mathbb{Z}^2=M\oplus\mathbb{Z}^2$.

To obtain the polytope $\nabla^{\circ}$ associated to the dual ``nef partition'', we therefore drop the last two components, amounting to the projection from $\widetilde{M}$ to $M$.
The resulting matrix $\left[(T')^{-1}\right]_{\text{res.}}$, that maps a vector in $M'$ to a corresponding vector in $\mathbb{Z}^5$, takes the form
\begin{align}
    \left[(T')^{-1}\right]_{\text{res.}}=\left(
\begin{array}{cccccccc}
 \frac{1}{2} & -\frac{1}{2} & \frac{1}{2} & -\frac{1}{2} & 0 & 0 & 0 & 0 \\
 1 & 0 & 0 & 0 & -1 & 0 & 0 & 0 \\
 0 & 0 & 1 & -1 & 0 & 0 & 0 & 0 \\
 1 & 1 & 1 & 0 & -1 & -1 & -1 & 0 \\
 -1 & -1 & 1 & -1 & 1 & 1 & 0 & 0 \\
\end{array}
\right)\,.
\end{align}
Applying this to the generators~\eqref{eqn:gorensteinGens}, and taking into account the decomposition~\eqref{eqn:gextdec}, we then find the polytopes $\nabla^{\circ}=\text{Conv}(\Delta_1,\Delta_2)$ with $\Delta_1,\Delta_2\subset\mathbb{Z}^5$ being as follows:
\begin{align}
\Delta_1=\text{Conv}\left(
\begin{array}{ccccccccc}
 1 & 1 & -1 & -1 & 1 & 1 & -1 & -1 & 0 \\
 0 & 2 & -2 & 0 & -2 & 0 & -2 & 0 & 0 \\
 0 & 0 & 0 & 0 & 2 & 2 & -2 & -2 & 0 \\
 0 & 0 & 0 & 0 & 0 & 0 & -2 & -2 & 0 \\
 0 & 0 & 0 & 0 & 4 & 4 & 0 & 0 & 0 \\
\end{array}
\right)
\label{eqn:2222delta1}
\end{align}
\begin{align}
\Delta_2=\text{Conv}\left(
\begin{array}{ccccccccc}
 1 & 1 & -1 & -1 & 1 & 1 & -1 & -1 & 0 \\
 2 & 2 & 0 & 0 & 0 & 0 & 0 & 0 & 0 \\
 0 & 0 & 0 & 0 & 2 & 2 & -2 & -2 & 0 \\
 0 & 2 & 0 & 2 & 0 & 2 & -2 & 0 & 0 \\
 -2 & -2 & -2 & -2 & 2 & 2 & -2 & -2 & 0 \\
\end{array}
\right)
\label{eqn:2222delta2}
\end{align}
Note that we write the points as columns.
One can check using SageMath~\cite{sagemath} that $\nabla^{\circ}$ is reflexive.

Using the dual transformation on the generators of $K^\vee$ we obtain the polytopes
\begin{align}
    \nabla_1=\text{Conv}\left(
\begin{array}{cccc}
 1 & -1 & 1 & -1 \\
 0 & 0 & -1 & 1 \\
 0 & 1 & 0 & 0 \\
 0 & 0 & 0 & 0 \\
 -\frac{1}{2} & -\frac{1}{2} & -\frac{1}{2} & \frac{1}{2} \\
\end{array}
\right)\,,\quad \nabla_2=\text{Conv}\left(
\begin{array}{cccc}
 0 & 0 & 0 & 0 \\
 0 & 0 & 0 & 0 \\
 0 & -1 & 1 & -1 \\
 0 & 0 & -1 & 1 \\
 \frac{1}{2} & \frac{1}{2} & -\frac{1}{2} & \frac{1}{2} \\
\end{array}
\right)\,,
\end{align}
in $\mathbb{Z}^4\oplus (1/2)\mathbb{Z}$.
One can again check that the Minkowski sum $\nabla=\nabla_1+\nabla_2$ is a reflexive lattice polytope and the polar dual of $\nabla^\circ$.

One can also verify that $2\cdot\text{Conv}(\nabla_1,\nabla_2)$ is also reflexive and the polar dual of $(\Delta_1+\Delta_2)/2$.
The diagram~\eqref{eqn:nefpartition} therefore appears to be modified to
\begin{align}
\begin{array}{c}
\Delta=(\Delta_1+\ldots+\Delta_r)/2\\[.2cm]
\nabla^{\circ}=\text{Conv}(\Delta_1,\ldots,\Delta_r)
\end{array}\qquad
\begin{array}{c}
\Delta^\circ=2\cdot \text{Conv}( \nabla_1,\ldots,\nabla_r)\\[.2cm]
\nabla=\nabla_1+\ldots+\nabla_r
\end{array}\,,
\label{eqn:nefpartition2}
\end{align}
while still for all $p\in \Delta_i$ and $q\in\nabla_j$ one has $\langle p,q\rangle\ge-\delta_{i,j}$.

We can shift the points in $\nabla_1,\nabla_2$ using e.g. $\vec{p}=(0,0,0,0,1/2)$ to obtain the polytopes $\nabla_1'=\nabla_1+\vec{p}$ and $\nabla_2'=\nabla_2-\vec{p}$ with
\begin{align}
    \nabla_1'=\text{Conv}\left(
\begin{array}{cccc}
 1 & -1 & 1 & -1 \\
 0 & 0 & -1 & 1 \\
 0 & 1 & 0 & 0 \\
 0 & 0 & 0 & 0 \\
 0 & 0 &0& 1\\
\end{array}
\right)\,,\quad \nabla_2'=\text{Conv}\left(
\begin{array}{cccc}
 0 & 0 & 0 & 0 \\
 0 & 0 & 0 & 0 \\
 0 & -1 & 1 & -1 \\
 0 & 0 & -1 & 1 \\
0 & 0& -1 &0 \\
\end{array}
\right)\,,
\label{eqn:nablashifted2222}
\end{align}
in order to obtain lattice polytopes in $\mathbb{Z}^5$ while still preserving $\nabla=\nabla_1'+\nabla_2'$.
However, only $\nabla_2'$ contains the origin.

Note that if we just replace $\nabla_1'$ by the convex hull of the points with the origin added, the resulting polytopes will not have a Minkowski sum equal to $\nabla$.
Instead we obtain an actual nef partition 
\begin{align}
\begin{array}{c}
\widetilde{\Delta}=\widetilde{\Delta}_1+\ldots+\widetilde{\Delta}_r\\[.2cm]
\widetilde{\nabla}^{\circ}=\text{Conv}(\widetilde{\Delta}_1,\ldots,\widetilde{\Delta}_r)
\end{array}\qquad
\begin{array}{c}
\widetilde{\Delta}^\circ=\text{Conv}( \widetilde{\nabla}_1,\ldots,\widetilde{\nabla}_r)\\[.2cm]
\widetilde{\nabla}=\widetilde{\nabla}_1+\ldots+\widetilde{\nabla}_r
\end{array}\,,
\label{eqn:nefpartitionTulde}
\end{align}
of polytopes $\widetilde{\Delta},\widetilde{\nabla}$ with $\widetilde{\nabla}_1=\text{Conv}(\nabla_1',0)$ and $\widetilde{\nabla}_2=\nabla_2'$.
Note that different choices of shift lead to different $\nabla_1',\nabla_2'$ and therefore different $\widetilde{\Delta},\widetilde{\nabla}$.
For our choice in~\eqref{eqn:nablashifted2222} the resulting nef partition happens to be such that $Z_{\widetilde{\nabla}_1,\widetilde{\nabla}_2}=\widehat{\Xdr}\,$!

Writing down the defining equations of the mirror $Z_{\nabla'_1,\nabla'_2}$ for $\Xdnc$ in coordinates $x_1,\ldots,x_5$ on the algebraic torus $(\mathbb{C}^*)^5$, and labelling the coefficients corresponding to the points in $\nabla_1',\nabla_2'$ (in the order in which they appear above) by $a_1,\ldots,a_8$, we obtain the equations
\begin{align}
\begin{split}
    p_1=&a_1 x_1+\frac{a_2 x_3}{x_1}+\frac{a_3 x_1}{x_2}+\frac{a_4 x_2 x_5}{x_1}\,,\\
    p_2=&a_5+\frac{a_6}{x_3}+\frac{a_7 x_3}{x_4 x_5}+\frac{a_8 x_4}{x_3}\,.
    \end{split}
\end{align}

In order to obtain the fundamental period $\varpi_0$, we first add a constant term $a_0$ to $p_1'$ to obtain
\begin{align}
    \tilde{p}_1=&a_0+a_1 x_1+\frac{a_2 x_3}{x_1}+\frac{a_3 x_1}{x_2}+\frac{a_4 x_2 x_5}{x_1}\,.
\end{align}
We then calculate
\begin{align}
    \varpi_0'=\int\limits_{\Gamma_0}\frac{a_0\cdots a_8}{\tilde{p_1}p_2}\prod\limits_{i=1}^5\frac{dx_i}{x_i}\,,
    \label{eqn:funddef}
\end{align}
where the contour $\Gamma_0$ is along $\vert x_i\vert=1,\,i=1,\ldots,5$, to obtain the fundamental period of the corresponding deformation of $Z_{\nabla'_1,\nabla'_2}$ following~\cite{Hosono:1994ax}.
The expression can be evaluated by expanding in powers of $1/a_0$ and $1/a_5$.
Using
\begin{align}
    w=\frac{a_3a_4a_7a_8}{a_0^2a_5^2}\,,\quad v=-\frac{a_1a_2a_6}{a_0^2a_5}\,,
\end{align}
we then obtain the result
\begin{align}
\begin{split}
    \varpi_0'(v,w)=&\sum\limits_{n,k=0}^\infty\frac{(2k+2n)!(k+2n)!}{(k!)^3(n!)^4}v^kw^n\\
    =&\sum\limits_{n=0}^\infty {_3}F_2\left(\frac12+n,1+n,1+2n;1,1;4v\right)\frac{[(2n)!]^2}{(n!)^4}w^n\,.
    \end{split}
\label{eqn:varpiprime}
\end{align}

Deforming $p_1$ to $p_1'$ can be interpreted as adding the origin to $\nabla_1'$, obtaining $\widetilde{\nabla}_1$, and therefore performing a conifold transition from the mirror of $\Xd^{\text{n.c.}}$ to the mirror of $\widehat{\Xdr}$.
As a result,~\eqref{eqn:varpiprime} is nothing but the fundamental period associated to the mirror of $\widehat{\Xdr}$.

The fundamental period $\varpi_0(z)$ of the mirror of $\Xdnc$ is now obtained by taking the limit $a_0\rightarrow 0$, leaving $z'=w/v$ finite, such that
\begin{align}
\begin{split}
    \varpi_0(z')=&\lim\limits_{u\rightarrow 0}\frac{1}{-i \sqrt{u}}\varpi_0'\left(\frac{z'}{u},\frac{1}{u}\right)\\
    =&\frac12 {_4}F_3\left(\frac12,\frac12,\frac12,\frac12;1,1,1;-z'\right)\,,\quad z'=-\frac{a_3a_4a_7a_8}{a_1a_2a_5a_6}\,.
\end{split}
\label{eqn:varpi}
\end{align}
Note that we have divided by $-i\sqrt{u}$ in order to compensate for the factor of $a_0$ in the numerator of the integrand in~\eqref{eqn:funddef}.
This exactly matches the result~\eqref{eqn:locresult1} for the fundamental period from the localization calculation in Section~\ref{sec:glsm}.
After rescaling by $2$ and changing to the coordinate $z=-2^{-8}z'$ we obtain the result~\ref{eqn:fundamentalPeriod}.

\subsection{$\vec{d}=(5,1^3),\,n_s=64$}
We start again by constructing a map $T:\,\mathbb{Z}^7\rightarrow M'$ that restricts to an isomorphism on $\widetilde{M}$~\eqref{eqn:mlattice}.
As a matrix we can choose this to take the form
\begin{align}
    T=\left(
\begin{array}{ccccccc}
 2 & 0 & 0 & 0 & 0 & 1 & 0 \\
 0 & 1 & 0 & 0 & 0 & 1 & 0 \\
 0 & 0 & 1 & 0 & 0 & 1 & 0 \\
 0 & 0 & 0 & 1 & 0 & 0 & 1 \\
 1 & 0 & 0 & 0 & 0 & 1 & 0 \\
 -1 & 2 & 0 & 0 & 1 & 1 & 0 \\
 0 & 0 & 2 & 0 & -1 & 0 & 1 \\
 0 & 0 & 0 & 2 & 0 & 0 & 1 \\
\end{array}
    \right)\,.
    \label{eqn:gextmatrix2}
\end{align}
We denote by $T'$ the full rank matrix that is obtained by adding the column $(0^7,1)$ to $T$.
We can then use $(T')^{-1}$ to translate the generators~\eqref{eqn:gorensteinGens} into $\mathbb{Z}^7\oplus\mathbb{Z}$ and drop the last component to obtain the image in $\mathbb{Z}^7$.

We can choose the subsets $I_1,I_2$ in~\eqref{eqn:tdec} such that
\begin{align}
    t_1=e_1^*+e_2^*+e_3^*+f_1^*+f_2^*\,,\quad t_2=e_4^*+f_3^*+f_4^*\,.
    \label{eqn:gextdec2}
\end{align}
Again these correspond to the columns 6,7 in~\eqref{eqn:gextmatrix2} such that we can directly identify $\mathbb{Z}^7=\mathbb{Z}^5\oplus\mathbb{Z}^2=M\oplus\mathbb{Z}^2$.

To obtain the polytope $\nabla^{\circ}$ associated to the dual ``nef partition'', we therefore drop the last two components, amounting to the projection from $\widetilde{M}$ to $M$.
The resulting matrix $\left[(T')^{-1}\right]_{\text{res.}}$, that maps a vector in $M'$ to a corresponding vector in $\mathbb{Z}^5$, takes the form
\begin{align}
    \left[(T')^{-1}\right]_{\text{res.}}=\left(
\begin{array}{cccccccc}
 \frac{1}{2} & -\frac{1}{2} & \frac{1}{2} & -\frac{1}{2} & 0 & 0 & 0 & 0 \\
 1 & 0 & 0 & 0 & -1 & 0 & 0 & 0 \\
 0 & 0 & 1 & -1 & 0 & 0 & 0 & 0 \\
 1 & 1 & 1 & 0 & -1 & -1 & -1 & 0 \\
 -1 & -1 & 1 & -1 & 1 & 1 & 0 & 0 \\
\end{array}
\right)\,.
\end{align}
Applying this to the generators~\eqref{eqn:gorensteinGens}, and taking into account the decomposition~\eqref{eqn:gextdec2}, we then find the polytopes $\nabla^{\circ}=\text{Conv}(\Delta_1,\Delta_2)$ with $\Delta_1,\Delta_2\subset\mathbb{Z}^5$ being as follows:
\begin{align}
\Delta_1=\text{Conv}\left(
\begin{array}{ccccccccc}
 3 & 1 & -2 & 0 & -2 & 0 & -2 & 0 & 0 \\
 1 & 1 & 1 & 1 & -4 & 0 & -4 & 0 & 0 \\
 1 & 1 & -4 & 0 & 1 & 1 & -4 & 0 & 0 \\
 0 & 0 & 0 & 0 & 0 & 0 & -5 & -1 & 0 \\
 2 & 2 & -8 & 0 & 2 & 2 & 2 & 2 & 0 \\
\end{array}
\right)
\label{eqn:5111delta1}
\end{align}
\begin{align}
\Delta_2=\text{Conv}\left(
\begin{array}{ccccccccc}
 1 & 1 & 0 & 0 & 0 & 0 & 0 & 0 & 0 \\
 1 & 1 & 1 & 1 & 0 & 0 & 0 & 0 & 0 \\
 1 & 1 & 0 & 0 & 1 & 1 & 0 & 0 & 0 \\
 0 & 2 & 0 & 2 & 0 & 2 & -1 & 1 & 0 \\
 0 & 0 & -2 & -2 & 0 & 0 & 0 & 0 & 0 \\
\end{array}
\right)
\label{eqn:5111delta2}
\end{align}
One can check using SageMath~\cite{sagemath} that $\nabla^{\circ}$ is reflexive.

We find also find the polytopes
\begin{align}
    \nabla_1=\left(
\begin{array}{ccccc}
 2 & 0 & 0 & 1 & -1 \\
 -1 & 0 & -1 & -1 & 1 \\
 0 & 0 & 1 & 0 & 0 \\
 0 & 0 & 0 & 0 & 0 \\
 -\frac{1}{2} & -\frac{1}{2} & -\frac{1}{2} & -\frac{1}{2} & \frac{1}{2} \\
\end{array}
\right)\,,\quad \nabla_2=\left(
\begin{array}{ccc}
 0 & 0 & 0 \\
 0 & 0 & 0 \\
 -1 & 1 & -1 \\
 0 & -1 & 1 \\
 \frac{1}{2} & -\frac{1}{2} & \frac{1}{2} \\
\end{array}
\right)\,,
\end{align}
and choose the shift such that
\begin{align}
\nabla_1'=\left(
\begin{array}{ccccc}
 2 & 0 & 0 & 1 & -1 \\
 -1 & 0 & -1 & -1 & 1 \\
 1 & 1 & 2 & 1 & 1 \\
 -1 & -1 & -1 & -1 & -1 \\
 -1 & -1 & -1 & -1 & 0 \\
\end{array}
\right)\,,\quad \nabla_2'=\left(
\begin{array}{ccc}
 0 & 0 & 0 \\
 0 & 0 & 0 \\
 -2 & 0 & -2 \\
 1 & 0 & 2 \\
 1 & 0 & 1 \\
\end{array}
\right)\,.
\end{align}
We have choosen the shift such that the nef partition $\widetilde{\Delta}^\circ=\text{Conv}(\nabla_1',\nabla_2')$ and $\widetilde{\nabla}=\text{Conv}(\nabla_1',0)+\nabla_2'$ again corresponds to $\widehat{\Xdr}$ and its mirror.

The defining polynomials on the algebraic torus take form
\begin{align}
\begin{split}
    p_1=&\frac{a_1 x_3 x_1^2}{x_2 x_4 x_5}+\frac{a_4 x_3 x_1}{x_2 x_4 x_5}+\frac{a_3 x_3^2}{x_2
   x_4 x_5}+\frac{a_2 x_3}{x_4 x_5}+\frac{a_5 x_2 x_3}{x_4 x_1}\,,\\
    p_2=&a_7+\frac{a_8 x_5 x_4^2}{x_3^2}+\frac{a_6 x_5 x_4}{x_3^2}\,,
\end{split}
\end{align}
and introducing again $\tilde{p}_1=a_0+p_1$ we calculate
\begin{align}
    \varpi_0'=\int\limits_{\Gamma_0}\frac{a_0\cdots a_8}{\tilde{p}_1p_2}\prod\limits_{i=1}^5\frac{dx_i}{x_i}\,.
    \label{eqn:funddef2}
\end{align}
In terms of the invariant coordinates
\begin{align}
    w=\frac{a_1 a_2 a_3 a_6}{a_0 a_4^2 a_7}\,,\quad v=-\frac{a_4 a_5 a_8}{a_0^2 a_7}\,,
\end{align}
this leads to the expansion
\begin{align}
\begin{split}
    \varpi_0'(v,w)=&\sum\limits_{n,k=0}^\infty\frac{(2k+n)!(k+n)!}{(k!)^2(k-2n)!(n!)^4}v^kw^n\\
        =&\sum\limits_{n=0}^\infty\frac{{_3}F_2\left(\frac12+\frac{k}{2},1+\frac{k}{2},1+k;1,1-2k;4v\right)}{\Gamma(1-2k)(k!)^2}w^n\,.
    \end{split}
\end{align}
Taking the limit
\begin{align}
\begin{split}
    \varpi_0(z')=&\lim\limits_{u\rightarrow 0}\frac{1}{-i \sqrt{u}}\varpi_0'\left(\frac{z'}{u},\frac{1}{u}\right)\,,\qquad z'=\frac{a_1 a_2 a_3 a_6}{\sqrt{a_4^5 a_5 a_7 a_8}}\\
    =&\frac12 {_4}F_3\left(\frac{1}{10},\frac{3}{10},\frac{7}{10},\frac{9}{10};1,1,1;5^52^{-8}z'^2\right)\,,
\end{split}
\end{align}
and rescaling by 2 and using the coordinate $z=-2^{-8}z'$ we reproduce~\ref{eqn:fundamentalPeriod}.

\section{Mirror periods, smooth cousins and topological B-branes}
\label{sec:branes}
\begin{table}[t!]
\centering
\begin{align*}
\begin{array}{|c|c|c|c|}\hline
\Xdnc&\text{Mirror Picard-Fuchs operator } \mathcal{D}_{\vec{d}}&\Delta(z)&\text{Smooth cousin } \widetilde{X}_{\vec{d}}\\\hline
\Xnc{(5,1^3)}&\theta ^4-2^85 z^2(5 \theta +1) (5 \theta +3) (5 \theta +7) (5 \theta +9)&1-2^85^5z^2&X_{10}\subset\mathbb{P}^4(1^3,2,5)\\
\Xnc{(4,2^2)}&\theta ^4-2^4 z (4 \theta +1)(2 \theta +1)^2 (4 \theta +3)&1-2^{10}z&X_{4,2}\subset\mathbb{P}^5\\
\Xnc{(3^2,1^2)}&\theta ^4-2^83^2z^2 (3 \theta +1)^2 (3 \theta +5)^2&1-2^83^6z^2&X_{6,6}\subset\mathbb{P}^5(1^2,2^2,3^2)\\
\Xnc{(3,1^5)}&\theta ^4-2^83 z^2 (\theta +1)^2 (3 \theta +1) (3 \theta +5)&1-2^83^3z^2&X_{6,2}\subset\mathbb{P}^5(1^5,3)\\
\Xnc{(2^4)}&\theta ^4-2^4z (2 \theta +1)^4&1-2^8z&X_{2,2,2,2}\subset\mathbb{P}^7\\
\Xnc{(1^8)}&\theta ^4-2^8z^2 (\theta +1)^4&1-2^8z^2&X_{2,2,2,2}\subset\mathbb{P}^7\\
\hline
\end{array}
\end{align*}
\caption{The Picard-Fuchs operators that annihilate the mirror periods of $\Xdnc$ and the corresponding discriminant polynomials $\Delta(z)$. We also indicate the smooth complete intersection Calabi-Yau 3-folds that share the same mirror variations of Hodge structure over $\mathbb{Q}$ or one that is related by a 2:1 covering.}
\label{tab:operators}
\end{table}
Using the general expression~\eqref{eqn:fundamentalPeriod} we obtain the fundamental period of the mirror $Y_{\vec{d}}$ of $\Xd^{\text{n.c.}}$ for each of the determinantal octic double solids.
It turns out that the period is always hypergeometric.

As a result, the fundamental period is -- up to a change of coordinate for odd decompositions -- equivalent to the fundamental period of the mirror $\widetilde{Y}_{\vec{d}}$ of a Calabi-Yau complete intersection $\widetilde{X}_{\vec{d}}$ in a weighted projective space~\cite{Hosono:1993qy,Hosono:1994ax,Doran:2005gu}.
We refer to $\widetilde{X}_{\vec{d}}$ as the ``smooth cousin'' of $\Xdnc$.
The Picard-Fuchs operator associated to each $\Xdnc$ as well as the corresponding smooth cousins are listed in Table~\ref{tab:operators}.

For the case $X_{(1^8)}^{\text{n.c.}}$ it turns out that $\widetilde{X}_{(1^8)}$ is a complete intersection of four quadrics $X_{2,2,2,2}$ in $\mathbb{P}^7$ and there is a twisted derived equivalence between $X_{(1^8)}^{\text{n.c.}}$ and $\widetilde{X}_{(1^8)}$~\cite{Kuznetsov2008,addington2009derived,Katz:2022lyl}}.
Physically, this is a consequence of the fact that the two backgrounds correspond to two different large volume limits in the same stringy K\"ahler moduli space, as can be seen from the GLSM analysis in~\cite{Caldararu:2010ljp}. 
In particular, they share the same mirror Calabi-Yau and are so-called Clifford double mirrors~\cite{borisov2016clifford}.

However, for each of the other determinantal octic double solids we find that the mirrors of $\Xdnc$ and $\widetilde{X}_{\vec{d}}$ are different and in the stringy K\"ahler moduli space of $\Xdnc$ there is no additional large volume limit associated to $\widetilde{X}_{\vec{d}}$.

The fact that different Calabi-Yau manifolds can share the same variation of Hodge structure, at least up to a non-integral change of basis in $\text{Sp}(4,\mathbb{Q})$, was already observed in~\cite{Aspinwall:1994uj}, with a prototypical example being the quintic in $\mathbb{P}^4$ and its $\mathbb{Z}_5$ quotient.

We will now discuss the integral basis of periods that corresponds to the central charges of a set of topological B-branes that generate the charge lattice.
It turns out that this is always different from an integral basis of periods on the mirror of the smooth cousin.

For any decomposition $\vec{d}$, a complete set of solutions that is annihilated by the corresponding Picard-Fuchs operator in Table~\ref{tab:operators} takes the form $\varpi_i,\,i=0,\ldots,3$ with $\varpi_i(z)=\log(z)^i+\mathcal{O}(z)$.
The parameter $z$ is choosen such that the flat coordinate $t(z)=\frac{1}{2\pi i}\log(z)+\mathcal{O}(z)$ is the complexified volume of a curve of degree $1$ in $\widehat{\Xd}$.
In particular, $z\rightarrow 0$ corresponds to the large volume limit of $\Xdnc$.

Using an integral basis of central charges of topological B-branes for $X_{2,2,2,2}$ and analytic continuation, it was found in~\cite{Katz:2022lyl} that the central charges of a corresponding basis on $X_{(1^8)}^{\text{n.c.}}$ take the form
\begin{align}
        \begin{split}
                 \Pi^{(6)}_{(1^8)}=&\frac{\kappa}{3}t^3-\frac{1}{12}t+2\frac{1}{(2\pi i)^3}\zeta(3)\left(\chi+\frac{7}{2}n_{(1^8)}\right)+\mathcal{O}(e^{2\pi i t})\,,\\
                \Pi^{(4)}_{(1^8)}=&t^2-\frac12t+\frac{1}{24}+\mathcal{O}(e^{2\pi i t})\,,\\
                \Pi^{(2)}_{(1^8)}=&t\,,\quad\Pi^{(0)}_{(1^8)}=\frac12\,,
        \end{split}
        \label{eqn:1to8central}
\end{align}
in terms of the topological invariants of the smooth deformation $X_{(8)}$ of $X_{(1^8)}$
\begin{align}
	\kappa=\int_{X_{(8)}}J^3=2\,,\quad c_2= \int_{X_{(8)}}J\cdot c_2(TX_{(8)})=44\,,\quad \chi=-296\,,
 \label{eqn:invx8}
\end{align}
where $J$ is a primitive ample divisor, and $n_{(1^8)}=84$ is the number of nodes of $X_{(1^8)}$.

Recall that a 0-brane supported at a smooth point $p$ of a Calabi-Yau 3-fold can be represented by a skyscraper sheaf $\mathcal{O}_p$ and has central charge $Z(\mathcal{O}_p)=1$.
From $\Pi^{(0)}_{(1^8)}=\frac12$ in~\eqref{eqn:1to8central} one can deduce that the corresponding direction of the 0-brane in the charge lattice of $X_{(1^8)}^{\text{n.c.}}$ is generated by a 2-brane that wraps one of the 2-torsional exceptional curves in $\widehat{\Xd}$. 

The charge lattice of topological B-branes on $X_{(1^8)}^{\text{n.c.}}$ can be identified with the twisted K-theory $K_0^\alpha(\widehat{X}_{(1^8)})$, and the exponentiated central charge is a homomorphism from the charge lattice to $\mathbb{C}^\times$.
This directly implies that the reduced twisted K-theory takes the form $K_{(2)}^\alpha(\widehat{X}_{(1^8)})\simeq \mathbb{Z}^2$, with the brane wrapping the 2-torsional exceptional curve and the 0-brane respectively corresponding to elements $(0,1),(0,2)$, as discussed in Section~\ref{sec:derived}.

To obtain a 2-brane with central charge $\Pi^{(2)}_{(1^8)}=t$, and therefore the second generator of $K_{(2)}^\alpha(\widehat{X}_{(1^8)})$, one can consider a line in $\mathbb{P}^3$ that has four tangencies with the ramification locus $S_{(1^8)}\subset\mathbb{P}^3$.
The corresponding double cover in $\widehat{X}_{(1^8)}$ is split and, as will be further discussed in Section~\ref{sec:enumerativeGeometry}, by wrapping 2-branes on either of the two sheets one obtains respective elements $(1,0),(1,1)\in K_{(2)}^\alpha(\widehat{X}_{(1^8)})$.

The interpretation of $\Pi^{(4)}_{(1^8)},\Pi^{(6)}_{(1^8)}$ is slightly more speculative.
The generator of the direction of the 6-brane in $K_0^\alpha(\widehat{X}_{(1^8)})$ has central charge $\Pi^{(6)}_{(1^8)}=\frac{\kappa}{3}t^3+\ldots$ while the central charge of the 6-brane $\mathcal{O}_{X_{(8)}}$ on the smooth deformation $X_{(8)}$ takes the form $\Pi^{(6)}_{(8)}=\frac{\kappa}{6}t^3+\ldots$.
It is tempting to interpret this discrepancy as an effect of the topologically non-trivial flat B-field in the following way.

Recall that on a smooth Calabi-Yau the 6-brane is conjecturally related under homological mirror symmetry to a 3-brane that wraps the section of the SYZ-fibration on the mirror~\cite{Strominger:1996it}.
As was discussed in~\cite[Section~5]{Donagi2008}, the presence of the B-field should correspond on the mirror Calabi-Yau to the absence of a section of the SYZ-fibration.
Instead, if the B-field corresponds to a $2$-torsional Brauer class, the mirror should only have a $2$-section, i.e. a 3-cycle that intersects the lagrangian fibers twice while the intersection points experience monodromy upon moving around a ramification locus in the base.
It is then natural to expect that the primitive element of the twisted K-theory charge lattice is not a 6-brane that wraps the Calabi-Yau but one that wraps it twice.

There is no such constraint on the 4-branes and one finds that to leading order $\Pi^{(4)}_{(1^8)}=t^2+\ldots$ agrees with the central charge of a 4-brane $\mathcal{O}_J$ wrapping a primitive ample divisor $J$ in $X_{(8)}$.

For a general determinantal octic double solid $\Xd$ we still have $K_{(2)}^\alpha(\widehat{\Xd})\simeq \mathbb{Z}^2$ and since the central charge of a zero brane can be normalized to be $Z(\mathrm D0)=1$, we find that the generator of the corresponding direction of the charge lattice can again be represented by a brane wrapping one of the 2-torsional exceptional curves and has central charge $\Pi^{(0)}_{\vec{d}}=1/2$.
The additional generator of $K_{(2)}^\alpha(\widehat{\Xd})$ with central charge $\Pi^{(2)}_{\vec{d}}=t$ can again be obtained by wrapping a 2-brane on one of the sheets of a split double cover in $\widehat{\Xd}$ of a suitable line in $\mathbb{P}^3$.
As a consequence, the integral variation of Hodge structure of the mirror $Y_{\vec{d}}$ of $\Xdnc$ always differs from that of the mirror $\widehat{Y}_{\vec{d}}$ of the smooth cousin $\widehat{X}_{\vec{d}}$~\footnote{Let us point out that the mirror Calabi-Yau manifolds of $X^{\text{n.c.}}_{(2^4)}$ and $X^{\text{n.c.}}_{(4,2^2)}$ appear to violate Conjecture~2.9 of~\cite{Doran:2005gu}.}.

The arguments explaining the leading behaviour of $\Pi^{(4)}_{(1^8)},\Pi^{(6)}_{(1^8)}$ also apply to the other determinantal octic double solids $\Xd$.
Combining this with preliminary results from numerical analytic continuation, we expect that an integral basis of topological B-branes on $\Xdnc$ can always be choosen such that the leading behaviour of the central charges is given by
\begin{align}
        \begin{split}
                 \Pi^{(6)}_{\vec{d}}=&\frac{\kappa}{3}t^3+p^{(2)}_{\vec{d}}(t)+2\frac{1}{(2\pi i)^3}\zeta(3)\left(\chi+\frac{7}{2}n_{\vec{d}}\right)+\mathcal{O}(e^{2\pi i t})\,,\\
                \Pi^{(4)}_{\vec{d}}=&\frac{\kappa}{2}t^2+p^{(1)}_{\vec{d}}(t)+\mathcal{O}(e^{2\pi i t})\,,\\
                \Pi^{(2)}_{\vec{d}}=&t\,,\quad\Pi^{(0)}_{\vec{d}}=\frac12\,,
        \end{split}
        \label{eqn:dcentral}
\end{align}
where $p^{(i)}_{\vec{d}}(t)\in \mathbb{Q}[t]$ are inhomogeneous polynomials of degree $i$ with rational coefficients.

Apart from the case $\vec{d}=(1^8)$, where the integral basis~\eqref{eqn:1to8central} was determined in~\cite{Katz:2022lyl}, there is one other example for that we can completely fix the asymptotic behaviour~\eqref{eqn:dcentral}.
For the case $\vec{d}=(2^4)$ there is a $\mathbb{Z}_2$ symmetry in the moduli space, corresponding to the fact that the phases $r\gg0$ and $r\ll 0$ in the associated GLSM~\eqref{eqn:glsmCharges} are identical.
We can therefore choose the same basis of periods around $z\rightarrow 0$ and $z\rightarrow \infty$ and the transfer matrix $T$ that one obtains by (numerical) analytic continuation from one point to the other is actually a monodromy~\footnote{The potential for such a symmetry in the moduli space of the smooth cousin $X_{2,2,2,2}$ of  $X^{\text{n.c.}}_{(2^4)}$ was already observed in~\cite{Joshi:2019nzi}. However, for $X_{2,2,2,2}$ the second large volume limit is actually associated to $X^{\text{n.c.}}_{(1^8)}$ and the symmetry is broken~\cite{Katz:2022lyl}.
The moduli space is only symmetric when both large volume limits are associated to $X^{\text{n.c.}}_{(2^4)}$.
}.

Demanding integrality of this matrix allows us to fix
\begin{align}
    p^{(2)}_{(2^4)}=\frac23 t\,,\quad p^{(1)}_{(2^4)}=-\frac13\,,
\end{align}
up to an integral change of basis.
The transfer matrix $T$ and the monodromies $M_{0},M_c$ around the large volume limit $z=0$ and the conifold point $z=1/256$ then take the form
\begin{align}
{\tiny
T=\left(
\begin{array}{cccc}
 -1 & 0 & 0 & 0 \\
 0 & 1 & 0 & 0 \\
 -1 & 0 & 1 & 0 \\
 -1 & -1 & 0 & -1 \\
\end{array}\right)\,,\quad
M_{0}=\left(
\begin{array}{cccc}
 1 & 2 & 2 & 4 \\
 0 & 1 & 2 & 2 \\
 0 & 0 & 1 & 2 \\
 0 & 0 & 0 & 1 \\
\end{array}\right)\,,\quad
M_{c}=\left(
\begin{array}{cccc}
 1 & 0 & 0 & 0 \\
 0 & 1 & 0 & 0 \\
 0 & 0 & 1 & 0 \\
 -2 & 0 & 0 & 1 \\
\end{array}\right)\,.
}
\label{eqn:2222mon}
\end{align}
They satisfy the relations
\begin{align}
    T^2=M_c^{-1}\,,\quad (TM_0^{-1})^2=-1\,\quad [M_c,T]=0\,.
\end{align}

To understand the general form of the central charges $\Pi^{(4)}_{\vec{d}},\Pi^{(6)}_{\vec{d}}$ will likely require both a better understanding of $D^b(\widehat{\Xd},\alpha)\simeq D^b(\mathbb{P}^3,\mathcal{B}_0)$ as well as a modification of the Gamma-class formula from~\cite{Iritani2009,Halverson:2013qca}.
This is beyond the scope of the current paper but we expect that the techniques developed in~\cite{Knapp:2020oba,Erkinger:2022sqs} could be applied.
Similarly, we leave a discussion of the geometric interpretation underlying the monodromies~\eqref{eqn:2222mon} and of the corresponding derived equivalences to future work.

\section{Torsion refined GV-invariants from mirror symmetry}\label{sec:direct}
We now use mirror symmetry~\cite{Candelas:1990rm} and direct integration of the holomorphic anomaly equations~\cite{Bershadsky:1993ta,Bershadsky:1993cx}, relating the topological string free energies $F_g$ at different genera, in order to calculate $\mathbb{Z}_2$ torsion refined Gopakumar-Vafa invariants for the determinantal octic double solids.

The direct integration procedure for smooth Calabi-Yau threefolds has first been described in~\cite{Huang:2006hq} and the necessary modifications to obtain the torsion refined invariants for genus $g\ge 2$ has recently been worked out in~\cite{Katz:2022lyl}, building on the earlier calculations of torsion refined GV-invariants at genus 0 and 1 from~\cite{Schimannek:2021pau}.
We will focus on the examples $X^{\text{n.c.}}_{(5,1^3)},\,X^{\text{n.c.}}_{(2^4)}$ and $X^{\text{n.c.}}_{(4,2^2)}$ where, as will be further discussed below, we are able to fix the holomorphic ambiguity respectively up to genus $25,\,14$ and $9$.

Instead of repeating the general discussion of the topological string B-model and the holomorphic anomaly equations from~\cite{Katz:2022lyl}, we will outline the concrete steps of the calculation for a general $\Xdnc$ and then discuss the results for concrete choices of $\vec{d}$.

\subsection{Special geometry and direct integration}
We first fix a normalized decomposition $\vec{d}\in\mathbb{N}^k$ of degree $8$ and length $\ge 3$.
The number $n_{\vec{d}}$ of nodes in $\Xd$ can be calculated using Lemma~\ref{lem:nodes} and the greatest common divisor of $\vec{d}$ will again be denoted by $q=\text{gcd}(d_1,\ldots, d_k)$.

The Picard-Fuchs operator corresponding to the mirror of $\Xdnc$ can be read off from Table~\ref{tab:operators} and annihilates a basis of mirror periods $\varpi_{i=0,\ldots,3}$ with leading behaviour
\begin{align}
    \varpi_0(z)=1+\mathcal{O}(z)\,,\quad \varpi_{i=1,\ldots,3}(z)=\varpi_0(z)\log(z)^i+\mathcal{O}(z)\,.
\end{align}
The mirror map is given by $t(z)=(2\pi i)^{-1}\varpi_1/\varpi_0$ and using $q=e^{2\pi i t}$ one can invert the map to obtain $z(q)=q+\mathcal{O}(q^2)$.

Using a K\"ahler transformation such that $\varpi_0\rightarrow 1$ and expressing the normalized periods in terms of the flat coordinate $t$ we obtain a symplectic basis $\Pi=(1,\,t,\,2\mathcal{F}_0-t\partial_t\mathcal{F}_0,\,\partial_t\mathcal{F}_0)$~\footnote{Note that, as was discussed in~\cite{Katz:2022lyl} for $X^{\text{n.c.}}_{(1^8)}$ and can be seen more generally from~\eqref{eqn:dcentral}, $\Pi$ does not correspond to the central charges of an integral basis of topological branes anymore.}, with the leading terms of the prepotential $\mathcal{F}_0$ given by~\cite{Schimannek:2021pau,Katz:2022lyl}
\begin{align}
    \begin{split}
        \mathcal{F}_0=-\frac16\kappa t^3+\frac{c_2}{24}t+\frac{\zeta(3)}{(2\pi i)^3}\left(\frac{\chi}{2}+\frac74n_{\vec{d}}\right)+\mathcal{O}(q)\,,
    \end{split}
\end{align}
in terms of the topological invariants~\eqref{eqn:invx8} of the smooth deformation $X_{(8)}$ of $X_{\vec{d}}$.
The Yukawa coupling takes the form
\begin{align}
    C_{zzz}=\frac{\kappa}{z^3\Delta(z)}\,,
\end{align}
where $\Delta(z)$ is the discriminant polynomial that is listed together with the Picard-Fuchs operator in Table~\ref{tab:operators}.

It now becomes important to distinguish between the topological string free energies $\mathcal{F}_g(t,\bar{t})$, their holomorphic limit $\mathcal{F}_g(t)=\lim_{\bar{t}\rightarrow i\infty} \mathcal{F}(t,\bar{t})$ and what we will call the genus $g$ generating functions $F_g(z,\bar{z})=\varpi_0^{2-2g}\mathcal{F}(t,\bar{t})$.
The latter transform as sections of $\mathcal{L}^{2-2g}$, where $\mathcal{L}$ is the K\"ahler line bundle on the moduli space, with covariant derivatives acting as $D_i=\partial_z-(2-2g)\partial_z K$ in terms of the K\"ahler potential $K$.

The genus one generating function takes the form
\begin{align}
\begin{split}
    F_1(z,\bar{z})=&-\frac12\left(4-\frac{\chi+2n_{\vec{d}}}{12}\right)K-\frac12\log\,\det\,G_{\bar{z}z}\\
    &-\frac{1}{24}\left(12+c_2\right)\log(z)-\frac{1}{12}\log\,\Delta(z)\,,
    \end{split}
    \label{eqn:Fg1}
\end{align}
where $G_{\bar{z}z}$ is the Weil-Petersson metric.

The holomorphic anomaly equations for genus $g\ge 2$~\cite{Bershadsky:1993ta,Bershadsky:1993cx} can be written as~\cite{Yamaguchi:2004bt,Alim:2007qj,Hosono:2008np}
\begin{align}
    \begin{split}
    &\frac{\partial F_g}{\partial \tilde{S}^{zz}}-K_z\,\frac{\partial F_g}{\partial \tilde{S}^{z}}+\frac12\,K_z K_z\frac{\partial F_g}{\partial \tilde{S}}\\
    =&\frac12\, D^2 F_{g-1}+\frac12 \sum\limits_{h=1}^{g-1}D F_{g-h} D F_h\,,\quad g\ge 2\,,
    \end{split}
    \label{eqn:haneq}
\end{align}
in terms of the shifted propagators $\tilde{S}^{zz},\tilde{S}^z,\tilde{S}$.
The latter are partially determined by the equations of the modified BCOV ring
\begin{align}
\begin{split}
    \partial_z \tilde{S}^{zz}=&\,C_{zzz}\tilde{S}^{zz}\tilde{S}^{zz}+2\tilde{S}^z-2s^z_{zz}\tilde{S}^{zz}+h^{zz}_z\,,
    \\
    \partial_z \tilde{S}^z=&\,C_{zzz}\tilde{S}^{zz}\tilde{S}^z+2 \tilde{S}-s^z_{zz}\tilde{S}^z-h_{zz}\tilde{S}^{zz}+h^z_z\,,
    \\
    \partial_z \tilde{S}=&\,\frac12\, C_{zzz}\tilde{S}^z\tilde{S}^z-h_{zz}\tilde{S}^z+h_z\,,
    \\
    \partial_z K_z=&\,K_zK_z-C_{zzz}\tilde{S}^{zz}+s^z_{zz}K_z-C_{zzz}\tilde{S}^z+h_{zz}\,,
    \label{eqn:bcov}
\end{split}
\end{align}
together with the expression for the Christoffel symbols associated to the Weil-Petersson metric 
\begin{align}
        \Gamma^z_{zz}=2K_z-C_{zzz}\tilde{S}^{zz}+s^z_{zz}\,,
        \label{eqn:christoffel}
\end{align}
up to holomorphic propagator ambiguities $s^z_{zz},h^{zz}_z,h^z_z,h_z,h_{zz}$.
When expressed in terms of the shifted propagators, the generating functions $F_{g\ge 2}$ are polynomials in $\tilde{S}^{zz},\tilde{S}^z,\tilde{S}$ with coefficients that are rational functions in $z$.
In particular, the anti-holomorphic dependence as well as the dependence on the K\"ahler potential is completely absorbed by the propagators.

The integrated holomorphic anomaly equation at genus $g=1$, that is satisfied by~\eqref{eqn:Fg1}, takes the form
\begin{align}
    \partial_z F_1=&\frac12\, C_{zzz}\tilde{S}^{zz}-\left(\frac{\chi+2n_{\vec{d}}}{24}-1\right)K_z+f^{(1)}(z)\,,
\end{align}
where $f^{(1)}(z)$ is the corresponding holomorphic ambiguity.
On the other hand, equations~\eqref{eqn:haneq} can be integrated -- using the independence of $F_{g\ge 2}$ from $K$ -- in order to obtain $F_g$ from the generating functions at genus $g'<g$ up to a holomorphic ambiguity $f^{(g)}(z)$.
The holomorphic ambiguities at $g\ge 2$ take the form
\begin{align}
	f^{(g)}(z)=\frac{1}{\Delta^{2g-2}}\sum\limits_{k=0}^{2g-2}f_kz^{qk}+\sum\limits_{k=1}^{N(g)}f_k'z^{qk}\,,\quad N(g)=\left\lfloor\frac{2(g-1)}{\rho_{\vec{d}}}\right\rfloor\,,
        \label{eqn:holomorphicAmbiguity}
\end{align}
where $\rho_{\vec{d}}$ is a model dependent regulator that depends on the behaviour of the generating functions at $z=\infty$.
We will determine the regulator for our examples empirically from the behaviour of the holomorphic ambiguties at low genera.
The minimal value for the regulator, if no additional regularity constraints from the point at infinity are known, is $\rho=2$.

In order to fix the holomorphic ambiguities, we use the following conjectural properties of the topological string free energies and the torsion refined Gopakumar-Vafa invariants:
\begin{enumerate}
    \item \textbf{Constant map contributions} Using the general results from~\cite[Section 3.6]{Katz:2022lyl}, the modified constant map contributions in the large volume limit of $\Xdnc$ take the form
    \begin{align}
    \begin{split}
                &\mathcal{F}_{g\ge 2}^{\text{const.}}(t,\bar{t})\\
                =&(-1)^{g-1}\frac{B_{2g}B_{2g-2}}{2g(2g-2)\left[(2g-2)!\right]}\left(\frac{\chi+2n_{\vec{d}}}{2}+(1-2^{2g-2})n_{\vec{d}}\right)\,,
        \end{split}
    \end{align}
    where $B_n$ is the $n$-th Bernoulli number.
    \item \textbf{Gap condition} The gap condition from~\cite{Huang:2006hq} holds without modification, and the leading behaviour of the free energies in a suitably normalized flat coordinate $t_c$ around the conifold points $\Delta=0$ takes the form
    \begin{align}
        \mathcal{F}_{g\ge 2}(t_c)=\frac{(-1)^{g-1}B_{2g}}{2g(2g-2)t_c^{2g-2}}+\mathcal{O}(t_c^0)\,.
    \end{align}
    \item \textbf{Castelnuovo vanishing} The torsion refined Gopakumar-Vafa invariants satisfy the same Castelnuovo vanishing as the corresponding unrefined invariants of the smooth deformation.
    In particular, they satisfy the bound
    \begin{align}
        n_g^{d,0}=n_g^{d,1}=0\text{  for  } g> g_{\text{max.}}(d)=\left\{\begin{array}{cl}
        \frac{d^2}{4}+\frac{d}{2}+1&,\,d\text{ even}\\
        \left\lfloor\frac{d^2}{4}+\frac{d}{2}\right\rfloor&,\,d\text{ odd}
        \end{array}
        \right.\,.
        \label{eqn:castelnuovoBound}
    \end{align}
\end{enumerate}

Let us stress again that the torsion refined invariants can only be extracted by combining the information from the topological string free energies associated to $\Xdnc$ and to the smooth deformation $X_{(8)}$ of $\Xd$.
However, the direct integration for $X_{(8)}$ and $\Xnc{(1^8)}$ has been carried out in~\cite{Katz:2022lyl} up to genus $g=32$ while the maximal genus for $X_{(8)}$ was pushed to $g=48$~\cite{Alexandrov:2023zjb}.

Note that the existence of Castelnuovo vanishing for Gopakumar-Vafa invariants has been proven in~\cite{doan2021gopakumarvafa} while the precise bound for $X_{(8)}$ follows from the proof of the corresponding bound for Pandharipande-Thomas invariants in~\cite{Alexandrov:2023zjb} together with the MNOP conjecture~\cite{maulik2004gromovwitten}~\footnote{See also~\cite{Liu:2022agh} for a recent proof in the context of the quintic.}.

In some cases we will also use the enumerative predictions for the torsion refined GV invariants that will be discussed in Section~\ref{sec:enumerativeGeometry}, generalizing earlier results from~\cite{Katz:2022lyl}, in order to fix additional free coefficients in the holomorphic ambiguity.
We will always highlight those numbers in the tables of torsion refined GV invariants in order to distinguish them from the genuine predictions that provide a non-trivial check of our results.

\subsection{$\vec{d}=(5,1^3),\,n_s=64$}
We choose the propagator ambiguities in~\eqref{eqn:bcov},\eqref{eqn:christoffel} to be
\begin{align}
    s^z_{zz}=-\frac{9}{5}\frac{1}{z}\,,\quad h^{zz}_z=\frac15z \,,\quad h_{zz}=\frac{3}{25}\frac{1}{z^2}\,,\quad h^z_z=0\,,\quad h_z=\frac{9}{2500}\frac{1}{z}\,,
\end{align}
and we find the regulator $\rho_{\vec{d}}$ in~\eqref{eqn:holomorphicAmbiguity} to be $\rho_{(5,1^3)}=10$.
From the results from~\cite{Katz:2022lyl} and the discussion in Section~\ref{sec:enumerativeGeometry} we obtain the prediction for the torsion refined Gopakumar-Vafa invariants
\begin{align}
    \begin{split}
    n_{g_{\text{max.}}}^{2b+1,0}=&n_{g_{\text{max.}}}^{2b+1,1}=\left\{\begin{array}{cl}
    14752&b=0\\
    (-1)^b14752(2b^2+4)&b\ge 1
    \end{array}\right.\,,\\[.3cm]
    n_{g_{\text{max.}}(2b)}^{2b,p}=&\left\{\begin{array}{cl}
        6&b=1\,,\quad p\equiv 1\\[.1cm]
        4(b^2-b+4)&b\ge 2\,,\quad p\equiv b\\[.1cm]
        0&\text{ else}
    \end{array}\right.\,,\\[.3cm]
    n_{g_{\text{max.}}(2b)-1}^{2b,p}=&\left\{\begin{array}{cl}
        386&b=1\,,\quad p\equiv 0\\[.1cm]
        480&b=1\,,\quad p\equiv 1\\[.1cm]
        -8 (b^4 - 60 b^2 + 67 b-189)&b\ge 2\,,\quad p\equiv b\\[.1cm]
        384(b^2-b+3)&b\ge 2\,,\quad p\equiv b+1
    \end{array}\right.\,.
    \end{split}
\end{align}
Note that $g_{\text{max.}}(d)$ is defined via the Castelnuovo bound in~\eqref{eqn:castelnuovoBound} and we always consider the $\mathbb{Z}_2$-charge -- or parity -- $p$ in $n_g^{d,p}$ as an element of $\mathbb{Z}/2\mathbb{Z}$.

Combining the generic formulas for the constant map contributions, the gap condition, the Castelnuovo vanishing for $X_{(8)}$ and the prediction $n^{8,1}_{21}=0$ we can fix the holomorphic ambiguities up to genus $g=25$.
Some of the resulting invariants are listed in Tables~\ref{tab:gv5111tA} and~\ref{tab:gv5111tB}.
\begin{table}[t!]
\begin{align*}
\tiny
\begin{array}{|c|ccccc|}\hline
n^{d,0}_g&d=1&2&3&4&5\\\hline
g=0&\gvHLc{14752}&64444512&711860273440&11596529531321056&233938237312624658400\\
1&\gvHLc{0}&20480&10732175296&902645866490432&50712027457008177856\\
2&\gvHLc{0}&\gvHLc{384}&-8275872&6249796276400&2700746768622436448\\
3&\gvHLc{0}&\gvHLc{0}&\gvHLc{-88512}&-87425677776&10292236849965248\\
4&\gvHLc{0}&\gvHLc{0}&\gvHLc{0}&198020184&-337281112359424\\
5&\gvHLc{0}&\gvHLc{0}&\gvHLc{0}&150666&6031964134528\\
6&\gvHLc{0}&\gvHLc{0}&\gvHLc{0}&\gvHLc{2232}&-43153905216\\
7&\gvHLc{0}&\gvHLc{0}&\gvHLc{0}&\gvHLc{24}&18764544\\
8&\gvHLc{0}&\gvHLc{0}&\gvHLc{0}&\gvHLc{0}&\gvHLc{177024}\\
9&\gvHLc{0}&\gvHLc{0}&\gvHLc{0}&\gvHLc{0}&\gvHLc{0}\\\hline\hline
n^{d,1}_g&d=1&2&3&4&5\\\hline
g=0&\gvHLc{14752}&64390400&711860273440&11596526493472256&233938237312624658400\\
1&\gvHLc{0}&20832&10732175296&902646226215424&50712027457008177856\\
2&\gvHLc{0}&\gvHLc{480}&-8275872&6249871001344&2700746768622436448\\
3&\gvHLc{0}&\gvHLc{6}&\gvHLc{-88512}&-87433826048&10292236849965248\\
4&\gvHLc{0}&\gvHLc{0}&\gvHLc{0}&198195616&-337281112359424\\
5&\gvHLc{0}&\gvHLc{0}&\gvHLc{0}&150784&6031964134528\\
6&\gvHLc{0}&\gvHLc{0}&\gvHLc{0}&\gvHLc{1920}&-43153905216\\
7&\gvHLc{0}&\gvHLc{0}&\gvHLc{0}&\gvHLc{0}&18764544\\
8&\gvHLc{0}&\gvHLc{0}&\gvHLc{0}&\gvHLc{0}&\gvHLc{177024}\\
9&\gvHLc{0}&\gvHLc{0}&\gvHLc{0}&\gvHLc{0}&\gvHLc{0}\\\hline
\end{array}
\end{align*}
\caption{Torsion refined Gopakumar-Vafa invariants for $X_{(5,1^3)}$ -- continued in Table~\ref{tab:gv5111tB}.
Invariants that are checked by the enumerative calculations from~\cite{Katz:2022lyl} and Section~\ref{sec:enumerativeGeometry} are highlighted in blue.
}
\label{tab:gv5111tA}
\end{table}

\begin{table}[t!]
\begin{align*}
\tiny
\begin{array}{|c|cccc|}\hline
n^{d,0}_g&d=6&7&8&9\\\hline
11&263936&1365366811756288&-4736147266849254718988&2362150336628273365296836096\\
12&\gvHLc{3456}&-17274516630240&241184142913118016336&-238002496862926514779851040\\
13&\gvHLc{0}&97442213760&-9638189985405718270&20760200170428536398007680\\
14&\gvHLc{0}&-33988608&289481011667110168&-1555728615943980103040320\\
15&\gvHLc{0}&\gvHLc{-324544}&-6124149316730056&98700288517404683773632\\
16&\gvHLc{0}&\gvHLc{0}&81786094440560&-5202996081724936520576\\
17&\gvHLc{0}&\gvHLc{0}&-550129902675&222451947587448686272\\
18&\gvHLc{0}&\gvHLc{0}&846264080&-7465750121080525472\\
19&\gvHLc{0}&\gvHLc{0}&371096&187682431656202624\\
20&\gvHLc{0}&\gvHLc{0}&\gvHLc{5000}&-3286547431924608\\
21&\gvHLc{0}&\gvHLc{0}&\gvHLc{64}&35286952874368\\
22&\gvHLc{0}&\gvHLc{0}&\gvHLc{0}&-174139266432\\
23&\gvHLc{0}&\gvHLc{0}&\gvHLc{0}&48799616\\
24&\gvHLc{0}&\gvHLc{0}&\gvHLc{0}&\gvHLc{531072}\\
25&\gvHLc{0}&\gvHLc{0}&\gvHLc{0}&\gvHLc{0}\\\hline\hline
n^{d,1}_g&d=6&7&8&9\\\hline
11&259498&1365366811756288&-4736147618262226212416&2362150336628273365296836096\\
12&\gvHLc{3576}&-17274516630240&241184196439416280128&-238002496862926514779851040\\
13&\gvHLc{40}&97442213760&-9638196546841305600&20760200170428536398007680\\
14&\gvHLc{0}&-33988608&289481641285797280&-1555728615943980103040320\\
15&\gvHLc{0}&\gvHLc{-324544}&-6124194557064832&98700288517404683773632\\
16&\gvHLc{0}&\gvHLc{0}&81788344992768&-5202996081724936520576\\
17&\gvHLc{0}&\gvHLc{0}&-550195366080&222451947587448686272\\
18&\gvHLc{0}&\gvHLc{0}&846863328&-7465750121080525472\\
19&\gvHLc{0}&\gvHLc{0}&376064&187682431656202624\\
20&\gvHLc{0}&\gvHLc{0}&\gvHLc{5760}&-3286547431924608\\
21&\gvHLc{0}&\gvHLc{0}&\gvHLi{0}&35286952874368\\
22&\gvHLc{0}&\gvHLc{0}&\gvHLc{0}&-174139266432\\
23&\gvHLc{0}&\gvHLc{0}&\gvHLc{0}&48799616\\
24&\gvHLc{0}&\gvHLc{0}&\gvHLc{0}&\gvHLc{531072}\\
25&\gvHLc{0}&\gvHLc{0}&\gvHLc{0}&\gvHLc{0}\\\hline
\end{array}
\end{align*}
\caption{Torsion refined Gopakumar-Vafa invariants for $X_{(5,1^3)}$ -- continuation of Table~\ref{tab:gv5111tA}.
Invariants that are checked by the enumerative calculations from~\cite{Katz:2022lyl} and Section~\ref{sec:enumerativeGeometry} are highlighted in blue.
Predicted invariants that have been used to fix the holomorphic ambiguity are highlighted in red.
}
\label{tab:gv5111tB}
\end{table}

\subsection{$\vec{d}=(2^4),\,n_s=80$}\label{sec:dint2222}
From the associated GLSM in~\ref{sec:glsmModels} one observes that for $\vec{d}=(2^4)$ the moduli space is $\mathbb{Z}_2$-symmetric.
More precisely, we find that changing to the coordinate
\begin{align}
    w=\frac{1}{2^{16}z}\,,
\end{align}
while at the same time performing a K\"ahler transformation such that sections of the K\"ahler line bundle $\mathcal{L}$ are rescaled by a factor $f(w)=1/(2^4\sqrt{w})$.

Using the general transformation behaviour of the propagators, that has been explicitly worked out in~\cite[Appendix B]{Katz:2022lyl}, there is a unique choice for the propagator ambiguities in~\eqref{eqn:bcov},\eqref{eqn:christoffel} such that $s^w_{ww}=s^z_{zz}\big\vert_{z\rightarrow w}$ and analogous relations hold for  $h^{zz}_z,h^z_z,h_z$,
\begin{align}
    s^z_{zz}=-\frac32\frac{1}{z}\,,\quad h^{zz}_z=0\,,\quad h_{zz}=\frac{1}{8}\frac{1}{z^2}\,,\quad h^z_z=0\,,\quad h_z=\frac{1}{256z}-1\,.
\end{align}
The symmetry in the moduli space then implies that the holomorphic ambiguities $f^{(g)}(z)$ have to satisfy
\begin{align}
    f(z)^{2-2g}f^{(g)}\left(\frac{1}{2^{16}z}\right)=f^{(g)}(z)\,.
    \label{eqn:ambiguityInvolution}
\end{align}
We find the regulator $\rho_{\vec{d}}$ in~\eqref{eqn:holomorphicAmbiguity} to be trivial, i.e. $\rho_{(2^4)}=2$.

The prediction for the torsion refined Gopakumar-Vafa invariants from~\cite{Katz:2022lyl} and the discussion in Section~\ref{sec:enumerativeGeometry} are
\begin{align}
    \begin{split}
    n_{g_{\text{max.}}(2b)}^{2b,p}=&\left\{\begin{array}{cl}
        6&b=1\,,\quad p\equiv 0\\[.1cm]
        4(b^2-b+4)&b\ge 2\,,\quad p\equiv 0\\[.1cm]
        0&\text{ else}
    \end{array}\right.\,,\\[.3cm]
    n_{g_{\text{max.}}(2b)-1}^{2b,p}=&\left\{\begin{array}{cl}
        384&b=1\,,\quad p\equiv 0\\[.1cm]
        480&b=1\,,\quad p\equiv 1\\[.1cm]
        -8 (b^4 - 48 b^2 + 55 b-153)&b\ge 2\,,\quad p\equiv 0\\[.1cm]
        480(b^2-b+3)&b\ge 2\,,\quad p\equiv 1
    \end{array}\right.\,.
    \end{split}
\end{align}

Combining the generic formulas for the constant map contributions, the gap condition, the Castelnuovo vanishing for $X_{(8)}$, the behaviour of the ambiguities under the involution~\eqref{eqn:ambiguityInvolution} and the prediction $n^{6,1}_{13}=0$ we can fix the holomorphic ambiguities up to genus $g=14$.
Some of the resulting invariants are listed in Tables~\ref{tab:gv2222tA} and~\ref{tab:gv2222B}.

\begin{table}[t!]
\begin{align*}
\tiny
\begin{array}{|c|ccccc|}\hline
n^{d,0}_g&d=1&2&3&4&5\\\hline
g=0&14784&64416224&711860299456&11596528005950816&233938237312747032384\\
1&\gvHLc{0}&20832&10732236416&902646051617216&50712027457946480768\\
2&\gvHLc{0}&\gvHLc{384}&-8273344&6249833642992&2700746768860537408\\
3&\gvHLc{0}&\gvHLc{6}&-88704&-87429700304&10292236854183040\\
4&\gvHLc{0}&\gvHLc{0}&\gvHLc{0}&198090456&-337281113930752\\
5&\gvHLc{0}&\gvHLc{0}&\gvHLc{0}&153930&6031964226304\\
6&\gvHLc{0}&\gvHLc{0}&\gvHLc{0}&\gvHLc{1752}&-43154064768\\
7&\gvHLc{0}&\gvHLc{0}&\gvHLc{0}&\gvHLc{24}&18754048\\
8&\gvHLc{0}&\gvHLc{0}&\gvHLc{0}&\gvHLc{0}&177408\\
9&\gvHLc{0}&\gvHLc{0}&\gvHLc{0}&\gvHLc{0}&\gvHLc{0}\\\hline\hline
n^{d,1}_g&d=1&2&3&4&5\\\hline
g=0&14720&64418688&711860247424&11596528018842496&233938237312502284416\\
1&\gvHLc{0}&20480&10732114176&902646041088640&50712027456069874944\\
2&\gvHLc{0}&\gvHLc{480}&-8278400&6249833634752&2700746768384335488\\
3&\gvHLc{0}&\gvHLc{0}&-88320&-87429803520&10292236845747456\\
4&\gvHLc{0}&\gvHLc{0}&\gvHLc{0}&198125344&-337281110788096\\
5&\gvHLc{0}&\gvHLc{0}&\gvHLc{0}&147520&6031964042752\\
6&\gvHLc{0}&\gvHLc{0}&\gvHLc{0}&\gvHLc{2400}&-43153745664\\
7&\gvHLc{0}&\gvHLc{0}&\gvHLc{0}&\gvHLc{0}&18775040\\
8&\gvHLc{0}&\gvHLc{0}&\gvHLc{0}&\gvHLc{0}&176640\\
9&\gvHLc{0}&\gvHLc{0}&\gvHLc{0}&\gvHLc{0}&\gvHLc{0}\\\hline
\end{array}
\end{align*}
\caption{Torsion refined Gopakumar-Vafa invariants for $X_{(2^4)}$ -- continued in Table~\ref{tab:gv2222B}.
Invariants that are checked by the enumerative calculations from~\cite{Katz:2022lyl} and Section~\ref{sec:enumerativeGeometry} are highlighted in blue.
}
\label{tab:gv2222tA}
\end{table}

\begin{table}[t!]
\begin{align*}
\tiny
\begin{array}{|c|ccc|}\hline
n^{d,0}_g&d=6&7&8\\\hline
5&-1760771999337079644&5094383347896744172553856&9657408241193249030377056580064\\
6&72538234044772552&-3538613097711461376704&302811038246656266173049956344\\
7&-2014447545740602&840840604025737981696&2730973526222775373837007680\\
8&33618973736984&-47725225126213705792&13520626841826392946717024\\
9&-268866994212&2050806873394447488&-995729033386724002067974\\
10&459073568&-64291180272034752&75267090274862000248592\\
11&275434&1365366850034176&-4736147442485274306636\\
12&\gvHLc{2712}&-17274520809792&241184169648999379088\\
13&\gvHLc{40}&97442319616&-9638193257675660990\\
14&\gvHLc{0}&-33959936&289481324372907352\\\hline\hline
n^{d,1}_g&d=6&7&8\\\hline
5&-1760771999426249856&5094383347897028676727040&9657408241193249610511196806208\\
6&72538234124970048&-3538613097704698901376&302811038246656250507734160480\\
7&-2014447581774656&840840604020596741632&2730973526222761259836700096\\
8&33618983223520&-47725225123135040128&13520626841826706308234240\\
9&-268868895680&2050806872095534336&-995729033387090648439296\\
10&459350816&-64291179887975808&75267090275152642943104\\
11&248000&1365366773478400&-4736147442626206624768\\
12&\gvHLc{4320}&-17274512450688&241184169703534917376\\
13&\gvHLi{0}&97442107904&-9638193274571362880\\
14&\gvHLc{0}&-34017280&289481328580000096\\\hline
\end{array}
\end{align*}
\caption{Torsion refined Gopakumar-Vafa invariants for $X_{(2^4)}$ -- continuation of Table~\ref{tab:gv2222tA}.
Invariants that are checked by the enumerative calculations from~\cite{Katz:2022lyl} and Section~\ref{sec:enumerativeGeometry} are highlighted in blue.
Predicted invariants that have been used to fix the holomorphic ambiguity are highlighted in red.
}
\label{tab:gv2222B}
\end{table}

\subsection{$\vec{d}=(4,2^2),\,n_s=72$}
For $\vec{d}=(4,2^2)$ we can choose the propagator ambiguities
\begin{align}
    s^z_{zz}=-\frac74\frac{1}{z}\,,\quad h^{zz}_z= \frac{5}{32}z\,,\quad h_{zz}=\frac18 \frac{1}{z^2}\,,\quad h^z_z=0\,,\quad h_z=\frac{1}{256}\frac{1}{z}\,.
\end{align}
The regulator is trivial, i.e. $\rho_{(4,2^2)}=2$.

The prediction for the torsion refined Gopakumar-Vafa invariants from~\cite{Katz:2022lyl} and the discussion in Section~\ref{sec:enumerativeGeometry} are
\begin{align}
    \begin{split}
    n_{g_{\text{max.}}(2b)}^{2b,p}=&\left\{\begin{array}{cl}
        6&b=1\,,\quad p\equiv 0\\[.1cm]
        4(b^2-b+4)&b\ge 2\,,\quad p\equiv 0\\[.1cm]
        0&\text{ else}
    \end{array}\right.\,,\\[.3cm]
    n_{g_{\text{max.}}(2b)-1}^{2b,p}=&\left\{\begin{array}{cl}
        432&b=1\,,\\[.1cm]
        -8 (b^4 - 54 b^2 + 61 b-171)&b\ge 2\,,\quad p\equiv 0\\[.1cm]
        432(b^2-b+3)&b\ge 2\,,\quad p\equiv 1
    \end{array}\right.\,.
    \end{split}
\end{align}

Using the boundary behaviour together with the Castelnuovo vanishing of $X_{(8)}$ and the predictions 
\begin{align}
    n^{2,0}_2=432\,,\quad n^{2,0}_3=6\,,\quad n^{3,0}_4=0\,,
\end{align}
we can only fix the holomorphic ambiguity up to genus $g=4$.
However, we observe that choosing a suitably normalized flat coordinate $t_{\infty}$ around $z=\infty$ the leading behaviour of the free energies takes the form
\begin{align}
\begin{split}
    \mathcal{F}_g(t_{\infty})=&(-1)^g(1-2^{1+2g})\frac{B_{2g}}{2g(2g-2)}t_{\infty}^{2-2g}\\
    &+(-1)^g\frac{2-3(g-3)}{2^{25}} t_{\infty}^{10-2g}+\mathcal{O}\left(t_{\infty}^{\text{min}(14-2g,0)}\right)\,.
    \end{split}
\end{align}
The first term is analogous to the behaviour around the point at infinity in the moduli space of the complete intersection $X_{3,2,2}\subset\mathbb{P}^7$ that was observed in~\cite{Huang:2006hq}.

Assuming that this behaviour is valid for all genera, we are able to fix the holomorphic ambiguity up to genus $g=9$.
Some of the resulting torsion refined Gopakumar-Vafa invariants are listed in Table~\ref{tab:gv422t}.

\begin{table}[t!]
\begin{align*}
\begin{array}{|c|ccccc|}\hline
n^{d,0}_g&d=1&2&3&4&5\\\hline
g=0&14912&64427168&711862230336&11596528489833504&233938237461490068672\\
1&\gvHLc{0}&20576&10732032384&902645974034944&50712027431616415616\\
2&\gvHLc{0}&\gvHLc{432}&-8273472&6249835104336&2700746767335000512\\
3&\gvHLc{0}&\gvHLc{6}&-89472&-87430081488&10292236801862528\\
4&\gvHLc{0}&\gvHLc{0}&\gvHLc{0}&198140200&-337281083689984\\
5&\gvHLc{0}&\gvHLc{0}&\gvHLc{0}&150250&6031959118080\\
6&\gvHLc{0}&\gvHLc{0}&\gvHLc{0}&\gvHLc{1992}&-43153195648\\
7&\gvHLc{0}&\gvHLc{0}&\gvHLc{0}&\gvHLc{24}&18737664\\
8&\gvHLc{0}&\gvHLc{0}&\gvHLc{0}&\gvHLc{0}&178944\\
9&\gvHLc{0}&\gvHLc{0}&\gvHLc{0}&\gvHLc{0}&\gvHLc{0}\\\hline\hline
n^{d,1}_g&d=1&2&3&4&5\\\hline
g=0&14592&64407744&711858316544&11596527534959808&233938237163759248128\\
1&\gvHLc{0}&20736&10732318208&902646118670912&50712027482399940096\\
2&\gvHLc{0}&\gvHLi{432}&-8278272&6249832173408&2700746769909872384\\
3&\gvHLc{0}&\gvHLi{0}&-87552&-87429422336&10292236898067968\\
4&\gvHLc{0}&\gvHLc{0}&\gvHLi{0}&198075600&-337281141028864\\
5&\gvHLc{0}&\gvHLc{0}&\gvHLc{0}&151200&6031969150976\\
6&\gvHLc{0}&\gvHLc{0}&\gvHLc{0}&\gvHLc{2160}&-43154614784\\
7&\gvHLc{0}&\gvHLc{0}&\gvHLc{0}&\gvHLi{0}&18791424\\
8&\gvHLc{0}&\gvHLc{0}&\gvHLc{0}&\gvHLc{0}&175104\\
9&\gvHLc{0}&\gvHLc{0}&\gvHLc{0}&\gvHLc{0}&\gvHLc{0}\\\hline
\end{array}
\end{align*}
\caption{Torsion refined Gopakumar-Vafa invariants for $X_{(4,2^2)}$.
Invariants that are checked by the enumerative calculations from~\cite{Katz:2022lyl} and Section~\ref{sec:enumerativeGeometry} are highlighted in blue.
Predicted invariants that have been used to fix the holomorphic ambiguity are highlighted in red.
}
\label{tab:gv422t}
\end{table}

\section{Enumerative Geometry}
\label{sec:enumerativeGeometry}

Recall from Proposition~\ref{prop:11classes} that $\Xd$ is smooth for decompositions of length $l=1$ and for $l=2$ the small resolution $\widehat{\Xd}$ is K\"ahler, so that standard methods of algebraic geometry can be applied to understand the enumerative invariants.  For that reason, we restrict to the case $l\ge3$.

We have $H_2(\widehat{\Xd},\mathbb{Z})=\mathbb{Z}\oplus\mathbb{Z}_2$, which we describe in terms of the conifold transition from   $\widehat{\Xdr}$ to $\widehat{\Xd}$ using the exact sequence (\ref{eqn:h2fromtransition}).  We also note that the exceptional curves $C^{(1)}$ of $\widehat{\Xdr}$ defined in Lemma~\ref{lem:resolution} have class $C_F$ in terms of the notation introduced at the beginning of Section~\ref{sec:smallrescompint}.  The exceptional curves $C^{(2)}$ of $\widehat{\Xdr}$ have class $2C_F$.

The exact sequence (\ref{eqn:h2fromtransition}) says that the classes represented by the curves $C_F$ and $C_B$ survive the conifold transition, while $2C_F$ vanishes after the transition.  With slight abuse of notation, we can and will represent classes in $H_2(\widehat{\Xd},\mathbb{Z})$ as $dC_B+pC_F$ with $d\in\mathbb{Z}_{\ge0}$ and $p\in\mathbb{Z}_2$.  We will describe the GV invariants for curves of class $dC_B+pC_F$ as $n_g^{d,p}$.

We will now discuss the predictions for some of the torsion refined Gopakumar-Vafa invariants, based on the proposed mathematical definition, generalizing earlier results for $\vec{d}=(1^8)$ from~\cite{Katz:2022lyl}.

First let us note that the discussion in~\cite{Katz:2022lyl} was based on the assumption that the preimage of a line $\ell\subset\mathbb{P}^3$ under $\pi:\widehat{\Xd}\to\mathbb{P}^3$ has class $(2,1)$ in $H_2(\widehat{\Xd},\mathbb{Z})\simeq \mathbb{Z}\oplus\mathbb{Z}_2$.
Using the conifold transitions from $\widehat{\Xdr}$ to $\widehat{\Xd}$ we can now derive and generalize this relation.

To this end we consider the mapping $\pi_r:\widehat{\Xdr}\to\mathbb{P}^3$ and using a standard intersection calculation on $\widehat{\Xdr}\simeq \Xab$ one finds that
\begin{align}
\pi_r^{-1}(\ell)=m C_F + 2 C_B\,,
\label{eqn:parity}
\end{align}
where $m$ is even (odd) if the decomposition $\vec{d}$ is even (odd).

After the conifold transition to $\widehat{\Xd}$, we then have $H_2(\widehat{\Xd},\mathbb{Z})\simeq\mathbb{Z}\oplus\mathbb{Z}_2$, with $\mathbb{Z}$ generated by $C_B$ and $\mathbb{Z}_2$ generated by $C_F$.
Thus $\pi^{-1}(\ell)$ has class $(2,0)\in \mathbb{Z}\oplus\mathbb{Z}_2$ for $\vec{d}$ even and class $(2,1)$ for $\vec{d}$ odd.

Letting $n_g^d$ denote the GV invariants of the smooth octic double solid, we also recall that
\begin{align}
    n^d_g=n^{d,0}_g+n^{d,1}_g\,,
    \label{eqn:gvconisum}
\end{align}
from the general results in~\cite{Li:1998hba}.

\paragraph{Odd decompositions}
If the entries of the decomposition $\vec{d}$ are odd, the discussion from~\cite{Katz:2022lyl} directly generalizes and one obtains the results
\begin{align}\small
    \begin{split}
    n_{g_{\text{max.}}(2b+1)}^{2b+1,0}=&n_{g_{\text{max.}}(2b+1)}^{2b+1,1}=\left\{\begin{array}{cl}
    14752&b=0\\
    (-1)^b14752(2b^2+4)&b\ge 1
    \end{array}\right.\,,\\[.3cm]
    n_{g_{\text{max.}}(2b)}^{2b,p}=&\left\{\begin{array}{cl}
        6&b=1\,,\quad p\equiv 1\\[.1cm]
        4(b^2-b+4)&b\ge 2\,,\quad p\equiv b\\[.1cm]
        0&\text{ else}
    \end{array}\right.\,,\\[.3cm]
    n_{g_{\text{max.}}(2b)-1}^{2b,p}=&\left\{\begin{array}{cl}
       6\cdot n_{\vec{d}}&b=1\,,\quad p\equiv 1\\[.1cm]
       6(144-n_{\vec{d}})&b=1\,,\quad p\equiv 0\\[.1cm]
       \begin{array}{c}-8 (b^4 - 108 b^2 + 115 b - 333)\\-6\cdot n_{\vec{d}} (b^2 - b + 3)\end{array}&b\ge 2\,,\quad p\equiv b\\[.3cm]
       6\cdot n_{\vec{d}}(b^2-b+3)&b\ge 2\,,\quad p\equiv b+1
    \end{array}\right.\,.
    \end{split}
\end{align}
with $g_{\text{max.}}(d)$ as defined in~\eqref{eqn:castelnuovoBound}.

Let us briefly recall the enumerative geometry behind some of these numbers, while referring to~\cite{Katz:2022lyl} for the rest.

A generic line $\ell$ in $\mathbb{P}^3$ intersects $\Sd$ in eight points and therefore the double cover $\pi^{-1}(\ell)\subset\widehat{\Xd}$ is a genus 3 curve of degree 2 in $\widehat{\Xd}$.
As discussed above, the preimage $\pi^{-1}(\ell)$ has odd $\mathbb{Z}_2$ parity.
The moduli space of rational curves in $\mathbb{P}^3$ is $\mathcal{M}=G(2,4)$ and one finds $n_3^{2,1}=(-1)^{\text{dim}(\mathcal{M})}\chi(\mathcal{M})=6$.

In order to get curves of degree $1$ in $\widehat{\Xd}$ one has to consider special configurations where the line $\ell$ has four tangencies with $\Sd$ and the double cover $\pi^{-1}(\ell)$ splits.
The number of such curves was calculated already in~\cite{schubert:1879} and is precisely $14752$.
Again using the fact that $\pi^{-1}(\ell)$ has odd $\mathbb{Z}_2$ parity one finds that the two sheets of the split double cover have to have opposite parity.
It follows that $n_0^{1,0}=n_0^{1,1}=14752$.

To obtain curves of degree $2$ and genus $g_{\text{max.}}(2)-1=2$ we can consider a line $\ell_p$ through any node $p$.
Away from the node, the line intersects $\Sd$ six times and one therefore obtains a genus two curve.
Removing the exceptional curve $C_p$, that resolves the node in $\widehat{\Xd}$, from $\pi^{-1}(\ell_p)$ gives an irreducible genus two curve with degree and parity $(d,p)=(2,0)$.
However, since we have to consider both small resolutions, we can also substract the flopped curve $-C_p$ and both $\ell_p+C_p$ and $\ell_p-C_p$ contribute to $n_2^{2,0}$.
The moduli space of lines through a given point in $\mathbb{P}^3$ is $\mathbb{P}^2$ and one finds $n_2^{2,0}=2\cdot \chi(\mathbb{P}^2)\cdot n_{\vec{d}}=6n_{\vec{d}}$.
The result for $n_2^{2,1}$ then follows from~\eqref{eqn:gvconisum} together with the fact that for the generic octic double solid one has $n_2^2=864$.

To obtain the GV invariants $n^{2b+1,0}_{g_{\text{max.}}(2b+1)}=n^{2b+1,1}_{g_{\text{max.}}(2b+1)}$ we proceed as follows.  The argument for $b=1$ is given in \cite{Katz:2022lyl}, so we assume $b\ge2$.  For each of the lines $\ell\subset\widehat{\Xd}$ of degree 1, we consider the moduli space of complete intersections of hypersurfaces of degrees $(1,b+1)$ which contain $\ell$.  A complete intersection of this type has degree $2(b+1)$ and is the union of $\ell$ and a curve $C$.  The curve $C$ has degree $2b+1$ in $\widehat{\Xd}$.  We compute the moduli space of such $C$ and the genus of these curves to determine the degree $2b+1$ GV invariants at the Castelnuovo bound.

First, the degree 1 hypersurface is necessarily $f^{-1}(H)$ for a hyperplane $H\subset \mathbb{P}^3$, recalling our notation $f:\widehat{\Xd}\rightarrow\mathbb{P}^3$.  Since $H$ necessarily contains the line $f(\ell)\subset\mathbb{P}^3$, the parameter space for $H$ is $\mathbb{P}^1$ and the moduli space fibers over $\mathbb{P}^1$.  We next describe the hypersurfaces $H'$ of degree $b+1$ in $f^{-1}(H)$ comprising the fiber by analyzing the space of global sections $s\in H^0(f^{-1}(H),f^*\mathcal{O}_H(b+1))$. This is a vector space of dimension $(b+1)^2-(b+1)+4=b^2+b+4$ \cite{Katz:2022lyl}.

We then require $s$ to vanish on $\ell$.  Since $s\vert_\ell$ is a global section of $\mathcal{O}_\ell(b+1)$ on $\ell\simeq\mathbb{P}^1$, its vanishing imposes $\dim H^0(\mathcal{O}_\ell(b+1))=b+2$ linear conditions on $H'$.  Thus the fiber is a projective space of dimension $(b^2+b+4)-(b+2)-1=b^2+1$.  We find the moduli space has dimension whose parity is the same as the parity of $b$, and has Euler characteristic $2(d^2+2)$.  We conclude that 
\begin{align}
    n^{2b+1,0}_{g_{\text{max.}}(2b+1)}=n^{2b+1,1}_{g_{\text{max.}}(2b+1)}=(-1)^b14752(2b^2+4)\,.
\end{align}

We next compute $g_{\text{max.}}(2b+1)$ from the geometry of these curves.  The curves $C$ and $\ell$ are said to be linked by the complete intersection, and the genus of $C$ can be computed directly from the genus of $\ell$ (0) by the general theory of linkage \cite{PS}.  However, it is easy to compute the genus by more elementary means by taking the degree $b+1$ hypersurface in a special form.

Let's choose $H'$ to be the union of a degree $b$ hypersurface $H''$ and the line $f(\ell)$.  $H''$ is just a curve of degree $b$ in the plane $H$.  The double cover $f^{-1}(H')\to H'$ is branched over $8b$ points.  From Riemann-Hurwitz, this enables us to compute the genus of $f^{-1}(H')$ as $b^2+b+1$.  Then $f^{-1}(H'\cup f(\ell))=f^{-1}(H')\cup\ell\cup \ell'$, where $f^{-1}(f(\ell))=\ell\cup\ell'$.  Thus our curve $C$ is $f^{-1}(H)\cup\ell'$.  Noting that $f^{-1}(H)$ and $\ell'$ meet at $b$ points (lying over the $b$ points $H\cap f(\ell)$) we conclude that $g_{\text{max.}}(2b+1)=b^2+2b$.

\paragraph{Even decompositions}
Let us now turn towards decompositions with entries that are even.

Recall that the preimage of a line in $\mathbb{P}^3$ then has class $(d,p)=(2,0)$ in $H_2(\widehat{\Xd},\mathbb{Z})\simeq \mathbb{Z}\oplus\mathbb{Z}_2$.
As a consequence, the invariants $n_{g_{\text{max.}}}^{2b+1,p}$ can depend on $p$ and it is not obvious how to calculate the parity of the individual sheets of a split double cover in $\widehat{\Xd}$ of a curve in $\mathbb{P}^3$.

On the other hand, the remaining predictions generalize straightforwardly and, taking into account the change of parity for $\pi^{-1}(\ell)$, one finds
\begin{align}
    \begin{split}
    n_{g_{\text{max.}}(2b)}^{2b,p}=&\left\{\begin{array}{cl}
        6&b=1\,,\quad p\equiv 0\\[.1cm]
        4(b^2-b+4)&b\ge 2\,,\quad p\equiv 0\\[.1cm]
        0&\text{ else}
    \end{array}\right.\,,\\[.3cm]
    n_{g_{\text{max.}}(2b)-1}^{2b,p}=&\left\{\begin{array}{cl}
       6\cdot n_{\vec{d}}&b=1\,,\quad p\equiv 0\\[.1cm]
       6(144-n_{\vec{d}})&b=1\,,\quad p\equiv 1\\[.1cm]
       \begin{array}{c}-8 (b^4 - 108 b^2 + 115 b - 333)\\-6\cdot n_{\vec{d}} (b^2 - b + 3)\end{array}&b\ge 2\,,\quad p\equiv 0\\[.3cm]
       6\cdot n_{\vec{d}}(b^2-b+3)&b\ge 2\,,\quad p\equiv 1
    \end{array}\right.\,.
    \end{split}
\end{align}

\section{Outlook}
\label{sec:outlook}

\begin{itemize}
\setlength\itemsep{.4em}
    \item We have given further evidence for the framework developed in \cite{Katz:2022lyl} for mathematically defining and computing torsion refined GV invariants.  To be made more rigorous beyond the cases which we compute, foundational results are still needed about the existence of appropriate perverse sheaves of vanishing cycles on the moduli spaces of analytic sheaves on the non-K\"ahler manifolds which we consider here.
    \item We have suggested the existence of a theory of non-commutative invariants analogous to the theory of Gopakumar-Vafa invariants.  Such a theory would be based on moduli spaces or stacks of $M_\beta$ of semistable $\mathcal{B}$-modules and a map from $M_\beta$ to the Chow variety $\mathrm{Chow}_1(X)$ of 1-dimensional cycles on $X$, analogous to (\ref{eqn:hilbertchow}).  This approach circumvents complications arising from the use of non-K\"ahler manifolds.  The resulting enumerative theory is expected to be closely related to the theory of torsion refined invariants, and possibly coincides with it.
    \item As we discussed in the introduction, physics suggests that a non-commutative resolution always exists for a projective Calabi-Yau threefold $X$ with isolated nodes that are resolved by torsional exceptional curves.
    In general, $X$ is not a double cover and the resolution will not be of Clifford type.
    
    An important goal will therefore be to find a general method to construct a corresponding sheaf $\widehat{\mathcal{B}}$ on $\widehat{X}$.
    In particular, we would like to find a procedure for choosing a representative of the Brauer class on $\widehat{X}$ such that the (derived) push-forward of $\widehat{\mathcal{B}}$ by $\rho:\widehat{X}\rightarrow X$ is $\mathcal{B}$.

    \item Topological strings on many examples for non-commutative resolutions that are not of Clifford type have been studied in~\cite{Schimannek:2021pau} and arise from elements of the Tate-Shafarevich group of genus one fibered Calabi-Yau threefolds that do not have a section.
    In all of those cases the analysis was possible due to the existence of a smooth dual.
    However, from the perspective of the GLSM it was found that the non-commutative resolutions are not associated to a phase limit but rather to a phase boundary where the GLSM becomes singular.
    
    This does not imply that a GLSM corresponding to the non-commutative resolutions does not exist but only that it is different from the one associated to the smooth dual.
    Constructing GLSM for non-commutative resolutions that are not of Clifford type is therefore an important open problem.
    A solution would open up a large class of new GLSM associated to examples of general type without a smooth dual.

    A closely related open problem is to find a combinatorial construction of the mirrors of non-commutative resolutions that are not of Clifford type, and thus generalizing the results from~\cite{borisov2016clifford}.
\end{itemize}

\section{Acknowledgements}
We would like to thank L.~Borisov, D.~Jaramillo~Duque, A.-K.~Kashani-Poor, A.~Klemm, J.~Knapp, I.~Melnikov, B.~Pioline, R.~Plesser, E.~Scheidegger and E.~Sharpe for useful conversations.
In particular we want to thank A.~Klemm and E.~Sharpe for collaboration on previous related work and L.~Borisov for providing us with the Gorenstein cones discussed in Section~\ref{sec:gorensteincones}.
The research of S.K. was supported by NSF grants DMS-1802242 and DMS-2201203.  The research of T.S.~is supported by the Agence Nationale de la Recherche (ANR) under contract number ANR-21-CE31-0021.
The authors also thank the Simons Center for Geometry and Physics for hospitality during part of the completion of this work.

\addcontentsline{toc}{section}{References}
\bibliography{names}
\end{document}